\newcommand{\MyComment}[1]{\Comment{\textcolor{blue}{#1}}}
\newcommand{\defeq}{\stackrel{\mathsmaller{\mathsf{def}}}{=}}
\newtheorem{theorem}{Theorem}
\theoremstyle{plain}
\newtheorem{definition}{Definition}
\newtheorem{lemma}{Lemma}
\newtheorem{corollary}{Corollary}
\theoremstyle{remark}
\newtheorem{remark}{Remark}
\newtheorem{obs}{Observation}
\title{Network Size Estimation in Small-World Networks under Byzantine Faults}
\author
{Soumyottam Chatterjee \thanks{Department of Computer Science, University of Houston, Houston, TX 77204, USA. Email: {\tt schatterjee4@uh.edu}.}  \and Gopal Pandurangan \thanks{Department of Computer Science, University of Houston, Houston, TX 77204, USA. Email: {\tt gopalpandurangan@gmail.com}. Research supported, in part, by NSF grant CCF-1527867.}  \and Peter Robinson \thanks{Department of Computing \& Software, McMaster University, Hamilton, Ontario L8S 4L7, Canada. Email: {\tt peter.robinson@mcmaster.ca}.}
}
\begin{document}

\maketitle
\thispagestyle{empty}
\begin{abstract}
We study the fundamental problem of  counting the number of nodes  in a sparse network (of unknown size) under the presence of a large number of Byzantine nodes. We assume the full information model where the Byzantine nodes have complete knowledge about the entire state of the network at every round (including  random choices made by all the nodes), have unbounded computational power, and can deviate arbitrarily from the protocol. Essentially all known algorithms for fundamental Byzantine problems (e.g., agreement, leader election, sampling) studied in the literature assume the knowledge (or at least an estimate) of the size of the network. In particular, all known algorithms for fundamental Byzantine problems that can tolerate a large number of Byzantine nodes in bounded-degree networks assume a {\em sparse expander} network with nodes having knowledge of the network size. It is non-trivial to design algorithms for Byzantine problems that work without knowledge of the network size, especially in bounded-degree (expander) networks where the local views of all nodes are (essentially) the same and limited, and Byzantine nodes can quite easily fake the presence/absence of non-existing nodes. To design truly local algorithms that do not rely on any global knowledge (including network size), estimating the size of the network under Byzantine nodes is an important first step.

Our main contribution is a randomized distributed algorithm that estimates the size of a network under the presence of a large number of Byzantine nodes. In particular, our algorithm estimates the size of a sparse, ``small-world'', expander network with up to $O(n^{1 - \delta})$  Byzantine nodes, where $n$ is the (unknown) network size and $\delta$ can be be any arbitrarily small (but fixed) positive constant. Our algorithm outputs a (fixed) constant factor estimate of $\log(n)$  with high probability; the correct estimate of the network  size will be known to a large fraction ($(1 - \epsilon)$-fraction, for any fixed positive constant $\epsilon$) of the honest nodes. Our algorithm is fully distributed, lightweight, and simple to implement, runs in $O(\log^3{n})$ rounds, and requires nodes to send and receive messages of only small-sized messages per round; any node's local computation cost per round is also small.
\end{abstract}

\newpage

\setcounter{page}{1}


\section{Introduction}\label{sec:intro}

Motivated by the need for robust and secure distributed computation in large-scale (sparse) networks such as peer-to-peer (P2P) and overlay networks, we study the fundamental {\em Byzantine counting} 
problem in  networks, where the goal is to count (or estimate) the number of nodes in a  network that can contain a large number of Byzantine nodes that can exhibit malicious behaviour.

The Byzantine counting problem is challenging  because the goal is to guarantee that most of the honest (i.e., non-Byzantine) nodes obtain a good estimate of the network size 
\footnote{In sparse, bounded-degree networks, an adversary can always isolate some number of honest nodes; hence ``almost-everywhere"  knowledge is the best one can hope for in such networks 
(cf.\ \cite{Dwork_1988}).} 
despite  the presence of a large number of Byzantine nodes  (which have full information about all the nodes and can behave arbitrarily or maliciously) in the network. Byzantine counting is related to, 
yet different, compared with other fundamental problems in distributed computing, namely, {\em Byzantine agreement} and {\em Byzantine leader election}. Similar to the latter two problems, it involves 
solving a global problem under the presence of Byzantine nodes. However, it is a different problem, since protocols for Byzantine agreement or leader election do not necessarily yield a protocol for 
Byzantine counting. In a sense, the Byzantine counting problem can be considered to be more fundamental than Byzantine agreement and leader election, since many existing algorithms for these two problems 
(discussed below and in Section \ref{sec:related}) assume knowledge of the number of nodes in the network $n$; some algorithms require at least a reasonably good estimate of $n$, typically a constant 
factor estimate of $\log n$. Indeed, one of the main motivations for this paper is to design distributed protocols that can work with little or no global knowledge,  including the network size. In this 
sense, an efficient protocol for the Byzantine counting problem can serve as a pre-processing step for protocols for Byzantine agreement, leader election and other problems that either require or assume 
knowledge of an estimate of $n$ \cite{Augustine_2016}.

Byzantine agreement and leader election have been at the forefront of distributed computing research for several decades. The work of Dwork et.\ al.\ \cite{Dwork_1988} and  Upfal \cite{Upfal_1994} 
studied the Byzantine agreement problem in bounded-degree expander networks under the condition of {\em almost-everywhere} agreement, where {\em almost} all (honest) processors need to reach agreement 
as opposed to \emph{all} nodes agreeing as required in the standard Byzantine agreement problem. Dwork et.\ al.\ \cite{Dwork_1988} showed how one can achieve almost-everywhere agreement under up to 
$\Theta(\frac{n}{\log{n}})$  of Byzantine nodes in a bounded-degree {\em expander} network ($n$ is the network size).  Subsequently, Upfal~\cite{Upfal_1994} gave an improved protocol that can tolerate 
up to a linear number of faults in a bounded degree {\em expander} of {\em sufficiently large spectral gap} (in fact, on {\em Ramanujan graphs}, which have the asymptotically largest spectral graph 
possible \cite{Hoory_2006} --- our protocol in this paper also works on a similar type of expander). These algorithms required $O(\log n)$ rounds and polynomial (in $n$) number of messages; however, the 
local computation required by each processor is exponential. Both of the above algorithms  {\em require knowledge of the global topology (including the knowledge of $n$)}, since at the start, nodes need 
to have this information hardcoded. The work of King et.\ al.\ \cite{King_2006_secure_and_scalable_computation} was the first to study scalable (polylogarithmic communication and number of rounds, and 
polylogarithmic computation per processor) algorithms for Byzantine leader election and agreement. Similar to Dwork et al.'s and Upfal's algorithm, the nodes require hardcoded information on the 
network topology --- which is also an {\em expander} graph --- to begin with, including the network size. We note that expansion property is crucially exploited in all the above works to achieve 
Byzantine agreement and leader election. Furthermore, the expander networks assumed in Dwork et al and Upfal works are bounded-degree (essentially, regular) graphs, where without prior knowledge it is 
difficult for nodes to have a knowledge of the network size. 

The works of \cite{Augustine_2012}, \cite{Augustine_2013_agreement}, and \cite{Augustine_2015} studied stable agreement, Byzantine agreement, and Byzantine leader election (respectively) in dynamic 
networks (see also \cite{Augustine_2016}), where in addition to Byzantine nodes there is also adversarial churn. All these works assume that there is an underlying bounded-degree regular expander graph 
and {\em all nodes are assumed to have knowledge of $n$}. It was not clear how to estimate $n$ without additional information under presence of Byzantine nodes in such (essentially, regular and 
constant degree expander) networks. In fact, the works of \cite{Augustine_2016, Augustine_2015} raised the question of designing protocols in expander networks that work when the network size is not 
known and may even change over time, with the goal of obtaining a  protocol that works when nodes have strictly local knowledge. This requires devising a distributed protocol that can measure global 
network parameters such as size, diameter, average degree, etc.\ under Byzantine nodes in sparse networks, especially in sparse expander networks.

\subsection{Our Contributions}

We introduce and study the problem of Byzantine Counting. Our goal is to design a distributed algorithm that guarantees, despite a large number of Byzantine nodes, that almost all honest nodes know a good estimate of the network size in a bounded degree, ``small world'' network. We are not aware of any prior work that studies {\em Byzantine} counting in the setting addressed here.

Before stating our result, we briefly describe the key ingredients of our network model (we refer to Section \ref{sec:model} for the full details). We assume a sparse network that has constant 
bounded degree (essentially regular) and has {\em high  expansion} as well as large {\em clustering coefficient}. In other words, it is a ``small-world'' network. Expander graphs have 
been used extensively as candidates to solve the Byzantine agreement and related problems in bounded degree graphs (e.g., as discussed earlier, see \cite{Dwork_1988, Kapron_2010, King_2011, 
King_2006_scalable_leader_election, Upfal_1994}); the expander property proves crucial in tolerating a large number of Byzantine nodes. The high expansion of such graphs have been exploited in previous works as well, most notably by Upfal \cite{Upfal_1994} to solve the Byzantine agreement (with knowledge of $n$). For the Byzantine counting problem, which seems harder, however, expansion by itself does not seem to be sufficient; our protocol also exploits the high clustering coefficient of the network crucially (cf. Section \ref{sec:technical}).

We assume that up to $O(n^{1 - \delta})$ nodes can be \emph{Byzantine}, where $\delta > 0$ can be an arbitrarily small (but fixed) positive constant. We assume a strong adversarial {\em full information model} where the Byzantine nodes (who have unbounded computational power) are {\em adaptive}, in the sense that they know the entire states of all nodes at the beginning of every round (including the messages sent by them), including the random choices made by the nodes up to and including the current round as well as future rounds (in other words, they are {\em omniscient}). However, we note that the Byzantine nodes can communicate only using the edges of the network, i.e., they can send messages directly only to their neighbors.

In our network model, where nodes have constant bounded degree, most nodes, with high probability, see (essentially) the same local topological structure even for a reasonably large neighborhood radius ---  cf.\ Section \ref{section-locally-tree-like-property}, and hence nodes do not have any a priori local information that can help them estimate the network size. In this setting, Byzantine nodes can easily fake the presence/absence of nodes --- thus trying to foil the estimate of the honest nodes.

Our main contribution is a distributed algorithm (cf. Section \ref{section-algorithm}) that estimates the size of the network, even  under the presence of a large number of Byzantine nodes. In 
particular, our algorithm estimates the size of a sparse (constant  degree) ``small-world'' network with up to $O(n^{1 - \delta})$  (for any small positive constant $\delta$) Byzantine nodes,  where $n$ is the (unknown) network size. Our algorithm outputs a (fixed) constant factor estimate of $\log{n}$  with high probability; 
\footnote{``With high probability (whp)" refers to a probability  $\ge 1 - n^{-c}$, for some constant $c > 0$.} 
the correct estimate of the network size will be known to  $(1 - \epsilon)$-fraction  (where $\epsilon > 0$ is a small constant, say 0.1) of the honest nodes.  
\footnote{We call $\epsilon$ \emph{the error parameter} --- by changing its value (please refer to Line \ref{value-of-alpha-i} in the pseudocode in Algorithm \ref{alg}), we (the algorithm designer) can control exactly how large a fraction of the honest nodes would estimate $\log{n}$ correctly (i.e., get a constant-factor approximation of $\log{n}$). Theorem \ref{theorem-main-result-of-the-paper}, which is the main result of this paper, tells us that at most $\epsilon$-fraction of the honest nodes would \emph{fail} to get a constant factor approximation of $\log{n}$.}

Our algorithm is the first known, decentralized Byzantine counting algorithm that can tolerate a large amount of Byzantine nodes. It is fully-distributed, localized (does not require any global 
topological knowledge), lightweight, runs in $O(\log^3 n)$ rounds, and requires nodes to send and receive ``small-sized messages'' only. 
\footnote{A ``small-sized message'' is one  that contains a constant number of IDs and $O(\log{n})$ additional bits.}
Any node's computation cost per round is also logarithmic.

The given algorithm is a basic ingredient that can be used for the design of efficient distributed algorithms resilient against Byzantine failures, where the knowledge of the network size (a global parameter) may  not be known a priori. It can serve as a building block for implementing other non-trivial distributed computation tasks in Byzantine networks such as agreement and leader election where the network size (or its estimate) is not known a priori.
\subsection{Technical Challenges}\label{sec:technical}

The main technical challenge that we have to overcome is designing and analyzing distributed algorithms under the presence of Byzantine nodes in networks where (honest) nodes only have local knowledge, 
i.e., knowledge of their immediate neighborhood.  It is possible to solve the counting problem exactly in networks without Byzantine nodes by simply building a spanning tree and converge-casting the 
nodes' counts to the root, which in turn can compute the total number of nodes in the network. A more robust and alternate way that works also in the case of {\em anonymous} networks is the technique 
of {\em support estimation} \cite{Augustine_2012, Augustine_2016} which uses {\em exponential} distribution (or alternately one can use a {\em geometric} distribution, see e.g., \cite{Kutten_2015}) to 
estimate accurately the network size as described below.

Consider the following simple protocol for estimating the network size that uses the geometric distribution. Each node $u$ flips an unbiased coin until the outcome is \texttt{heads}; let $X_u$ denote 
the random variable that denotes the number of times that $u$ needs to flip its coin. Then, nodes exchange their respective values of $X_u$ whereas each node only forwards the highest value of $X_u$ 
(once) that it has seen so far. We observe that $X_u$ is geometrically distributed and denote its global maximum by $\bar{X}$. For any $u$,  $\Pr(X_u \geq 2\log n)  =  (\frac{1}{2})^{2\log{n}}$, and by 
taking a union bound, $\Pr(\bar{X} \geq 2\log{n})  \leq  \frac{1}{n}$. Furthermore, $\Pr(\bar{X} < \frac{\log{n}}{2})  =  (1 - (\frac{1}{2})^{\frac{\log{n}}{2}})^n  \leq  e^{-\sqrt{n}}$. It follows that 
each node forwards at most $O(\log{n})$ distinct values (w.h.p.). After $O(D)$ rounds (where $D$ is the network diameter), each node knows the value of $\bar{X}$,  and sets that as its estimate of 
$\log{n}$. Due to the above bounds on $\bar{X}$ it follows that (w.h.p.), it is a constant factor estimate of $\log{n}$. The support estimation algorithm \cite{Augustine_2012, Augustine_2016} which uses 
the exponential distribution works in a  similar manner.

The geometric distribution protocol fails when even just one Byzantine node is present. Byzantine nodes can fake the maximum value or can stop the correct maximum value from spreading and hence can 
violate any desired approximation guarantee. Hence a new protocol is needed when dealing with Byzantine nodes.

Prior localized techniques that have been used successfully for solving other problems such as Byzantine agreement and leader election  such as random walks  and majority agreement  (e.g., 
\cite{Augustine_2013_agreement, Augustine_2015}) do not imply efficient (i.e., fast algorithms that uses small message sizes) algorithms for Byzantine counting. For instance, random walk-based techniques 
crucially exploit a uniform sampling of tokens (generated by nodes) after $\Theta(\text{mixing time})$ number of steps. However, the main difficulty in this approach is that the mixing time is unknown 
(since the network size is unknown) --- and hence it is unclear a priori how many random walk steps the tokens should take. Similar approaches based on the return time of random walks fail due to long 
random walks having a high chance of encountering a Byzantine node. One can also use Birthday paradox ideas to try to estimate $n$ (e.g., these have been tried in an non-Byzantine setting 
\cite{Ganesh_2007}); these also fail in the Byzantine case.

We note that one can possibly solve Byzantine counting if one can solve Byzantine leader election;
\footnote{Informally, the idea is as follows. If one can elect a honest leader, then it can initiate flooding by sending a message to the entire network; any other  node can set an estimate of $\log{n}$ 
as the round number when it sees  the message for the first time. It can be shown that in a sparse expander, $n - o(n)$ nodes will have a constant factor estimate of $\log n$.}
however, all known algorithms for Byzantine leader election (or  agreement) {\em assume a priori knowledge (or at least a good estimate) of the network size}.  Hence we require a new protocol that 
solves Byzantine counting from ``scratch.''  In our random network model, where most nodes, with high probability, see  (essentially) the same local topological structure (and constant degree) even for 
a reasonably large neighborhood radius (cf.\ Section \ref{section-locally-tree-like-property}), it is difficult for nodes to break symmetry or gain a priori knowledge of $n$.
\footnote{We point out that with constant probability, in our network model, due to the property of the $d$-regular random graph, an expected constant number of nodes might have multi-edges --- this can 
potentially be used to break ties; however, this fails to work with constant probability. In any case, such symmetry breaking will fail in symmetric regular graphs.}

Another approach is to try to estimate the diameter of the network, which, being $\Theta(\log{n})$ for sparse expanders, can be used to deduce an approximation of the network size. Assuming that there 
exists a leader in the network, one way to do this is for the leader to initiate the flooding of a message and it can be shown that a  large fraction of nodes (say a $(1 - \epsilon)$-fraction, for some 
small $\epsilon > 0$) can estimate the diameter by recording the time when they see the first token, since we assume a synchronous network. However, this method fails since it is not clear, how to break 
symmetry initially by choosing a leader --- this by itself appears to be a  hard problem in the Byzantine setting without knowledge of $n$.

We now give a high-level intuition behind our protocol. The main idea is based on using the geometric distribution, but there are several technical obstacles that we need to tackle (cf. Section 
\ref{section-algorithm}).

The algorithm operates in phases. In phase $i$, each honest node estimates the number of nodes at distance $i$ (in particular, whether there are any nodes at all) by observing the maximum (or 
near-maximum) value, generated according to the geometric distribution, at distance $i$; this value can be propagated by flooding for exactly $i$ steps. We only allow certain values to propagate in 
phase $i$; this {\em avoids congestion} and hence our algorithm works using only small message sizes. As $i$ increases, i.e., when it becomes $a\log n$, for some small constant $0 < a < 1$, this provides 
a constant factor estimate of $\log{n}$. Up to a distance of $i = a\log{n}$, most nodes (i.e., $n - o(n)$ nodes) do not see any values from Byzantine nodes, since most nodes are a distance at least 
$a\log n$ from any Byzantine node --- this is due to the property of the {\em expander} graph. However, as $i$ increases, the Byzantine nodes can introduce fake values and hence can fool most of the 
nodes into believing that the network is much larger than it actually is. To overcome this, the protocol  exploits the {\em small-world} property of the network, i.e., nodes have high clustering 
coefficients --- which implies that a node's neighbors are well-connected among themselves. Each (honest) node checks with its neighbors to see if the value sent by the Byzantine node is consistent among 
the neighbor set; if not, this (high) value is discarded.

There are some complications in implementing this idea, since Byzantine nodes can lie about the identity of neighboring nodes; our protocol exploits the fact that the network is a union of expander and 
small-world network to overcome this. We refer to Section \ref{section-algorithm-with-byz} for more details.
\subsection{Other Related Works}\label{sec:related}

There have been several works on estimating the size of the network, see e.g., the works of \cite{Ganesh_2007,Horowitz_2003, Luna_2014, Terelius_2012, Shafaat_2008}, but all these works do not work under 
the presence of Byzantine adversaries. There have been some work on using network coding for designing byzantine protocols (see e.g., \cite{Jaggi_2008}); but these protocols have polynomial message sizes 
and are highly inefficient for problems such as counting, where the output size is small. There are also some works on topology discovery problems under Byzantine setting (e.g., \cite{Nesterenko_2006}), 
but these do not solve the counting problem.

Several recent works deal with Byzantine agreement, Byzantine leader election, and fault-tolerant protocols in dynamic networks.  We refer to \cite{Guerraoui_2013, Augustine_2012, 
Augustine_2013_agreement, Augustine_2013_storage_and_search, Augustine_2015} and the references therein for details on these works. These works crucially assume the knowledge of the network size (or at 
least an estimate of it) and don't work if the network size is not known.

There have been significant work in designing peer-to-peer networks that are provably robust to a large number of Byzantine faults~\cite{Fiat_2002, Hildrum_2003, Naor_2003, Scheideler_2005}. These focus 
only on (robustly) enabling storing and retrieving data items. The problem of achieving almost-everywhere agreement among nodes in P2P networks (modeled as expander graphs) is considered by King et al.\ 
in \cite{King_2006_secure_and_scalable_computation} in the context of the leader election problem; essentially, \cite{King_2006_secure_and_scalable_computation} is a sparse (expander) network 
implementation of the full information protocol of \cite{King_2006_scalable_leader_election}. In another recent work \cite{King_2014}, the authors use a spectral technique to ``blacklist'' malicious 
nodes leading to faster and more efficient Byzantine agreement. The work of \cite{Guerraoui_2013} presents a solution for maintaining a clustering of the network, where each cluster contains more than 
two-thirds honest nodes with high probability in a setting where the size of the network can vary polynomially over time. All the above works assume an exact knowledge of or some good estimate of the 
network size and do not solve the Byzantine counting problem.

The work of \cite{Bortnikov_2009} shows how to implement uniform sampling in a peer-to-peer system under the presence of Byzantine nodes where each node maintains a local ``view'' of the active nodes. We 
point out that the choice of the view size and the sample list size of $\Theta(n^{\frac{1}{3}})$ necessary for withstanding adversarial attacks requires the nodes to have a priori knowledge of a 
polynomial estimate of the network size. \cite{Horowitz_2003} considers a dynamically changing network \emph{without} Byzantine nodes where nodes can join and leave over time and provides a local 
distributed protocol that achieves a polynomial estimate of the network size. In \cite{Bovenkamp_2012}, the authors present a gossip-based algorithm for computing aggregate values in large dynamic 
networks (but without the presence of Byzantine failures), which can be used to obtain an estimate of the network size. The work of \cite{Chlebus_2009} focuses on the consensus problem under crash 
failures and assumes knowledge of $\log{n}$, where $n$ is the network size.


\section{Preliminaries}

\subsection{Computing Model and Problem Definition} \label{sec:model}

\textbf{The distributed computing model:} We consider a synchronous network represented by a graph $G$ whose nodes execute a distributed algorithm and whose edges represent connectivity in the network. The computation proceeds in synchronous rounds, i.e., we assume that nodes run at the same processing speed (and have access to a synchronized clock) and any message that is sent by some node $u$ to its neighbors in some round $r\ge 1$ will be received by the end of round $r$.\\

\textbf{Byzantine nodes:} Among the $n$ nodes ($n$ or its estimate is not known to the nodes initially), up to $B(n)$ can be \emph{Byzantine} and deviate arbitrarily from the given protocol. Throughout this paper, we assume that $B(n) = O(n^{1 - \delta})$ (where $n$ is the unknown network size), for any $\frac{3}{d} < \delta \leq 1$. We say that a node $u$ is \emph{honest} if $u$ is not a Byzantine node and use $\texttt{Honest}$ to denote the set of honest nodes in the network. Byzantine nodes are ``adaptive'', in the sense that they have complete knowledge of the entire states of all nodes at the beginning of every round (including random choices made by all the nodes),  and thus can take the current state of the computation into account when determining their next action (they also can know the future random choices of honest nodes). The Byzantine nodes  have unbounded computational power, and can deviate arbitrarily from the protocol. This setting is commonly referred to as the \emph{full information model}. We assume that the Byzantine nodes are {\em randomly} distributed in the network.\\

\textbf{Distinct IDs:} We assume that nodes (including Byzantine) have {\em distinct} IDs and they cannot lie about their ID while communicating with a neighbor. Note that the $n$ distinct IDs (where $n$ is the unknown network size) are assumed to be chosen from a {\em large space} (not known a priori to the nodes). Note that this precludes (most) nodes from estimating $\log n$ by potentially looking at the length of their IDs.\\

\textbf{Network Topology:} Let $G = (V, E)$ be the graph representing the network. We take $G$ to be the union of two other graphs $H$ and $L$ (both defined below). That is, $V(G) = V(H) = V(L) = V$, say, and $E(G)  =  E(H) \cup E(L)$. We take $H$ to be a sparse, random $d$-regular graph that is constructed by the union of $\frac{d}{2}$ (assume $d \geq 8$ is an even constant) random Hamiltonian cycles of $n$ nodes. We call this random graph model the \emph{$H(n, d)$ random graph model}. 
\footnote{We give more details on the model and analyze its properties in the appendix (Section \ref{section-locally-tree-like-property}).}
It is known that such a random graph is an expander with high probability. The $H(n, d)$ random, regular graph model is a well-studied and popular random graph model (see e.g., \cite{Wormald_1999}). In particular, the $H(n, d)$ random graph model has been used as a model for Peer-to-Peer networks and self-healing networks \cite{Law_2003, Pandurangan_2014}.

$E(L)$ is defined as follows. For $u, v \in V$, $(u, v) \in E(L)$ if and only if $dist(u, v) \leq k$ in $H$, where $k  =  \lceil\frac{d}{3}\rceil$ is a positive integer. In other words, each node has direct connections (via edges of $L$) to nodes that are within distance $k$. Note that adding the edges of $L$ makes $H$ a {\em ``small-world''} network, i.e., for each node $v$ in $G$, the neighbors of $v$ within distance $\frac{k}{2}$ in $H$ are connected to each other (thus the  clustering coefficient is increased in $G$ compared to $H$). The small-world property complements the expander property of the $d$-regular random graph, since the clustering coefficient of the random regular graph is  small. We exploit both properties crucially in the protocol. Larger the degree $d$, larger will be $k$, and large will be the robustness to Byzantine nodes, i.e., up to $O(n^{1 - \delta})$ Byzantine nodes can be tolerated where $\frac{3}{d} < \delta \leq 1$ (as defined earlier). Our small-world network is inspired by and  related to the Watts-Strogatz model \cite{Watts_1998, Barthelemy_1999}. However, it is important to note that the Watts-Strogatz model allows some nodes to have $\Theta(\log n)$ degree and hence not constant bounded degree, unlike our model. 

It is important to note that nodes in $G$ do not know a priori which edges are in $H$ and which are in $L$. However, as shown in Lemma \ref{obs-distinguish-between-H-and-L}, most (honest) nodes can distinguish between the two types of edges using a simple protocol.

We  point out although we assume a specific type of network model described above --- which, intuitively, is the worst case (most difficult) scenario for the algorithm designer due to its (essentially) identical local topological structure --- our results can be extended to apply to potentially any (sparse) graph that has high expansion and high clustering coefficient (e.g., one can presumably take any bounded-degree expander rather than a $d$-regular graph as $H$).\\

\textbf{Problem and Goal:} Our goal is to design a distributed protocol to estimate the number of nodes in $G$, even under the presence of a large number of Byzantine nodes. The problem is non-trivial, since each node has a local view and knowledge that is independent of the network size. We would like our protocol to run fast, i.e., in polylogarithmic (in the unknown network size $n$) rounds, and use only ``small-sized'' messages. A ``small-sized message'' is one  that contains a constant number of IDs and $O(\log{n})$ additional bits.

We now present the formal definition of the Byzantine counting problem. Since we assume a \emph{sparse} (constant bounded degree) network and a large number of Byzantine nodes, it is difficult for an algorithm where every honest node eventually knows the exact estimate of $n$. This motivates us to consider the following ``approximate, almost everywhere'' variant of counting:

\begin{definition}[Byzantine Counting] Suppose that there are $B(n)$ Byzantine nodes in the network. We say that an \emph{algorithm $A$ solves Byzantine Counting in $T$ rounds} if, in any run of $A$:
	\begin{enumerate}
		\item all honest nodes terminate in $T$ rounds,
		
		\item all except $B(n) + \epsilon n$ honest nodes (for any arbitrarily small constant $\epsilon > 0$) have a {\em constant factor} estimate of $\log n$ (i.e., if $\mathcal{L}$ is the 
		estimate, then $c_1 \log n \leq \mathcal{L} \leq c_2 \log n$, for some fixed positive constants $c_1$ and $c_2$), where $n$ is the actual network size.
	\end{enumerate}
\end{definition}
\subsection{Notations and a Few Necessary Definitions} \label{section-notation-and-preliminaries}

\begin{definition}
	For any two nodes $u$ and $v$ in $V$, the distance between them (in $G$) is defined as $dist_G(u, v) \defeq$ the length of a shortest path between $u$ and $v$ in $G$. Similarly, $dist_H(u, v)  
	\defeq$  the length of a shortest path between $u$ and $v$ in $H$.
\end{definition}
\begin{remark}
	For any node $v \in V(G)$, we follow the convention that $dist_G(v, v)  =  dist_H(v, v)  =  0$.
\end{remark}

\begin{definition}
	For any node $u$ and any set $V' \subset V(G)  =  V$, the distance between $u$ and $V'$ (in $G$) is defined as $dist_G(u, V') \defeq \text{min}\left\{dist_G(u, v)\ |\ v\in V' \right\}$.
\end{definition}
\begin{definition}
	For any two subsets $V'$ and $V''$ of $V(G)  =  V$, the distance between $V'$ and $V''$ (in $G$) is defined as $dist_G(V', V'') \defeq  \text{min}\left\{dist_G(u, v)\ |\ u\in V', v\in V''\right\}$.
\end{definition}

\begin{remark}
	In all our notations, the subscript $G$ or $H$ denotes the underlying graph. We will, however, for the most part, talk about $H$. Thus to obtain notational simplicity, we will omit the subscript 
	$H$ from now on. If at any point, we need to talk about $G$ instead, we will explicitly mention the subscript $G$. For example, $dist(u, v)$ will denote the length of a shortest path between $u$ 
	and $v$ in $H$, whereas $dist_G(u, v)$ will be used to denote the length of a shortest path between $u$ and $v$ in $G$. And so on.
\end{remark}


\begin{definition}\label{defn-metric-ball-preliminaries}
For any $v \in V(H)$ and any positive integer $r$, $B(v,r)$ is defined as the set of nodes within the ball of radius $r$ from $v$ (including at the boundary), i.e,
	\begin{center}
		$B(v, r)  \defeq  \left\{w \in V(H)\ |\ 0 \leq dist(v, w) \leq r\right\}$.
	\end{center}
\end{definition}

\begin{definition}\label{defn-metric-boundary-preliminaries}
	For any $v \in V(H)$ and any positive integer $r$, $Bd(v,r)$ is defined as the set of nodes at distance $r$ from $v$ (i.e., at the boundary),  i.e,
	\begin{center}
		$Bd(v, r)  \defeq  \left\{w \in V(H)\ |\ dist(v, w) = r\right\}$.
	\end{center}
\end{definition}

Next we introduce  the ``locally tree-like'' property of an $H(n,d)$ random graph: i.e., for most nodes $w$,  the subgraph  induced by $B(w,r)$ up to a certain radius $r$ looks ``like a tree''. This is 
stated more precisely as follows.

\begin{definition}\label{defn-typical-node-preliminaries}
	Let $G$ be an $H(n,d)$ random graph and $w$ be any node in $G$. Consider the subgraph induced by $B(w,r)$ for $r  =  \frac{\log{n}}{10\log{d}}$. Let $u$ be any node in $Bd(w,j)$, $1 \leq j < r$. 
	$u$ is said to be ``typical'' if $u$ has only one neighbor in $Bd(w,j-1)$ and $(d-1)$-neighbors in $Bd(w,j+1)$; otherwise it is called ``atypical''.
\end{definition}

\begin{definition}\label{defn-locally-tree-like-node-preliminaries}
	We call a node $w$ ``locally tree-like'' if no node in $B(w,r)$ is atypical. In other words, $w$ is ``locally tree-like'' if the subgraph induced by $B(w,r)$ is a $(d-1)$-ary tree.
\end{definition}

It can be shown using properties of the $H(n, d)$ random graph model and standard concentration bounds (cf.\ Section \ref{section-locally-tree-like-property}) that most nodes in $G$ are locally tree-like.
\begin{lemma}\label{lemma-most-nodes-are-locally-tree-like-preliminaries}
In an $H(n,d)$ random graph, with high probability, at least $n - O(n^{0.8})$ nodes are locally tree-like.
\end{lemma}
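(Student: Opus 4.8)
The plan is a first-moment argument: I will bound the probability that a fixed node $w$ fails to be locally tree-like, multiply by $n$, and conclude with Markov's inequality (so that no independence across different nodes is needed). Fix $w$ and set $r = \frac{\log n}{10\log d}$. First I would record a deterministic bound on the ball size: since $H$ is $d$-regular, $|B(w,r)| \le 1 + d\sum_{i=0}^{r-1}(d-1)^i = O\bigl((d-1)^r\bigr) = O(n^{0.1})$, the hidden constant depending only on $d$ (indeed $(d-1)^r = n^{\log(d-1)/(10\log d)}$, with exponent strictly below $0.1$). Then I would explore $B(w,r)$ by a breadth-first search from $w$, revealing the edges of the $\frac{d}{2}$ random Hamiltonian cycles only when the BFS first needs them (principle of deferred decisions). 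When a vertex $v$ is processed, for each of the $\frac{d}{2}$ cycles we expose the at most one incident cycle-edge endpoint not already revealed; over the whole exploration this exposes $O(d\,|B(w,r)|) = O(n^{0.1})$ edges. The node $w$ is \emph{not} locally tree-like precisely when one of these exposures is a \emph{collision}, i.e., its newly revealed endpoint already lies in the partially built ball; this single event captures every way $w$ can fail (a cycle inside $B(w,r)$, some $u \in Bd(w,j)$ acquiring two neighbors in $Bd(w,j-1)$ or fewer than $d-1$ in $Bd(w,j+1)$, and multi-edges among the cycles incident to $B(w,r)$).

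The crux is to show that each exposure is a collision with probability $O(|B(w,r)|/n) = O(n^{-0.9})$. This is where the $H(n,d)$ model is used: conditioned on everything revealed so far, the not-yet-traversed continuation of a Hamiltonian cycle behaves like a uniformly random extension, so the freshly exposed endpoint is --- up to constant factors --- uniform over the $\Theta(n)$ vertices not yet touched by the exploration, and the probability that it lands in the already-built ball of size $O(n^{0.1})$ is $O(n^{-0.9})$. Equivalently, one can couple the BFS with the analogous exploration in the configuration model, for which this estimate is standard. Taking a union bound over the $O(n^{0.1})$ exposures gives $\Pr[w\ \text{is not locally tree-like}] = O(n^{0.1}\cdot n^{-0.9}) = O(n^{-0.8})$.

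Letting $X$ denote the number of non-locally-tree-like nodes, linearity of expectation gives $\E[X] = O(n\cdot n^{-0.8}) = O(n^{0.2})$, and Markov's inequality yields $\Pr[X \ge n^{0.8}] \le \E[X]/n^{0.8} = O(n^{-0.6}) = o(1)$; this probability is in fact $1 - n^{-\Omega(1)}$, so with high probability at least $n - O(n^{0.8})$ nodes are locally tree-like.

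The main obstacle is the estimate in the second paragraph: making the ``near-uniform partner'' claim precise for the union-of-Hamiltonian-cycles model, and carefully handling the dependencies both across the $\frac{d}{2}$ cycles and across successive BFS exposures. Once that near-uniformity (to within constant factors) is established, the rest is routine; in particular we never need concentration of $X$, because the polynomial gap between $\E[X] = O(n^{0.2})$ and the target threshold $n^{0.8}$ already makes the plain Markov bound sufficient.
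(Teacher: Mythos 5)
Your proof is correct, and the per-node failure estimate $\Pr[w\text{ not locally tree-like}] = O(n^{-0.8})$ is the same calculation the paper does: each edge exposure out of the partially-built ball ``collides'' with probability about $(d-1)\,|B(w,r)|/n = O(n^{-0.9})$, and a union bound over the $O(n^{0.1})$ nodes of the ball gives the per-node bound. (The paper phrases this as $\Pr(u\text{ is atypical}) \le (d-1)|B(w,j)|/n$ rather than via explicit BFS exposure, but the underlying count is identical, and --- as you candidly note --- both you and the paper gloss over the full justification of the near-uniformity of the next cycle-endpoint in the $H(n,d)$ model.) Where you genuinely diverge is the concentration step, and your observation there is a real simplification. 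The paper defines $X$ as the number of locally-tree-like nodes, shows $E[X] \ge n - O(n^{0.2})$, and then applies Azuma's inequality (bounded-differences) with Lipschitz constant of order $n^{0.2}$, noting that toggling one vertex affects only the $O(n^{0.2})$ vertices within radius $2r$; this yields $\Pr(|X - E[X]| > n^{0.8}) \le 2\exp(-n^{1/5})$. You instead exploit the polynomial gap between $E[\text{non-LTL count}] = O(n^{0.2})$ and the threshold $n^{0.8}$ and simply apply Markov's inequality, getting failure probability $O(n^{-0.6})$. Since the paper's definition of ``with high probability'' is just $1 - n^{-\Omega(1)}$, this already suffices; you avoid the martingale setup and the Lipschitz-constant bookkeeping entirely. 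What Markov gives up is the much stronger sub-exponential tail of Azuma --- but nothing in this paper uses that extra strength, so your more elementary route reaches the same conclusion.
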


For the proof of this lemma, as well as for further details about the $H(n, d)$ random graph model, please refer to Section \ref{section-locally-tree-like-property}.

\begin{obs}\label{neighborhood-size-upper-bound-in-H}
	In a $d$-regular graph, for any vertex $v$, the number of vertices that are within a $\tau$-distance of $v$ is bounded by $(d-1)^{\tau+1}$, i.e., $|B(v, \tau)|  <  (d-1)^{\tau+1}$.
\end{obs}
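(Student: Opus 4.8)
The plan is to control each distance-$j$ sphere $Bd(v,j)$ and then sum the resulting geometric series. First I would record the two base cases that hold in any $d$-regular graph: $|Bd(v,0)| = 1$ and $|Bd(v,1)| \leq d$. For $j \geq 2$, every $w \in Bd(v,j)$ lies on a shortest $v$--$w$ path and therefore has a neighbor in $Bd(v,j-1)$, so $Bd(v,j) \subseteq \bigcup_{w' \in Bd(v,j-1)} N(w')$; moreover each such $w'$ (being at distance $j-1 \geq 1$ from $v$) has at least one neighbor in $Bd(v,j-2)$, hence at most $d-1$ of its $d$ neighbors can lie in $Bd(v,j)$. This yields the recurrence $|Bd(v,j)| \leq (d-1)\,|Bd(v,j-1)|$ for all $j \geq 2$, which unrolls to $|Bd(v,j)| \leq d(d-1)^{j-1}$ for every $j \geq 1$.

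Next I would sum over all the spheres: $|B(v,\tau)| = \sum_{j=0}^{\tau} |Bd(v,j)| \leq 1 + d\sum_{j=1}^{\tau}(d-1)^{j-1} = 1 + \frac{d}{d-2}\bigl((d-1)^{\tau} - 1\bigr)$. Since $d \geq 8$ (indeed $d \geq 4$ already suffices), we have $\frac{d}{d-2} \leq 2$, so the right-hand side is at most $2(d-1)^{\tau} - 1 < (d-1)\cdot(d-1)^{\tau} = (d-1)^{\tau+1}$, where the last strict inequality uses $d - 1 > 2$. Finally I would dispatch the degenerate case $\tau = 0$ directly: $|B(v,0)| = 1 < d - 1 = (d-1)^{0+1}$, which holds because $d \geq 8$. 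Combining the two cases gives the claimed bound $|B(v,\tau)| < (d-1)^{\tau+1}$ for all $\tau \geq 0$.

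I do not expect a genuine obstacle here: this is the standard ``ball-growth'' estimate for bounded-degree graphs. The only points that need a little care are (i) the off-by-one in the recurrence — the ``one edge points back toward $v$'' argument is valid only once we are at distance at least $2$, so the first level must be handled by the crude bound $|Bd(v,1)| \leq d$ — and (ii) the constant bookkeeping needed to absorb $1 + \tfrac{d}{d-2}\bigl((d-1)^{\tau}-1\bigr)$ into the clean form $(d-1)^{\tau+1}$, which goes through precisely because $d$ is a (reasonably large) constant.
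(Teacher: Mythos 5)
Your proof is correct, and it is the natural ``ball-growth'' argument for a $d$-regular graph: one neighbor of each vertex on a shortest path points back toward $v$, so the spheres grow by a factor of at most $d-1$ once you are at distance $\geq 2$, and summing the resulting geometric series and absorbing constants (using $d\ge 8$) gives the stated bound. The paper states this as an Observation without supplying a proof, so there is nothing to reconcile; your argument fills the gap in the expected way, and the only places that need care — the $j=1$ base case and the $\tau=0$ degenerate case — you handle correctly.
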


Since any two vertices that are within $\tau$-distance of each other in $G$ is within $k\tau$ distance of each other in $H$ (which is a $d$-regular graph), we have that
\begin{obs}\label{neighborhood-size-upper-bound-in-G}
	In the graph $G$, for any vertex $v$, the number of vertices that are within a $\tau$-distance of $v$ is bounded by $(d-1)^{k\tau+1}$, i.e., $|B_G(v, \tau)|  <  (d-1)^{k\tau+1}$.
\end{obs}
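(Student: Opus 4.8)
The plan is to reduce Observation~\ref{neighborhood-size-upper-bound-in-G} to Observation~\ref{neighborhood-size-upper-bound-in-H} via the set inclusion $B_G(v,\tau) \subseteq B(v,k\tau)$ (recall that, by our convention, an unsubscripted $B$ and an unsubscripted $dist$ refer to the $d$-regular graph $H$). Once this inclusion is established, we simply apply Observation~\ref{neighborhood-size-upper-bound-in-H} in $H$ with the radius parameter $k\tau$ in place of $\tau$, which is legitimate since $H$ is $d$-regular, and conclude.

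The heart of the matter is the distance-comparison claim that the sentence preceding the observation asserts: for all $u, v \in V$ we have $dist(u,v) \le k\cdot dist_G(u,v)$. First I would fix a shortest path $v = w_0, w_1, \dots, w_\ell = u$ in $G$, where $\ell = dist_G(u,v)$. Each edge $\{w_{i-1}, w_i\}$ belongs to $E(G) = E(H)\cup E(L)$; if it is an edge of $H$ then $dist(w_{i-1}, w_i) = 1 \le k$ (using $k = \lceil d/3\rceil \ge 1$), and if it is an edge of $L$ then $dist(w_{i-1}, w_i) \le k$ directly by the definition of $E(L)$. Summing these bounds along the path and invoking the triangle inequality for the shortest-path metric on $H$ gives $dist(u,v) \le \sum_{i=1}^{\ell} dist(w_{i-1}, w_i) \le k\ell$, which is the claim.

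It then follows that whenever $w \in B_G(v,\tau)$, i.e.\ $dist_G(v,w) \le \tau$, we have $dist(v,w) \le k\tau$, so $w \in B(v,k\tau)$; hence $B_G(v,\tau)\subseteq B(v,k\tau)$. Applying Observation~\ref{neighborhood-size-upper-bound-in-H} yields $|B(v,k\tau)| < (d-1)^{k\tau+1}$, and therefore $|B_G(v,\tau)| \le |B(v,k\tau)| < (d-1)^{k\tau+1}$, as desired. I do not expect any genuine obstacle here; the only point requiring a little care is to make sure the reduction really lands inside a metric ball of $H$ (so that the $d$-regularity hypothesis of Observation~\ref{neighborhood-size-upper-bound-in-H} applies), which is precisely what the distance-comparison claim guarantees.
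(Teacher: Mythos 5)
Your proposal is correct and follows the same route as the paper: the paper asserts, in the sentence immediately preceding the observation, precisely the claim that $dist(u,v) \le k\cdot dist_G(u,v)$ and then applies Observation~\ref{neighborhood-size-upper-bound-in-H} with radius $k\tau$; you simply fill in the (straightforward) verification of that distance-comparison claim via the edge-by-edge triangle-inequality argument along a shortest $G$-path. No discrepancy.
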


\begin{definition}\label{definition-node-categorization}
We categorize the nodes in $V$ into the following distinct categories. Unlike our usual convention, the distances referred to in this definition refer to the respective distances in $G$ (not in $H$, as 
is usual).
\begin{enumerate}
	\item \textbf{Byzantine nodes:} The set of Byzantine nodes is denoted by $\texttt{Byz}$.
		
	\item \textbf{Honest nodes:} The set of honest nodes is defined to be $\texttt{Honest}   \defeq   V \setminus \texttt{Byz}$.
	
	\item \textbf{Locally tree-like nodes:} Please refer to Definition \ref{defn-locally-tree-like-node-preliminaries}. That is, the set of the locally tree-like nodes is defined as 
			$\texttt{LTL}  \defeq  \left\{v \in V\ |\ v\text{ is locally tree-like}\right\}$.
			
	\item \textbf{Non-locally-tree-like nodes:} The set of the non-locally-tree-like nodes is defined as $\texttt{NLT}  \defeq  V \setminus \texttt{LTL}$.
			
	\item \textbf{Unsafe nodes:} The set of nodes that have one or more $\texttt{NLT}$ nodes within a distance of $a\log{n}$, where $a  \defeq  \frac{\delta}{10k\log{(d-1)}}$. If we denote the set of 
			unsafe nodes by $\texttt{Unsafe}$, then
			\begin{center}
				$\texttt{Unsafe}   \defeq   \left\{v \in V\ |\ dist_G(v, \texttt{NLT})  \leq  a\log{n}\right\}$.
			\end{center}
	
	\item \textbf{Safe nodes:} Nodes that are not unsafe. In other words, the set of nodes that have no $\texttt{NLT}$ nodes within a distance of $a\log{n}$. If we denote the set of safe 
			nodes by $\texttt{Safe}$, then
			\begin{center}
				$\texttt{Safe}   \defeq   \left\{v \in V\ |\ dist_G(v, \texttt{NLT})  >  a\log{n}\right\}$.
			\end{center}
		
	\item \textbf{Bad nodes:} The set of bad nodes is defined to be $\texttt{Bad}   \defeq   \texttt{Byz}  \cup \texttt{NLT}$.
	

	\item \textbf{Byzantine-Unsafe nodes:} The set of nodes that have one or more bad nodes within a distance of $a\log{n}$, where $a  \defeq  \frac{\delta}{10k\log{(d-1)}}$. If we denote the set of 
			Byzantine-unsafe nodes by $\texttt{BUS}$, then
			\begin{center}
				$\texttt{BUS}   \defeq   \left\{v \in V\ |\ dist_G(v, \texttt{Bad})  \leq  a\log{n}\right\}$.
			\end{center}
		
	\item \textbf{Byzantine-Safe nodes:} Nodes that are not Byzantine-unsafe. In other words, the set of nodes that have no bad nodes within a distance of $a\log{n}$. If we denote the set of 
			Byzantine-safe nodes by $\texttt{Byz-safe}$, then
			\begin{center}
				$\texttt{Byz-safe}   \defeq   \left\{v \in V\ |\ dist_G(v, \texttt{Bad})  >  a\log{n}\right\}$.
			\end{center}
\end{enumerate}
\end{definition}

\begin{lemma}\label{lemma-set-sizes}
The various node sets defined in Definition \ref{definition-node-categorization} have the following sizes, respectively.
\begin{enumerate}
	\item $|\texttt{Byz}|  =  n^{1 - \delta}$.
	
	\item $|\texttt{Honest}|   =   n  -  n^{1 - \delta}$.
	
	\item $|\texttt{LTL}|   \geq   n  -  O(n^{0.8})$.
	
	\item $|\texttt{NLT}|   \leq   O(n^{0.8})$.
	
	\item $|\texttt{Unsafe}|  \leq  O(n^{0.8 + \frac{\delta}{10}})  =  o(n)$.
	
	\item $|\texttt{Safe}|  \geq  n - O(n^{0.8 + \frac{\delta}{10}})  =  n - o(n)$.
	
	\item $|\texttt{Bad}|   \leq   n^{1 - \delta}  +  n^{0.8}   \leq   2n^{1 - \delta}$ (assuming $\delta \leq 0.2$).
	
	
	\item $|\texttt{BUS}|   \leq   2(d-1)n^{1 - \frac{9\delta}{10}}   =   o(n)$.
	
	\item $|\texttt{Byz-safe}|   \geq   n  -  2(d-1)n^{1 - \frac{9\delta}{10}}   =   n - o(n)$.
\end{enumerate}
\end{lemma}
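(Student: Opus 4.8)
The plan is to prove the nine bounds roughly in the stated order, obtaining items~1--4 directly from the definitions and from Lemma~\ref{lemma-most-nodes-are-locally-tree-like-preliminaries}, and then deriving items~5,~6,~8,~9 from a single neighborhood-volume estimate that we set up once.

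First I would dispose of the immediate items. Item~1 is the standing assumption on the number of Byzantine nodes (we take $|\texttt{Byz}| = n^{1-\delta}$ for concreteness; the general $O(n^{1-\delta})$ bound is handled identically). Item~2 follows from $\texttt{Honest} = V \setminus \texttt{Byz}$ together with $|V| = n$. Item~3 is exactly the statement of Lemma~\ref{lemma-most-nodes-are-locally-tree-like-preliminaries}, and item~4 is its complement, $|\texttt{NLT}| = n - |\texttt{LTL}| \leq O(n^{0.8})$; all of the bounds that follow therefore hold on the same high-probability event furnished by that lemma. Item~7 is then immediate: since $\texttt{Bad} = \texttt{Byz} \cup \texttt{NLT}$ we get $|\texttt{Bad}| \leq n^{1-\delta} + O(n^{0.8})$, and when $\delta \leq 0.2$ we have $n^{0.8} \leq n^{1-\delta}$, so $|\texttt{Bad}| \leq 2 n^{1-\delta}$.

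Next I would record the key volume bound. By Observation~\ref{neighborhood-size-upper-bound-in-G}, every vertex $v$ satisfies $|B_G(v, a\log n)| < (d-1)^{ka\log n + 1}$. The constant $a = \frac{\delta}{10k\log(d-1)}$ from Definition~\ref{definition-node-categorization} was chosen precisely so that $ka\log n \cdot \log(d-1) = \frac{\delta}{10}\log n$, i.e.\ $(d-1)^{ka\log n} = n^{\delta/10}$ (an identity that does not depend on the base of $\log$). Hence each node lies within distance $a\log n$ in $G$ of at most $(d-1)\,n^{\delta/10} = O(n^{\delta/10})$ nodes, and symmetrically each node ``blocks'' at most that many nodes from being safe.

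Finally, items~5,~6,~8,~9 are union bounds over this estimate. From the definitions, $\texttt{Unsafe} = \bigcup_{w \in \texttt{NLT}} B_G(w, a\log n)$, so $|\texttt{Unsafe}| \leq |\texttt{NLT}|\cdot(d-1)n^{\delta/10} \leq O(n^{0.8})\cdot O(n^{\delta/10}) = O(n^{0.8+\delta/10})$, which is $o(n)$ because $\delta \leq 1$ forces $0.8 + \delta/10 \leq 0.9 < 1$; item~6 is the complement $|\texttt{Safe}| = n - |\texttt{Unsafe}|$. Likewise $\texttt{BUS} = \bigcup_{w \in \texttt{Bad}} B_G(w, a\log n)$, so using item~7, $|\texttt{BUS}| \leq |\texttt{Bad}|\cdot(d-1)n^{\delta/10} \leq 2n^{1-\delta}\cdot(d-1)n^{\delta/10} = 2(d-1)n^{1-9\delta/10}$, which is $o(n)$ because $\delta > 0$; item~9 is again the complement. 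The only thing that needs care is the exponent arithmetic in the volume bound --- checking that $a$ yields exactly $(d-1)^{ka\log n} = n^{\delta/10}$ and that the resulting exponents $0.8 + \delta/10$ and $1 - 9\delta/10$ both stay strictly below $1$ over the admissible range of $\delta$ --- so there is no substantive obstacle: the lemma is an accounting consequence of Lemma~\ref{lemma-most-nodes-are-locally-tree-like-preliminaries}, Observation~\ref{neighborhood-size-upper-bound-in-G}, and Definition~\ref{definition-node-categorization}.
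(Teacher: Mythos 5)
Your proof is correct and follows the same route as the paper: items 1--4 and 7 directly from the definitions and Lemma~\ref{lemma-most-nodes-are-locally-tree-like-preliminaries}, and items 5, 6, 8, 9 by union-bounding the balls $B_G(\cdot, a\log n)$ over $\texttt{NLT}$ (resp.\ $\texttt{Bad}$) via Observation~\ref{neighborhood-size-upper-bound-in-G}. The paper merely cites these ingredients without spelling out the volume calculation $(d-1)^{ka\log n+1} = (d-1)n^{\delta/10}$, which you have filled in correctly.
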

\begin{proof}

	\begin{enumerate}
		\item By definition.
		
		\item By definition.
		
		\item By Lemma \ref{lemma-most-nodes-are-locally-tree-like-preliminaries}.
		
		\item By Lemma \ref{lemma-most-nodes-are-locally-tree-like-preliminaries}.
		
		\item By Definition \ref{definition-node-categorization}, Observation \ref{neighborhood-size-upper-bound-in-G}, and Lemma \ref{lemma-most-nodes-are-locally-tree-like-preliminaries}.
		
		\item By Definition \ref{definition-node-categorization}, Observation \ref{neighborhood-size-upper-bound-in-G}, and Lemma \ref{lemma-most-nodes-are-locally-tree-like-preliminaries}.
		
		\item By definition.
		
		
		\item Follows from Observation \ref{neighborhood-size-upper-bound-in-G} and the definition of $\texttt{BUS}$.
		
		\item By definition.
	\end{enumerate}
	
\end{proof}

\begin{definition}\label{defn-parent-child}
	We call $u$ a child of $w$ with respect to $v$ (or $w$ the parent of $u$, with respect to $v$) if $u$ is a child of $w$ in the BFS tree rooted at $v$. Similarly, we call $u$ and $w$ siblings with 
	respect to $v$ if they are siblings in the BFS tree rooted at $v$. We note that, as is our usual custom, this BFS tree is in the graph $H$ and not in $G$.
\end{definition}

It is important to note that nodes in $G$ do not know a priori which edges are in $H$ and which are in $L$. However, the following lemma assures us that most (honest) nodes can distinguish between the 
two types of edges using a simple protocol.

\begin{lemma}\label{obs-distinguish-between-H-and-L}
	For any honest node $v$, if $v$ has no Byzantine neighbor in $G$ (that is, no Byzantine neighbor in its $k$-distance neighborhood in $H$), then $v$ can faithfully reconstruct the topology of its 
	$k$-distance neighborhood in $H$ from the information it is provided by its $G$-neighbors.
\end{lemma}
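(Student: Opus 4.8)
The plan is to separate the claim into an easy ``information-gathering'' step and a combinatorial ``reconstruction'' step.

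\textbf{Gathering.} First I would use that $G$-adjacency is exactly the relation $dist(\cdot,\cdot)\le k$ in $H$, so the closed $G$-neighborhood of $v$ is precisely $B(v,k)$; by hypothesis every node of $B(v,k)$ is honest. In a single round $v$ asks each of its $G$-neighbors $x$ to send the list of IDs of \emph{all} of $x$'s $G$-neighbors. Every such report is truthful (the reporter is honest), so $v$ learns the exact set $N_G(x)=B(x,k)$ for every $x\in B(v,k)$ (for $x=v$ it already knows its own incident edges). From this $v$ recovers (a) the vertex set $B(v,k)=\{v\}\cup N_G(v)$, and (b) the induced subgraph $G[B(v,k)]$, since for $x,y\in N_G(v)$ the edge $(x,y)$ appears in $x$'s report. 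The point worth emphasizing is that $v$ obtains the \emph{full} neighbor set $N_G(x)=B(x,k)$, not merely its restriction to $B(v,k)$; this extra reach is what makes reconstruction possible. (Note $v$ may trust the report $N_G(x)$ even though some named nodes lie outside $B(v,k)$ and could be Byzantine: accuracy of the report depends only on $x$ itself being honest.)

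\textbf{Reconstruction.} It remains to decide, for each edge $(x,y)$ of $G[B(v,k)]$, whether $dist(x,y)=1$ (an $H$-edge) or $dist(x,y)\in\{2,\dots,k\}$ (an $L$-only edge); then $H[B(v,k)]$ is exactly the set of ``$dist=1$'' pairs (non-$G$-edges of $B(v,k)$ are automatically non-$H$-edges, so nothing is needed there). The key combinatorial fact is that, when $B(v,2k)$ induces a tree in $H$ --- which holds whenever $v$ is locally tree-like, since $2k\ll r=\frac{\log n}{10\log d}$ --- the quantity $|N_G(x)\cap N_G(y)|=|B(x,k)\cap B(y,k)|$ depends only on $dist(x,y)$: writing $f_{d,k}(j)$ for its value when $dist(x,y)=j$, a direct count of the nodes lying within distance $k$ of the $x$--$y$ geodesic in a $d$-regular tree shows that $f_{d,k}$ is strictly decreasing on $\{0,1,\dots,2k\}$ (using $d\ge 8$), with $f_{d,k}(1)=2\,\frac{(d-1)^k-1}{d-2}$. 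Since $v$ knows $N_G(x)$ and $N_G(y)$ for all $x,y\in B(v,k)$, it computes $|N_G(x)\cap N_G(y)|$, compares it to the constant $f_{d,k}(1)$ (which $v$ evaluates from $d$ and $k$ alone), and declares $(x,y)\in E(H)$ precisely when equality holds. This classifies every $G$-edge inside $B(v,k)$ correctly, yielding $H[B(v,k)]$.

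\textbf{Main obstacle.} The substantive step is the strict monotonicity of $f_{d,k}$ --- that the overlap of two radius-$k$ tree-balls genuinely shrinks as the centers separate --- which is a routine but not wholly trivial counting argument, and it is exactly here that the tree structure of $H$ near $v$ is essential: if $B(v,2k)$ contained a short cycle, the overlap need not be a function of $dist(x,y)$ and the matching step could misclassify an edge. Thus the argument I have in mind establishes the statement for $v$ that is honest, has no Byzantine neighbor, \emph{and} is locally tree-like. By Lemma~\ref{lemma-most-nodes-are-locally-tree-like-preliminaries} all but $O(n^{0.8})$ nodes are locally tree-like, and in the sequel the lemma is invoked only for such nodes (in particular for Byzantine-safe nodes, whose whole $a\log n$-ball --- hence $B(v,2k)$ --- is free of both Byzantine and non-locally-tree-like nodes), so this mild strengthening of the hypothesis costs nothing; one should simply make it explicit. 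A secondary bookkeeping point is to confirm that $v$ indeed receives $N_G(x)$ for \emph{every} $x\in B(v,k)$, which holds because $B(v,k)\setminus\{v\}=N_G(v)$, so each such $x$ is a direct $G$-neighbor that reports.
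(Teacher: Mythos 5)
Your argument is correct, but it follows a genuinely different path from the paper's. The paper's proof never counts anything: from the same reports $N_G(\cdot)$, it observes that for two $G$-neighbors $u,w$ of $v$, the \emph{subset relation} $N_G(w)\cap N_G(v)\subset N_G(u)\cap N_G(v)$ reveals that $w$ lies below $u$ in the BFS tree rooted at $v$, and the symmetric containment / incomparability cases identify siblings; from this poset of intersections $v$ extracts the BFS tree of $B(v,k)$, hence $H[B(v,k)]$. You instead classify each $G$-edge $(x,y)$ by the \emph{cardinality} $|N_G(x)\cap N_G(y)|$, matching it against the tree-ball overlap value $f_{d,k}(1)$. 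Both schemes ultimately exploit the same information (pairwise overlaps of $k$-balls in a tree), and both tacitly depend on the neighborhood being cycle-free; you make that hypothesis explicit (``$v$ is locally tree-like''), the paper does not, so in this respect your version is actually more careful about the regime in which the reconstruction is valid. A couple of remarks that would simplify your route: since $(x,y)$ is a $G$-edge only when $dist_H(x,y)\le k$, you only ever need $f_{d,k}$ to separate $j=1$ from $j\in\{2,\dots,k\}$, and this follows more cheaply than full strict monotonicity: the nested-geodesic argument gives $B(x,k)\cap B(y',k)\subseteq B(x,k)\cap B(y,k)$ whenever $y$ lies between $x$ and $y'$, so $f_{d,k}$ is non-increasing, and the single strict drop from $j=1$ to $j=2$ is immediate (e.g.\ the ``far'' child-subtree of $x$ at depth $k-1$ is in $B(x,k)\cap B(y,k)$ when $dist(x,y)=1$ but not when $dist(x,y)=2$). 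That avoids the more involved claim of strict decrease on all of $\{0,\dots,2k\}$ that you flagged as the substantive step.
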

\begin{proof}
	For any $x \in V(G)$, let $N_G(x)$ denote the set of $G$-neighbors of $x$.
	Let $w$ and $u$ be two $G$-neighbors of $v$. Then we observe that
	\begin{itemize}
		\item $w$ is a child of $u$ with respect to $v$ if and only if $N_G(w) \cap N_G(v)  \subset  N_G(u) \cap N_G(v)$.
		\item $u$ is a child of $w$ with respect to $v$ if and only if $N_G(u) \cap N_G(v)  \subset  N_G(w) \cap N_G(v)$.
		\item $u$ and $v$ are siblings if $u \in N_G(w)$ and $w \in N_G(u)$ but neither of them is a child of the other.
	\end{itemize}
\end{proof}

\begin{remark}
	Since $d$ and $k$ are constants, the list of neighbors is still $O(1)$, and hence can be exchanged in a constant number of rounds (using small sized messages).
\end{remark}


\section{The Algorithm and its analysis}  \label{section-algorithm}

For the sake of exposition, we first describe the algorithm and analyze its behavior \emph{free from} the influence of any Byzantine nodes; in other words, we will assume that all nodes (including 
Byzantine nodes) honestly execute the protocol without malicious behavior. We will discuss any malicious effects the Byzantine nodes may have in Section \ref{section-algorithm-with-byz} and describe how 
to modify the algorithm (and analysis)  to counter the Bzyantine nodes.

\subsection{Description of the algorithm (assuming no influence of Byzantine nodes)}
\textbf{Phases and subphases:} This is a distributed algorithm that runs in \emph{phases}. In the $i^{\text{th}}$ phase, the algorithm works with the current estimation of $\log{n}$, which is $i$. We 
reserve the letter $i$ exclusively to denote the phase that the algorithm is presently in. For $i \geq 1$, the $i^{\text{th}}$ phase consists of several runs (repetitions) of the same random experiment 
(the random experiment is described in the next few paragraphs; also see Lines \ref{color1} through \ref{subphase-ends} of the pseudocode in Algorithm \ref{alg}). We call one such run a \emph{subphase} 
of the $i^{\text{th}}$ phase. We would usually index the subphases by $j$, i.e., we will very frequently use the phrase ``in the $j^{\text{th}}$ subphase of the $i^{\text{th}}$ phase'' in our 
description and analysis of the algorithm. We note that in a synchronized network the value of $i$ and $j$ are known to all nodes.

The $i^{\text{th}}$ phase consists of exactly $\alpha_i$ subphases (repetitions), where $\alpha_i  \defeq  \lceil\frac{\log{(\frac{1}{\epsilon})} + i + 1 - \log{d}}{(i-2)\log{(d-1)}}\rceil$. We call 
$\epsilon$ \emph{the error parameter} --- by changing its value (please refer to Line \ref{value-of-alpha-i} in the pseudocode in Algorithm \ref{alg}), we (the algorithm designer) can control exactly 
how large a fraction of the honest nodes would estimate $\log{n}$ correctly (i.e., get a constant-factor approximation of $\log{n}$). Theorem \ref{theorem-main-result-of-the-paper}, which is the main 
result of this paper, tells us that at most $\epsilon$-fraction of the honest nodes would \emph{fail} to get a constant factor approximation of $\log{n}$.\\

\textbf{Basic idea} (see also Section \ref{sec:technical})\textbf{:} In one random experiment, i.e., in the $j^{\text{th}}$ subphase of the $i^{\text{th}}$ phase, say, every node sends out some tokens 
(these contain some information) that propagate through the network (by flooding) for some (pre-determined) number of steps (rounds), at the end of which every node takes stock of the tokens it has 
received over the intermediate rounds.\\

\textbf{Color of a token:} Every token circulating in the network has a \emph{color} (defined next), which is passed down to a token from its generating node. Every node $v$ tosses an unbiased coin until 
it gets its first head (see Line \ref{color1} of the pseudocode in Algorithm \ref{alg}). If a node $v$ gets its first head at the $r^{\text{th}}$ trial, we call $r$ to be the \emph{color} of $v$ (see 
Line \ref{color2} of the pseudocode in Algorithm \ref{alg}). Thus the \emph{color} of a node is always a positive integer, which may be (but is not necessarily) different for different nodes.\\

\textbf{Estimating $\log{n}$:} When $i$ is much smaller than $\log{n}$, most nodes will receive their respective highest colored tokens in the last round. In contrast, when $i$ is of the same order as 
$\log{n}$, most nodes will have received their respective highest colored tokens much before the last round. This provides a node with a way to determine when its estimate of $\log{n}$, which is $i$, 
has reached close to the actual value of $\log{n}$.
\begin{algorithm}[h]
\begin{algorithmic}[1]
	\State Ask all the neighbors for their respective adjacency lists and distinguish between the edges of $H$ and $L$ from that information. 
		
	\For{$i \gets 1,2,\dots$} \label{begin-with-1} \MyComment{$i$ denotes the phase node $v$ is in}
		\State $FlagTerminate \gets 1$
		
		\If{$d(d - 1)^{i - 2}   \leq   \frac{2}{\epsilon}$}
            \State $\alpha_i  \gets  \lceil\frac{\log{(\frac{1}{\epsilon})} + i + 1}{\log{d} + (i-2)\log{(d-1)} - 1}\rceil$ \label{value-of-alpha-i-case-1} \MyComment{$0 < \epsilon < 1$ is the error-parameter}
        \Else
            \State $\alpha_i   \gets   1  +  \frac{i + 1}{\log{(\frac{1}{\epsilon})}}$ \label{value-of-alpha-i-case-2}
        \EndIf
		
		\For{$j \gets 1,2,\dots,i\alpha_i$} \label{phase-i-has-alpha-i-subphases} \MyComment{Phase $i$ consists of $\alpha_i$ subphases; the subphases are indexed by $j$}
			\State $\text{$v$ tosses a fair coin until the outcome is heads in the $r$-th trial, for some $r \geq 1$.}$ \label{color1}
			\State $c_{v, i} \gets r$ \label{color2}
			\State Flood the color $c_{v, i}$, along the edges of $H$ only, for exactly $i$ steps. \MyComment{This is possible by virtue of Lemma \ref{obs-distinguish-between-H-and-L}.}
			\For{time $t = 1, 2, \ldots, i$}
				\State In each round $t$, mark and store the highest color received. Let's call it $k_t$\label{received-highest-color}
			\EndFor
			
			\If{$k_i > k_t$, $\forall 1 \leq t < i$, and $k_i   >   \log{(d(d-1)^{i-1})}  -  \log{\log{(d(d-1)^{i-1})}}$}\label{criterion-for-continuing-first-line}
				\State $FlagTerminate \gets 0$ \label{criterion-for-continuing}
			\EndIf \label{subphase-ends}
		\EndFor
		
		\If{$FlagTerminate = 1$}
			\State Decide $i$ and terminate all for-loops. \MyComment{$v$ accepts $i$ as the estimate of $\log{n}$}\label{final-accept}
		\Else
			\State Continue to the next phase $i+1$.
		\EndIf
	\EndFor
\end{algorithmic}
\caption{The basic counting algorithm (in the absence of Byzantine nodes). Code for node $v$.}
\label{alg}
\end{algorithm}


\subsection{Analysis of the algorithm (assuming Byzantine nodes behave honestly)} \label{section-analysis}

In this section we show that the algorithm gives a $(\frac{b}{a})$-factor approximation of $\log{n}$ with high probability, where $a  \defeq  \frac{\delta}{10k\log{(d-1)}}$ and $b   \defeq   
\frac{4}{\log{(1 + \frac{h}{d})}}$, where $h$ is the edge-expansion of $H$. Note that $0 < a < b < 1$. We recall that $n^{1-\delta}$ is the number of Byzantine nodes in the network $G$, and $d$ is the uniform degree of $H$. ($H$ is a subset of $G$. For the exact definition of $H$, please refer to Section \ref{sec:model}.)

\begin{obs}\label{b-is-large}
	$b\log{n} \geq 2D(H)$, where $D(H)$ is the diameter of $H$.
\end{obs}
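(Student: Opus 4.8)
The plan is to derive Observation~\ref{b-is-large} from the classical fact that a bounded-degree graph with positive edge expansion has logarithmic diameter, while tracking the constants carefully so that the final bound lands below $\tfrac{1}{2}b\log n$. Throughout I treat $H$ as a $d$-regular graph with edge expansion $h = \min_{0<|S|\le n/2}\frac{e(S,\bar S)}{|S|} > 0$ (which holds for the $H(n,d)$ random graph w.h.p.).

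\textbf{Step 1 (ball growth).} I would first prove the claim: for every $v\in V(H)$ and every integer $t\ge 0$, either $|B(v,t)| > n/2$ or $|B(v,t)| \ge (1+h/d)^t$. This goes by induction on $t$; the base case $t=0$ is $|B(v,0)|=1$. For the inductive step, if $|B(v,t)|>n/2$ then $|B(v,t+1)|\ge|B(v,t)|>n/2$; otherwise $S\defeq B(v,t)$ satisfies $|S|\le n/2$, so at least $h|S|$ edges leave $S$, and since each vertex has degree exactly $d$, each vertex of the (vertex) boundary $B(v,t+1)\setminus B(v,t) = N(S)\setminus S$ is incident to at most $d$ of those edges. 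Hence $|B(v,t+1)\setminus B(v,t)| \ge h|S|/d$, giving $|B(v,t+1)| \ge |S|(1+h/d) \ge (1+h/d)^{t+1}$.

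\textbf{Step 2 (diameter bound).} Let $r^\star$ be the least integer with $(1+h/d)^{r^\star} > n/2$; then $(1+h/d)^{r^\star-1}\le n/2$, so $r^\star \le \frac{\log(n/2)}{\log(1+h/d)} + 1$. By Step~1, $|B(v,r^\star)| > n/2$ for every $v$ (if not, Step~1 forces $|B(v,r^\star)|\ge(1+h/d)^{r^\star}>n/2$, a contradiction). Then for any two vertices $u,v$ we have $|B(u,r^\star)| + |B(v,r^\star)| > n = |V(H)|$, so the two balls share a vertex $w$, and $dist(u,v)\le dist(u,w)+dist(w,v)\le 2r^\star$. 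Therefore $D(H)\le 2r^\star$.

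\textbf{Step 3 (constant bookkeeping).} Combining, $2D(H) \le 4r^\star \le \frac{4\log(n/2)}{\log(1+h/d)} + 4$. Since the edge expansion of a $d$-regular graph satisfies $h\le d$ (take $S$ to be a single vertex), we have $\log(1+h/d)\le\log 2$, hence $4 \le \frac{4\log 2}{\log(1+h/d)}$. Adding this and using $\log(n/2)+\log 2 = \log n$ gives $2D(H) \le \frac{4(\log(n/2)+\log 2)}{\log(1+h/d)} = \frac{4\log n}{\log(1+h/d)} = b\log n$, which is the claim. The only delicate point is precisely this last step: absorbing the additive slack from the ceiling in $r^\star$ (doubled, then doubled again) into the gap between $\log(n/2)$ and $\log n$, and this is exactly what the elementary bound $h/d\le 1$ provides; everything else is routine.
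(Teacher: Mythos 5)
Your proof is correct. The paper itself states this observation without any proof, treating it as a standard consequence of the fact that a bounded-degree graph with positive edge expansion has $O(\log n)$ diameter; your argument is precisely the canonical ball-growth proof of that fact, and your constant bookkeeping in Step 3 (absorbing the $+4$ from the ceiling and the doubling into the gap between $\log(n/2)$ and $\log n$, via $h\le d$) correctly shows that the paper's choice of $b = \frac{4}{\log(1+h/d)}$ is tight enough to make the observation hold. In short, you supplied a valid proof for an assertion the paper leaves unproved.
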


\paragraph*{High-level overview of the proof} We break our analysis up into two different stages of the algorithm. We show that the following statements hold with high probability.
\begin{enumerate}
	\item For $i  <  a\log{n}$, at least $(1 - \epsilon)$-fraction of the good nodes do not accept $i$ to be the right estimate of $\log{n}$, and they continue with the algorithm. The rest of the nodes --- i.e., at most $\epsilon$-fraction of the good nodes --- even though they have stopped generating tokens, still continue to forward tokens generated by other nodes. $0 < \epsilon \leq 1$ is a constant and we can make it arbitrarily small.
	
	\item If $i = b\log{n}$, all but $o(n)$ of the remaining active nodes accept $i$ to be the estimate of $\log{n}$ and they stop producing tokens. They however continue to forward tokens generated by other (if any) nodes.
\end{enumerate}

We cannot say which way a node will decide when $a\log{n} \leq i < b\log{n}$. The above two statements are, however, sufficient to give us an approximation factor of $\frac{b}{a}   =   
\frac{40k\log{(d-1)}}{\delta\log{(1 + \frac{h}{d})}}$.

\subsubsection{When $i$ is small: In particular, when $i < a\log{n}$} \label{subsection-i-is-small}

For the sake of the analysis in this subsection only, we will consider only \emph{safe nodes}, i.e., only those nodes in the set $\texttt{Safe}$. Let us first take note of a few properties of the geometric distribution though; these will be useful later.

\begin{obs}\label{individual-color-probabilities}
	For any node $v$ and any positive integer $r$,
	\begin{enumerate}
		\item $Pr[c_v = r] = \frac{1}{2^r}$.
		\item $Pr[c_v \geq r] = \frac{1}{2^{r-1}}$.
		\item $Pr[c_v < r]  =  1 - Pr[c_v \geq r]  =  1 - \frac{1}{2^{r-1}}$.
		\item $Pr[c_v \leq r] = 1 - Pr[c_v \geq r+1]  =  1 - \frac{1}{2^r}$.
		\item $Pr[c_v > r]  =  1 - Pr[c_v \leq r]  =  \frac{1}{2^r}$.
	\end{enumerate}
\end{obs}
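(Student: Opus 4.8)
The plan is to observe that the color $c_v$, as defined in Lines \ref{color1}--\ref{color2} of Algorithm \ref{alg}, is precisely a geometric random variable with success probability $\frac12$: node $v$ tosses a fair coin repeatedly and independently, and $c_v$ is the index of the first toss that comes up heads. Everything then follows from elementary manipulation of independent fair coin tosses, so there is no substantive obstacle here; the only point requiring a moment's care is giving the correct combinatorial description of the tail event $\{c_v \ge r\}$.

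For part 1, I would note that $\{c_v = r\}$ is exactly the event that the first $r-1$ tosses are tails and the $r$-th toss is heads. Since the tosses are mutually independent and each outcome occurs with probability $\frac12$, this event has probability $\left(\frac12\right)^{r-1}\cdot\frac12 = \frac1{2^r}$, as claimed.

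For part 2, observe that $\{c_v \ge r\}$ holds if and only if the first $r-1$ tosses are all tails, since the outcomes of later tosses do not affect whether $c_v \ge r$. By independence this event has probability $\left(\frac12\right)^{r-1} = \frac1{2^{r-1}}$. Equivalently, one may sum the point masses from part 1, $\Pr[c_v \ge r] = \sum_{k \ge r} \frac1{2^k} = \frac1{2^{r-1}}$.

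Parts 3, 4, and 5 are then purely formal consequences of parts 1 and 2. Part 3 is the complement of part 2: $\Pr[c_v < r] = 1 - \Pr[c_v \ge r] = 1 - \frac1{2^{r-1}}$. Part 4 follows because $\{c_v \le r\}$ is the complement of $\{c_v \ge r+1\}$, so $\Pr[c_v \le r] = 1 - \Pr[c_v \ge r+1] = 1 - \frac1{2^r}$ by part 2 applied with $r+1$ in place of $r$. Part 5 is the complement of part 4: $\Pr[c_v > r] = 1 - \Pr[c_v \le r] = \frac1{2^r}$. Thus the whole observation is a direct consequence of the definition of $c_v$ as a geometric variable, recorded here for later use in Section \ref{subsection-i-is-small}.
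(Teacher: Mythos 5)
Your proof is correct and matches what the paper intends: the paper states this as an unproved observation about the geometric variable $c_v$, and your derivation is exactly the standard elementary argument (first $r-1$ tails plus independence for parts 1--2, complementation for parts 3--5) that the authors clearly have in mind.
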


For any non-empty $V' \subset V(G)$, $c^{\text{max}}_{V'}$ is defined as $c^{\text{max}}_{V'}  \defeq  \left\{c_v\ |\ v \in V'\right\}$. Suppose $|V'| = n'$.
\begin{obs}\label{max-color-probabilities}
	For any positive integer $j$,
	\begin{enumerate}
		\item $Pr[c^{\text{max}}_{V'}  <  r]   =   (Pr[c_v < r])^{n'}   =   (1 - \frac{1}{2^{r-1}})^{n'}$.
		\item $Pr[c^{\text{max}}_{V'}  \geq  r]   =   1 - Pr[c^{\text{max}}_{V'}  <  r]   =   1  -  (1 - \frac{1}{2^{r-1}})^{n'}$.
		\item $Pr[c^{\text{max}}_{V'}  \leq  r]   =   Pr[c^{\text{max}}_{V'}  <  r+1]   =   (1 - \frac{1}{2^r})^{n'}$.
		\item $Pr[c^{\text{max}}_{V'}  >  r]   =   Pr[c^{\text{max}}_{V'}  \geq  r+1]   =   1  -  (1 - \frac{1}{2^r})^{n'}$.
		\item $Pr[c^{\text{max}}_{V'}  =  r]   =   Pr[c^{\text{max}}_{V'}  \geq  r] - Pr[c^{\text{max}}_{V'}  >  r]   =   (1 - \frac{1}{2^{r}})^{n'} - (1 - \frac{1}{2^{r-1}})^{n'}$.
	\end{enumerate}
\end{obs}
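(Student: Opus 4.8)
The plan is to derive all five identities from one structural fact: within a fixed subphase, the colors $\{c_v\}_{v \in V'}$ are mutually independent. By the model, every node tosses its own fair coin, independently of every other node, until it sees its first \texttt{heads}, so a node's color is a function of its own private coin sequence alone. Hence $c^{\text{max}}_{V'}$ is the maximum of $n' = |V'|$ independent copies of the geometric random variable whose law is recorded in Observation \ref{individual-color-probabilities}, and its cumulative distribution function factorizes.

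First I would observe that the event $\{c^{\text{max}}_{V'} < r\}$ is exactly $\bigcap_{v \in V'}\{c_v < r\}$, since the maximum of a finite set of integers is below $r$ precisely when every member is. Independence then gives $\Pr[c^{\text{max}}_{V'} < r] = \prod_{v \in V'}\Pr[c_v < r] = (\Pr[c_v < r])^{n'}$, and substituting $\Pr[c_v < r] = 1 - \tfrac{1}{2^{r-1}}$ from Observation \ref{individual-color-probabilities}(3) yields item 1.

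The remaining four items are routine bookkeeping on top of item 1. Item 2 is the complement of item 1. For item 3 I would use that $c^{\text{max}}_{V'}$ is integer-valued, so $\{c^{\text{max}}_{V'} \le r\} = \{c^{\text{max}}_{V'} < r+1\}$, and apply item 1 with $r$ replaced by $r+1$; item 4 is then the complement of item 3. Finally, item 5 follows from $\{c^{\text{max}}_{V'} = r\} = \{c^{\text{max}}_{V'} \ge r\} \setminus \{c^{\text{max}}_{V'} > r\}$, i.e.\ by subtracting item 4 from item 2.

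Honestly, there is no real obstacle here — the statement is a direct consequence of independence and the single-variable identities already proved. The only point worth an explicit sentence is that the colors are redrawn from scratch in each subphase (the subphase index $j$ is held fixed throughout the observation), so no randomness is reused and the product form is legitimate; and it is worth noting in passing that the ``for any positive integer $j$'' in the statement should read ``for any positive integer $r$,'' since $r$ is the variable that actually appears in the five identities.
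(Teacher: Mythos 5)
Your argument is correct and is the only natural proof: the paper states this as an Observation with no proof given, and your derivation via independence of the private coin tosses (so the CDF of the maximum factorizes) plus integer-valuedness, with items 2--5 reduced to item 1 by complementation and shifting, is exactly what is intended. You are also right that the quantifier ``for any positive integer $j$'' in the statement is a typo and should read ``for any positive integer $r$.''
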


\begin{lemma}\label{max-color-upper-bound}
	$Pr[c^{\text{max}}_{V'}  >  2\log{n'}]  \leq  \frac{1}{n'}$.
\end{lemma}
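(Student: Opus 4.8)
The plan is to bound the probability that the maximum color among the $n'$ nodes in $V'$ exceeds $2\log n'$ by a straightforward union bound over the individual nodes. The key observation, recorded in Observation \ref{individual-color-probabilities}, is that for any single node $v$ we have $\Pr[c_v > r] = \frac{1}{2^r}$, since the color is the index of the first head in a sequence of fair coin tosses. So for $r = 2\log n'$ we get $\Pr[c_v > 2\log n'] = \frac{1}{2^{2\log n'}} = \frac{1}{(n')^2}$ (taking $\log$ to base $2$, which is the convention used throughout for the coin-flip arguments).

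First I would write $c^{\text{max}}_{V'} > 2\log n'$ as the event $\bigcup_{v \in V'} \{c_v > 2\log n'\}$: the maximum exceeds the threshold if and only if at least one node's color does. Then a union bound over the $|V'| = n'$ nodes gives
\[
\Pr[c^{\text{max}}_{V'} > 2\log n'] \;\le\; \sum_{v \in V'} \Pr[c_v > 2\log n'] \;=\; n' \cdot \frac{1}{(n')^2} \;=\; \frac{1}{n'},
\]
which is exactly the claimed bound. Alternatively, one could argue directly from part 4 of Observation \ref{max-color-probabilities}, namely $\Pr[c^{\text{max}}_{V'} > r] = 1 - (1 - \frac{1}{2^r})^{n'}$, and use the inequality $(1-x)^{n'} \ge 1 - n'x$ for $x = \frac{1}{2^r} \in [0,1]$ to get $\Pr[c^{\text{max}}_{V'} > r] \le \frac{n'}{2^r}$, then substitute $r = 2\log n'$; this yields the same conclusion and is perhaps cleaner since the exact formula is already available.

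There is essentially no main obstacle here — this is a routine union-bound / Bernoulli-inequality estimate, and the only thing to be slightly careful about is the base of the logarithm (base $2$, consistent with the geometric-distribution calculations elsewhere in the paper, e.g., the sketch in the introduction where $\Pr[\bar X \ge 2\log n] \le \frac{1}{n}$ is derived the same way). I would simply state the union bound, cite Observation \ref{individual-color-probabilities} (or Observation \ref{max-color-probabilities} part 4 together with Bernoulli's inequality), plug in $r = 2\log n'$, and conclude.
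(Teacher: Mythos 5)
Your proof is correct. Your primary argument is a union bound over the $n'$ nodes, using $\Pr[c_v > 2\log n'] = 2^{-2\log n'} = (n')^{-2}$ from Observation \ref{individual-color-probabilities}; the paper instead starts from the exact formula $\Pr[c^{\text{max}}_{V'} > r] = 1 - (1 - 2^{-r})^{n'}$ of Observation \ref{max-color-probabilities} and applies Bernoulli's inequality $(1-x)^{n'} \geq 1 - n'x$ — which is exactly the ``alternative'' you describe. The two routes produce the identical chain of inequalities, so the difference is purely cosmetic; if anything, the union bound is the slightly more elementary phrasing since it does not lean on the independence of the $c_v$'s that underlies the product formula.
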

\begin{proof}
	\begin{align*}
		&Pr[c^{\text{max}}_{V'}  >  2\log{n'}]   =   1  -  (1 - \frac{1}{2^{2\log{n'}}})^{n'}   =   1  -  (1 - \frac{1}{{n'}^2})^{n'}\\
		&\leq   1  -  (1 - \frac{n'}{{n'}^2})   =   1   -   (1 - \frac{1}{n'})   =   \frac{1}{n'}\text{.}
	\end{align*}
\end{proof}

\begin{lemma}\label{max-color-lower-bound}
	$Pr[c^{\text{max}}_{V'}  \leq  \log{n'} - \log{\log{n'}}]   <   \frac{1}{n'}$.
\end{lemma}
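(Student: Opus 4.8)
The key observation is that Observation \ref{max-color-probabilities}, item 3, already gives an exact closed form: for any positive integer $r$,
\[
	Pr[c^{\text{max}}_{V'} \leq r] = \left(1 - \frac{1}{2^r}\right)^{n'}.
\]
So the plan is simply to substitute $r = \log n' - \log\log n'$ and estimate the resulting expression. First I would compute $2^r = 2^{\log n'}/2^{\log\log n'} = n'/\log n'$, so that
\[
	Pr[c^{\text{max}}_{V'} \leq \log n' - \log\log n'] = \left(1 - \frac{\log n'}{n'}\right)^{n'}.
\]

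Next I would apply the standard inequality $1 + x \leq e^x$ (with $x = -\frac{\log n'}{n'}$) to bound this by $e^{-\log n'}$. Since throughout the paper $\log$ denotes the binary logarithm, $e^{-\log n'} = e^{-\ln n'/\ln 2} = (n')^{-1/\ln 2} = (n')^{-1.44\ldots}$, which is strictly smaller than $1/n'$ for every $n' \geq 2$. This yields the claimed strict inequality $Pr[c^{\text{max}}_{V'} \leq \log n' - \log\log n'] < \frac{1}{n'}$.

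There is essentially no obstacle here — the lemma is a direct calculation once Observation \ref{max-color-probabilities} is in hand; the only minor points to be careful about are (i) that $r = \log n' - \log\log n'$ should be treated as an integer (one can take floors, which only decreases the probability further, so the bound is unaffected), and (ii) keeping the logarithm base consistent so that the $e^{-\log n'}$ factor is genuinely $o(1/n')$ rather than merely $O(1/n')$, which is what gives the strict inequality in the statement. I would state these two remarks in a single sentence and leave the one-line computation as above.
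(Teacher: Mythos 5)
Your proof is correct and follows exactly the same route as the paper's: substitute into the exact formula from Observation \ref{max-color-probabilities}(3), simplify $2^{\log n' - \log\log n'} = n'/\log n'$, apply $1-x \leq e^{-x}$, and conclude $e^{-\log n'} < 1/n'$. Your added remarks about the logarithm base and integrality are reasonable clarifications but do not change the argument.
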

\begin{proof}
	\begin{align*}
		&Pr[c^{\text{max}}_{V'}  \leq  \log{n'} - \log{\log{n'}}]   =   (1 - \frac{1}{2^{\log{n'} - \log{\log{n'}}}})^{n'}\\
		&=   (1 - \frac{\log{n'}}{n'})^{n'}   \leq   \text{exp}(-\frac{(\log{n'}) . n'}{n'})   =   \text{exp}(-\log{n'})   <   \frac{1}{n'}\text{.}
	\end{align*}
\end{proof}

For any node $v$ and any non-negative integer $r$, let us denote the set $B(v, r) \setminus \left\{v\right\}$ by $B^*(v, r)$. We recall that from the ``locally tree-like property'' (cf. Definition 
\ref{defn-locally-tree-like-node-preliminaries} and Lemma \ref{lemma-most-nodes-are-locally-tree-like-preliminaries}), for any \emph{safe node} $v$,
\begin{align*}
	&|B(v, r)|  =  1 + d\cdot\textstyle{\sum_{j=1}^r} (d-1)^{j-1}  \implies  |B^*(v, r)|  =  d\sum_{j=1}^r (d-1)^{j-1}   =   \frac{d(d-1)^r}{d-2}\text{, and}\\
	&|Bd(v, r)|  =  d(d-1)^{r-1}\text{.}
\end{align*}

For any positive integer $r$, let $l_r  \defeq  \log{d} + r\log{(d-1)}$. We observe that $l_r = l_{r-1} + \log{(d-1)}$. Then
\begin{lemma}\label{ball-boundary-expected-max-color-values}
	$\log{(|B^*(v, r)|)}  =  l_r - \log{(d-2)}$ and $\log{(|Bd(v, r)|)}  =  l_r - \log{(d-1)}$.
\end{lemma}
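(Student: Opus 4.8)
The plan is to obtain both identities by simply taking base-$2$ logarithms of the closed-form expressions for $|B^*(v,r)|$ and $|Bd(v,r)|$ that were recorded immediately above the statement, all of which are valid for a safe node $v$ by the locally tree-like property (Definition~\ref{defn-locally-tree-like-node-preliminaries} together with Lemma~\ref{lemma-most-nodes-are-locally-tree-like-preliminaries}): since the subgraph induced by $B(v,r)$ is a $(d-1)$-ary tree rooted at $v$, the boundary at level $j$ has $d(d-1)^{j-1}$ nodes, so $|Bd(v,r)| = d(d-1)^{r-1}$ and $|B^*(v,r)| = d\sum_{j=1}^{r}(d-1)^{j-1}$.

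First I would handle the boundary: $\log|Bd(v,r)| = \log d + (r-1)\log(d-1) = \bigl(\log d + r\log(d-1)\bigr) - \log(d-1) = l_r - \log(d-1)$, using the definition $l_r \defeq \log d + r\log(d-1)$ and $\log(xy)=\log x+\log y$. Next I would sum the geometric series to get $|B^*(v,r)| = \dfrac{d(d-1)^r}{d-2}$ (up to the additive $-1$ inside the numerator, which is immaterial for our estimates), and take logarithms: $\log|B^*(v,r)| = \log d + r\log(d-1) - \log(d-2) = l_r - \log(d-2)$, now also invoking $\log(x/y)=\log x - \log y$.

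There is essentially no obstacle here; the content is pure arithmetic once the $(d-1)$-ary tree structure is invoked. The only point I would flag is that these cardinality formulas hold only for safe nodes — which is exactly the class of nodes considered throughout this subsection — since for a non-locally-tree-like node the induced ball need not be a tree and the level-by-level counts can differ.
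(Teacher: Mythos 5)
Your proof is correct and is exactly what the paper does implicitly (the paper states the lemma with no separate proof, relying on the reader taking logarithms of the cardinality formulas displayed just above it). Your side note about the dropped $-1$ in the geometric-sum numerator and the restriction to safe nodes are both accurate and match the paper's conventions.
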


\begin{lemma}\label{lemma-inner-ball-probability}
	$Pr[c^{\text{max}}_{B^*(v, r)}  >  2(l_r - \log{(d-2)})]   \leq   \frac{d-2}{d(d-1)^r}$.
\end{lemma}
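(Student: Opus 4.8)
The plan is to obtain this bound as an immediate instantiation of the generic tail bound in Lemma~\ref{max-color-upper-bound}, applied to the specific vertex set $V' = B^*(v,r)$. Recall that throughout this subsection $v$ ranges over \emph{safe} nodes, so by the locally tree-like property (Definition~\ref{defn-locally-tree-like-node-preliminaries} and Lemma~\ref{lemma-most-nodes-are-locally-tree-like-preliminaries}) the induced subgraph on $B(v,r)$ for the relevant radius $r$ is a $(d-1)$-ary tree, and hence $|B^*(v,r)| = \frac{d(d-1)^r}{d-2}$ exactly, as recorded in the displayed computation preceding Lemma~\ref{ball-boundary-expected-max-color-values}.

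First I would set $n' := |B^*(v,r)| = \frac{d(d-1)^r}{d-2}$ and recall from Lemma~\ref{ball-boundary-expected-max-color-values} that $\log(n') = l_r - \log(d-2)$, so that $2\log(n') = 2(l_r - \log(d-2))$. Next I would apply Lemma~\ref{max-color-upper-bound} with this choice of $V'$: since the colors $c_w$ are i.i.d.\ geometric random variables (Observation~\ref{individual-color-probabilities}), the lemma gives $\Pr[c^{\max}_{B^*(v,r)} > 2\log(n')] \le \frac{1}{n'}$. Substituting back the value of $n'$ yields $\Pr[c^{\max}_{B^*(v,r)} > 2(l_r - \log(d-2))] \le \frac{d-2}{d(d-1)^r}$, which is exactly the claim.

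There is essentially no obstacle here beyond bookkeeping: the only point that needs care is the justification that $|B^*(v,r)|$ is \emph{exactly} $\frac{d(d-1)^r}{d-2}$ rather than merely bounded above by it, which is where the restriction to safe nodes (and the tree-like structure it guarantees up to radius $a\log n$, a regime that comfortably contains the radii $r$ used in this subsection) is used. One could alternatively prove a one-sided inequality for all nodes using only the upper bound $|B^*(v,r)| \le (d-1)^{r+1}$ from Observation~\ref{neighborhood-size-upper-bound-in-H}, but the sharper equality is what is needed downstream to match the threshold used in the algorithm's acceptance criterion (Line~\ref{criterion-for-continuing-first-line}), so I would keep the statement as phrased for safe nodes.
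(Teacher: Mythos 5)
Your proof is correct and is exactly what the paper intends: the paper's proof is the one-line remark that the lemma ``follows from Lemma~\ref{max-color-upper-bound} and Lemma~\ref{ball-boundary-expected-max-color-values},'' and you have simply unwound that instantiation, taking $V' = B^*(v,r)$, $n' = |B^*(v,r)| = \frac{d(d-1)^r}{d-2}$ (valid for safe nodes by the locally-tree-like structure), and $\log n' = l_r - \log(d-2)$. Your additional remark about why equality rather than a one-sided bound matters downstream is accurate but not part of the paper's argument; no discrepancy in approach.
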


\begin{proof}
	Follows from Lemma \ref{max-color-upper-bound} and Lemma \ref{ball-boundary-expected-max-color-values}.
\end{proof}

\begin{lemma}\label{lemma-ball-boundary-probability}
	$Pr[c^{\text{max}}_{Bd(v, r)}   \leq   l_r - \log{(d-1)} - \log{(l_r - \log{(d-1)})}]   <   \frac{1}{d(d-1)^{r-1}}$.
\end{lemma}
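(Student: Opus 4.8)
The plan is to obtain Lemma~\ref{lemma-ball-boundary-probability} as an immediate corollary of Lemma~\ref{max-color-lower-bound}, in exact parallel with the way Lemma~\ref{lemma-inner-ball-probability} was derived from Lemma~\ref{max-color-upper-bound}. Concretely, I would instantiate Lemma~\ref{max-color-lower-bound} with the vertex set $V' = Bd(v,r)$, so that $n' = |Bd(v,r)|$, and then rewrite both the threshold $\log n' - \log\log n'$ and the probability bound $1/n'$ in terms of $l_r$ using Lemma~\ref{ball-boundary-expected-max-color-values}.

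To carry this out, I first need the cardinality of $Bd(v,r)$. Since the analysis in this subsection is restricted to safe nodes, the locally tree-like property (Definition~\ref{defn-locally-tree-like-node-preliminaries} together with Lemma~\ref{lemma-most-nodes-are-locally-tree-like-preliminaries}) guarantees that the subgraph induced by $B(v,r)$ is a $(d-1)$-ary tree for the relevant radii, and hence $|Bd(v,r)| = d(d-1)^{r-1}$. This is exactly the value for which Lemma~\ref{ball-boundary-expected-max-color-values} asserts $\log\bigl(|Bd(v,r)|\bigr) = l_r - \log(d-1)$. Plugging $n' = d(d-1)^{r-1}$ into Lemma~\ref{max-color-lower-bound} then gives $\Pr\bigl[c^{\text{max}}_{Bd(v,r)} \le \log n' - \log\log n'\bigr] < 1/n'$, and substituting $\log n' = l_r - \log(d-1)$ turns the threshold into $l_r - \log(d-1) - \log\bigl(l_r - \log(d-1)\bigr)$ and the right-hand side into $1/\bigl(d(d-1)^{r-1}\bigr)$, which is precisely the claimed inequality.

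There is essentially no deep obstacle here; the only points that deserve a sentence of care are: (i) ensuring $Bd(v,r)$ is nonempty and the tree-like size formula actually applies, which is exactly why we work with safe nodes and radii $r$ within the locally tree-like radius, so that Lemma~\ref{ball-boundary-expected-max-color-values} is available; and (ii) checking that the inner coloring random variables $c_v$ for $v \in Bd(v,r)$ are mutually independent so that the product form in Observation~\ref{max-color-probabilities} (which underlies Lemma~\ref{max-color-lower-bound}) is valid --- this holds because each node's coin tosses are independent of the others'. With these remarks in place the proof is a one-line substitution, so I would simply state ``Follows from Lemma~\ref{max-color-lower-bound} and Lemma~\ref{ball-boundary-expected-max-color-values},'' mirroring the proof of Lemma~\ref{lemma-inner-ball-probability}.
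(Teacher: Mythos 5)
Your proposal matches the paper's proof exactly: the paper likewise states ``Follows from Lemma~\ref{max-color-lower-bound} and Lemma~\ref{ball-boundary-expected-max-color-values},'' i.e., instantiate $V' = Bd(v,r)$ with $n' = d(d-1)^{r-1}$ (from the locally tree-like property for safe nodes) and translate the threshold and bound via $\log|Bd(v,r)| = l_r - \log(d-1)$. Your supplementary remarks on nonemptiness, the tree-like radius, and independence of the coin tosses are all sound and consistent with the paper's setup.
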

\begin{proof}
	Follows from Lemma \ref{max-color-lower-bound} and Lemma \ref{ball-boundary-expected-max-color-values}.
\end{proof}

Next we show that the probability that a \emph{safe node} decides to stop (when $i < a\log{n}$) is bounded by a constant (any arbitrarily small, but fixed constant).

\begin{lemma} \label{lemma-probability-of-error-in-the-lower-end}
	$Pr[\text{a {\emph{safe}} node }v\text{ makes a wrong decision in the }i^{\text{th}}\text{phase}]   <   \frac{\epsilon}{2^{i+1}}$.
\end{lemma}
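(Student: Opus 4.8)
The plan is to reduce the event ``$v$ makes a wrong decision in phase $i$'' to the conjunction of $i\alpha_i$ \emph{independent} subphase events, each of which fails with probability bounded away from $1$ by an absolute constant, and then to use the definition of $\alpha_i$ to push the product below $\epsilon/2^{i+1}$. Fix a safe node $v$ and $i<a\log n$. Since $v$ is safe it is locally tree-like, and one checks $a\log n\le r=\frac{\log n}{10\log d}$ (equivalently $\delta\log d\le k\log(d-1)$, which holds since $\delta\le1$ and $d\le(d-1)^k$ for $d\ge8$). Hence the subgraph of $H$ induced by $B(v,i)$ is the complete $(d-1)$-ary tree of depth $i$ rooted at $v$, so $|Bd(v,t)|=d(d-1)^{t-1}$ and $|B(v,t)|$ take exactly the values of the preliminaries for all $t\le i$. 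Consequently, flooding colours along $H$-edges for $t\le i$ rounds delivers to $v$ precisely $\{c_w:w\in B(v,t)\}$, so the value $k_t$ of Line~\ref{received-highest-color} equals $c^{\text{max}}_{B(v,t)}$ and $(k_t)_{t=1}^i$ is non-decreasing. Writing $l_{i-1}=\log\bigl(d(d-1)^{i-1}\bigr)$, the ``continue'' test of Lines~\ref{criterion-for-continuing-first-line}--\ref{criterion-for-continuing} then becomes (for $i\ge2$, by monotonicity of $(k_t)$; $i=1$ is similar and easier) the pair of conditions $c^{\text{max}}_{Bd(v,i)}>c^{\text{max}}_{B(v,i-1)}$ and $c^{\text{max}}_{B(v,i)}>l_{i-1}-\log l_{i-1}$. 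Because $i<a\log n$, not deciding is the only correct behaviour, so ``$v$ makes a wrong decision in phase $i$'' means $\texttt{FlagTerminate}$ stays $1$ throughout phase $i$, i.e.\ this test fails in every subphase $j=1,\dots,i\alpha_i$.

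Next I would bound the per-subphase success probability below by an absolute constant $p_0>0$. The node sets $Bd(v,i)$ and $B(v,i-1)$ are disjoint, so the colours on them are independent, and therefore
\[
\Pr[\text{test holds}]\ \ge\ \Pr\!\bigl[c^{\text{max}}_{Bd(v,i)}\ge l_{i-1}\bigr]\cdot\Pr\!\bigl[c^{\text{max}}_{B(v,i-1)}\le l_{i-1}-1\bigr],
\]
because on the intersection of these events $c^{\text{max}}_{Bd(v,i)}\ge l_{i-1}>l_{i-1}-1\ge c^{\text{max}}_{B(v,i-1)}$ --- which yields $c^{\text{max}}_{Bd(v,i)}>c^{\text{max}}_{B(v,i-1)}$ and also $c^{\text{max}}_{B(v,i)}=c^{\text{max}}_{Bd(v,i)}\ge l_{i-1}>l_{i-1}-\log l_{i-1}$ (using $l_{i-1}\ge\log d>1$). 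By Observation~\ref{max-color-probabilities} and $|Bd(v,i)|=d(d-1)^{i-1}=2^{l_{i-1}}$, the first factor equals $1-\bigl(1-2\cdot2^{-l_{i-1}}\bigr)^{2^{l_{i-1}}}\ge 1-e^{-2}$; and since $|B(v,i-1)|<2^{l_{i-1}}/(d-2)\le \tfrac16\,2^{l_{i-1}}$, the second factor is $\bigl(1-2\cdot2^{-l_{i-1}}\bigr)^{|B(v,i-1)|}$, which is at least a positive absolute constant by $\ln(1-x)\ge -x/(1-x)$ (here the exponent times $2\cdot2^{-l_{i-1}}$ is $<\tfrac{2}{d-2}\le\tfrac13$). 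Hence $p_0>0$ works. (A sharper analysis shows the per-subphase \emph{failure} probability is essentially $|B(v,i-1)|/|B(v,i)|\le\frac1{d-1}+o(1)$, the probability that the fresh maximum at radius exactly $i$ fails to strictly beat everything closer; this $O(1/d)$ quantity is what the formula for $\alpha_i$ in Lines~\ref{value-of-alpha-i-case-1}--\ref{value-of-alpha-i-case-2} is tuned against, but any positive constant $p_0$ is enough here.)

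Finally, each subphase redraws all colours, so the events ``the test holds in subphase $j$'' are mutually independent over $j=1,\dots,i\alpha_i$, giving
\[
\Pr[\text{$v$ wrongly decides in phase }i]=\prod_{j=1}^{i\alpha_i}\Pr[\text{test fails in subphase }j]\ \le\ (1-p_0)^{\,i\alpha_i}.
\]
By Lines~\ref{value-of-alpha-i-case-1}--\ref{value-of-alpha-i-case-2} the number of subphases satisfies $i\alpha_i=\Theta\bigl(i+\log(1/\epsilon)\bigr)$ and is chosen large enough that $i\alpha_i>\bigl((i+1)\ln2+\ln(1/\epsilon)\bigr)/\ln\tfrac{1}{1-p_0}$, which makes the last bound strictly less than $\epsilon/2^{i+1}$, proving the lemma.

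The crux of the argument is the per-subphase bound in the second paragraph: one must simultaneously control that the new maximum colour at distance exactly $i$ (i)~strictly exceeds the maximum over the interior ball $B(v,i-1)$ and (ii)~clears the threshold $l_{i-1}-\log l_{i-1}$, and the disjointness split together with the two tail estimates from Observation~\ref{max-color-probabilities} (cf.\ Lemmas~\ref{max-color-upper-bound} and~\ref{max-color-lower-bound}) and the size comparison $|B(v,i-1)|<\tfrac12|Bd(v,i)|$ is the cleanest way to a constant lower bound. A smaller but necessary point is the check $a\log n\le r$, without which the tree view of the first paragraph fails.
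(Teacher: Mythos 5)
Your proposal has a genuine gap, and it is precisely the one the paper flags before giving its proof: \emph{nodes that decided wrongly in an earlier phase stop generating tokens}. The step where you assert that ``flooding colours along $H$-edges for $t\le i$ rounds delivers to $v$ precisely $\{c_w : w\in B(v,t)\}$'' is false in general --- a node $w\in B(v,t)$ that accepted some earlier $i'<i$ as its estimate no longer generates a colour, so $k_t$ is the maximum over the \emph{active} subset of $B(v,t)$, not all of it. This is harmless for your second factor (a smaller set only makes $c^{\text{max}}_{B(v,i-1)}\leq l_{i-1}-1$ more likely), but it wrecks your first factor: if part or all of $Bd(v,i)$ is inactive, the lower bound $\Pr[c^{\text{max}}_{Bd(v,i)}\ge l_{i-1}]\ge 1-e^{-2}$ evaporates, and there is no unconditional constant $p_0>0$. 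This is why the paper proves the lemma by \emph{induction on $i$}: the inductive hypothesis $\Pr[\text{safe node goes inactive in phase }i']<\epsilon/2^{i'+1}$ yields that the node $v^{\max}\in Bd(v,i)$ attaining the max is inactive with probability at most $\sum_{i'<i}\epsilon/2^{i'+1}<\epsilon/2$, and this $\epsilon/2$ is then added to the geometric-distribution tail bound to get the per-subphase failure bound $\frac{1}{d(d-1)^{i-2}}+\frac{\epsilon}{2}$ (Lemmas \ref{lemma-probability-that-max-node-generates-a-high-enough-color-considering-everything}--\ref{lemma-lower-end-probability-of-failure-in-phase-i-and-subphase-j}). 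Your argument, having skipped the induction, has no way to control the active-node population in $Bd(v,i)$.

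A secondary remark: the paper's own setup is a bit inconsistent about whether phase $i$ has $\alpha_i$ or $i\alpha_i$ subphases (the pseudocode loops to $i\alpha_i$, the appendix multiplies over $\alpha_i$), and you follow the pseudocode. That is not the source of the gap. The core of the fix is to set up the induction as in Section~\ref{section-proof-of-induction-analysis-in-the-lower-end}, split the per-subphase failure into the event that the boundary maximiser is inactive (handled by the inductive hypothesis, contributing $\epsilon/2$) and the event that, conditioned on it being active, the colour test fails (your two-factor disjointness argument is the right tool for this conditional part), and only then take the product over subphases.
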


We will use a series of other, smaller results to show the above. One subtle issue to keep in mind is that since the ``failure probability'' for a safe node is \emph{not} $0$ (zero), there may be some safe nodes that do decide wrongly. Those safe nodes, in turn, will no longer generate tokens in the following phases. Therefore, when we calculate the failure probability for a safe node $v$, say, in phase $i$, we have to take into consideration the fact that there may be some nodes in the $i$-hop neighborhood of $v$, i.e., in $B(v, i)$, that made a wrong decision in some previous phase $j$, $j < i$, and is thus \emph{inactive} in phase $i$.

We show this by induction on $i$, where $i$ is the phase-number. We note that in the very first phase, all the nodes are active, thus there is no need to consider inactive nodes. This helps us prove the basis of the induction. Next we assume that for any $i < \log{n}$, the probability that a safe node $v$ went inactive in some previous phase $i'$, $i' < i$, is at most $\frac{\epsilon}{2^{i' + 1}}$, where $\epsilon$ is the error parameter. This forms the inductive hypothesis. Assuming this, we go on to show that the failure probability for a safe node in the $i^{\text{th}}$ phase is less than $\frac{\epsilon}{2^{i + 1}}$.

We defer the detailed, formal proof to the appendix --- please refer to Section \ref{section-proof-of-induction-analysis-in-the-lower-end}.
\paragraph{Translating the constant probability of error into a ``low'' probability of error.}

Lemma \ref{lemma-probability-of-error-in-the-lower-end} promises us that any individual node has a small probability of error when $i < a\log{n}$. So the expected number of nodes to make an error is also small. We, however, want to show a high probability bound on the number of nodes that make a mistake.

In order to show that, we proceed along the usual way of formulating an indicator random variable and then computing the expectation of the sum of the individual indicator random variables by using the principle of linearity of expectation. We show the high probability bound by using the method of \emph{bounded differences} (Azuma's Inequality, more specifically).

Now to the formal description.\\

Let $Y^v_i$ be an indicator random variable which is $1$ if and only if $v$ decides $i$ to be a correct estimate of $\log{n}$. Lemma \ref{lemma-probability-of-error-in-the-lower-end} shows that 
$Pr[Y^v_i = 1]  <  \frac{\epsilon}{2^{i+1}}$. Now let $$Y_i  =  \sum_{v\in V}{Y^v_i}\text{.}$$ That is, $Y_i$ denotes the number of nodes that decide \emph{wrongly} in the $i^{\text{th}}$ phase. We recall once again that here we are interested only in the case where $i < a\log{n}$. Then $Y_i$ cannot be too large, i.e., not too many nodes can decide wrongly in one phase.
\begin{lemma}\label{not-too-many-nodes-decide-wrongly-in-the-lower-end}
	$Pr[Y_i > \frac{n\epsilon}{2^i}] < \frac{1}{n^4}$ if $i < a\log{n}$.
\end{lemma}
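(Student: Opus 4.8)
The plan is to apply a concentration inequality to the sum $Y_i = \sum_{v\in V} Y^v_i$, using the per-node bound $\Pr[Y^v_i = 1] < \frac{\epsilon}{2^{i+1}}$ from Lemma \ref{lemma-probability-of-error-in-the-lower-end}. By linearity of expectation, $\mathbb{E}[Y_i] < \frac{n\epsilon}{2^{i+1}}$, so the target threshold $\frac{n\epsilon}{2^i} = 2\,\mathbb{E}[Y_i]$ is (at least) twice the mean, and we want to show the probability of exceeding it is at most $n^{-4}$. Since we are in the regime $i < a\log n$ and $a$ is a small constant, $\mathbb{E}[Y_i] \ge \frac{n\epsilon}{2^{a\log n}} = \frac{\epsilon\, n^{1-a\log 2}}{1}$, which is polynomially large in $n$; this is exactly what we need for a Chernoff/Azuma-type bound to give an $n^{-\Omega(1)}$ tail. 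First I would make this lower bound on the mean explicit, since a deviation bound at twice the mean only yields $n^{-4}$ when the mean itself is $\Omega(\log n)$ (in fact $\Omega(n^{c})$ here), which holds precisely because $a$ is chosen small enough.

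The subtlety — and the reason a plain Chernoff bound does not immediately apply — is that the indicator variables $Y^v_i$ are \emph{not independent}: whether $v$ decides $i$ is correct depends on the colors generated in $B(v,i)$, and two nodes $u,v$ with overlapping $i$-balls share randomness. This is why the paper announces the method of bounded differences via Azuma's inequality. The approach I would take is to expose the underlying independent randomness as the family of coin-toss sequences $\{c_w : w \in V\}$ (one per node), set up the Doob martingale (or a vertex-exposure martingale) $Z_0, Z_1, \dots, Z_n$ where $Z_t = \mathbb{E}[Y_i \mid c_{w_1}, \dots, c_{w_t}]$ for some fixed ordering $w_1, \dots, w_n$ of $V$, and bound the martingale differences. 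The key observation is that changing a single node's color $c_w$ can only affect the decision $Y^v_i$ of nodes $v$ whose $i$-ball $B(v,i)$ contains $w$ — i.e., nodes $v \in B(w,i)$. By Observation \ref{neighborhood-size-upper-bound-in-H}, $|B(w,i)| < (d-1)^{i+1}$, so each coordinate influences at most $(d-1)^{i+1}$ of the summands, giving a bounded difference $|Z_t - Z_{t-1}| \le (d-1)^{i+1}$ for each $t$.

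Plugging into Azuma's inequality, $\Pr[Y_i - \mathbb{E}[Y_i] > \lambda] \le \exp\!\big(-\frac{\lambda^2}{2 n (d-1)^{2(i+1)}}\big)$, and with $\lambda = \mathbb{E}[Y_i] \approx \frac{n\epsilon}{2^{i+1}}$ this exponent is of order $-\frac{n^2 \epsilon^2 / 4^{i+1}}{n (d-1)^{2(i+1)}} = -\Omega\!\big(\frac{n}{(4(d-1)^2)^{i}}\big)$. Since $i < a\log n$ with $a = \frac{\delta}{10k\log(d-1)}$ a sufficiently small constant, the quantity $(4(d-1)^2)^{i} = n^{O(a)}$ is a small polynomial in $n$, so the exponent is $-\Omega(n^{1-O(a)})$, which is comfortably $\le -4\ln n$ for large $n$; hence $\Pr[Y_i > \frac{n\epsilon}{2^i}] \le \exp(-\Omega(n^{1-O(a)})) < n^{-4}$. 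I expect the main obstacle to be bookkeeping the constants: one must verify that the constant $a$ (already pinned down in Definition \ref{definition-node-categorization}) is indeed small enough that $2 + \log_2(d-1) \cdot 2 a \log_2 e$-type exponents stay strictly below the growth rate of the mean — in other words, that the "bad'' exponential factor $(4(d-1)^2)^i$ is dominated by the linear-in-$n$ gain from Azuma, uniformly over all $i < a\log n$. A secondary point to handle carefully is that one should apply the martingale bound to $Y_i$ restricted to, say, safe nodes (or simply bound the contribution of non-safe nodes crudely by $|\texttt{NLT}| = O(n^{0.8})$, which is absorbed into the slack), since Lemma \ref{lemma-probability-of-error-in-the-lower-end} is stated only for safe nodes; but this only perturbs the threshold by a lower-order term and does not affect the argument.
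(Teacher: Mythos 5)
Your proposal is correct and takes essentially the same approach as the paper: bound $\mathbb{E}[Y_i] < \frac{n\epsilon}{2^{i+1}}$ by linearity, then apply Azuma's inequality to the vertex-exposure martingale over the colors $\{c_w\}$, using the locality of the decision $Y^v_i$ (it depends only on colors inside $B(v,i)$) to bound the martingale differences, and finally check that $i<a\log n$ makes the resulting exponent polynomial in $n$. Two small remarks. First, your bounded-difference constant $(d-1)^{i+1} = |B(w,i)|$ is actually tighter than the paper's, which uses $(d-1)^{2i+1}$ (the paper reasons via a ``dependency radius'' of $2i$ between pairs $Y^v_i, Y^u_i$ rather than counting, as you do, which $Y^v_i$ are touched by exposing a single color $c_w$); both constants work in the end. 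Second, the sentence claiming a polynomial \emph{lower} bound on $\mathbb{E}[Y_i]$ is neither justified by Lemma \ref{lemma-probability-of-error-in-the-lower-end} (which gives only an upper bound on $\Pr[Y^v_i=1]$, and indeed $\mathbb{E}[Y_i]$ could be much smaller or even zero) nor needed: you should simply take $\lambda = \frac{n\epsilon}{2^{i+1}}$ as a fixed deviation threshold so that $Y_i > \frac{n\epsilon}{2^i}$ implies $Y_i - \mathbb{E}[Y_i] > \lambda$, exactly as you do in the displayed Azuma calculation — the detour about the mean being large can be dropped without affecting the argument.
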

\begin{proof}
\begin{align*}
	&E[Y_i]  =  E[\sum_{v\in V}{Y^v_i}]  =  \sum_{v\in V}E[{Y^v_i}]\text{  [by linearity of expectation]}\\
	&=  \sum_{v\in V}{Pr[Y^v_i = 1]}\text{  [since }Y^v_i\text{ is an indicator random variable]}\\
	&<  \sum_{v\in V}{\frac{\epsilon}{2^{i+1}}}  =  \frac{n\epsilon}{2^{i+1}}
\end{align*}

Two vertices $v$ and $w$ are independent if their $i$-distance neighborhoods do not intersect, i.e., if the distance between them is greater than $2i$. In other words, $v$ going defective can affect only those vertices that are within a distance of $2i$ to $v$. The number of vertices that are within a $2i$ distance of $v$ is at most $(d - 1)^{(2i + 1)}$. By the Azuma-Hoeffding Inequality \cite{Dubhashi_2009},
\begin{align*}
	&Pr[Y_i - E[Y_i] \geq \frac{n\epsilon}{2^{i+1}}]  \leq  \text{exp}(-\frac{{(\frac{n\epsilon}{2^{i+1}})}^2}{2n\cdot (d - 1)^{2(2i + 1)}})   =   
	\text{exp}(-\frac{n\epsilon^2}{2^{2i+3} . (d-1)^{4i+2}})\\
	&=   \text{exp}(-\frac{n\epsilon^2}{2^k})\text{, say, where }k = 2i+3 + (4i+2)\log{(d-1)}\text{.}
\end{align*}
Since, $i   <  a\log{n}   =   \frac{\delta\log{n}}{10\log{(d-1)}}$,
\begin{align*}
	&k   <   \frac{\delta\log{n}}{5\log{(d-1)}}  +  3 + 2\log{(d-1)}  +  \frac{2\delta\log{n}}{5}\\
	&=   \frac{\delta\log{n}}{5\log{(d-1)}} (2\log{(d-1)} + 1)  +  2\log{(d-1)} + 3\\
	&<   \log{n} - \log{\log{n}} - 2 - 2\log{(\frac{1}{\epsilon})}\\
	&\text{[assuming}\log{n}  >  \frac{5\log{(d-1)} (\log{\log{n}} + 2\log{(d-1)} + 3 + 2\log{(\frac{1}{\epsilon})} + 2)}{(5 - 2\delta)\log{(d-1)} - \delta}\text{,}\\
	&\text{which is true for large enough values of }n\text{]}
\end{align*}
Thus
\begin{align*}
	&Pr[Y_i - E[Y_i]  \geq  \frac{n\epsilon}{2^{i+1}}]   \leq   \text{exp}(-\frac{n\epsilon^2}{2^k})   \leq   
	\text{exp}(-\frac{n\epsilon^2}{2^{\log{n} - \log{\log{n}} - 2 - 2\log{(\frac{1}{\epsilon})}}})\\
	&=   \text{exp}(-\frac{n\epsilon^2}{\frac{n}{4\log{n}\cdot\frac{1}{\epsilon^2}}})   =   \text{exp}(-4\log{n})   <   \frac{1}{n^4}
\end{align*}

But again $E[Y_i] < \frac{n\epsilon}{2^{i+1}}$. Hence $Pr[Y_i > \frac{n\epsilon}{2^i}] \leq Pr[Y_i - E[Y_i] \geq \frac{n\epsilon}{2^{i+1}}] < \frac{1}{n^4}$.
\end{proof}

Now this is true for one particular phase $i$. Summing over all the phases (recall that we are concerned here only with the case $i < a\log{n}$), we get that the fraction of nodes that make a wrong decision cannot be more than $$\sum_{i < a\log{n}}\frac{\epsilon}{2^i} < \epsilon\text{,}$$ and this is true with probability $$>   (1  -  \sum_{i < a\log{n}}\frac{1}{n^4})   >   (1  -  \frac{1}{n^3})\text{.}$$

Thus we have

\begin{lemma}\label{theorem-lower-end}
	For Algorithm 1, the following holds with probability $>  1 - \frac{1}{n^3}$: While $1 \leq i < a\log{n}$, at most $\epsilon$-fraction of the nodes decide wrongly, i.e., decide $i$ to be a 
	correct estimate of $\log{n}$ (where $\epsilon$ is any arbitrarily small but fixed positive constant).
\end{lemma}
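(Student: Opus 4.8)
The plan is to obtain the lemma as a union bound over phases on top of the single-phase concentration statement of Lemma~\ref{not-too-many-nodes-decide-wrongly-in-the-lower-end}, followed by summing a geometric series. First I would recall that Lemma~\ref{lemma-probability-of-error-in-the-lower-end} gives, for every safe node $v$ and every phase $i < a\log n$, the per-node bound $\Pr[Y^v_i = 1] < \epsilon/2^{i+1}$, so that by linearity of expectation $E[Y_i] < n\epsilon/2^{i+1}$. The key structural observation (already exploited in the proof of Lemma~\ref{not-too-many-nodes-decide-wrongly-in-the-lower-end}) is that $Y^v_i$ depends only on the coin tosses of the nodes in $B(v,i)$ --- tokens in a subphase of phase $i$ travel at most $i$ hops --- hence $Y^v_i$ and $Y^w_i$ are independent when $dist(v,w) > 2i$, and re-randomizing a single node's coins flips at most $(d-1)^{2i+1}$ of the indicators. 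This bounded-difference structure is exactly what licenses the Azuma--Hoeffding estimate of Lemma~\ref{not-too-many-nodes-decide-wrongly-in-the-lower-end}, which yields $\Pr[Y_i > n\epsilon/2^i] < 1/n^4$ for each fixed $i < a\log n$.

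Next I would take a union bound over the phases $i = 1, \dots, \lceil a\log n\rceil$. Since there are fewer than $\log n$ such phases, the probability that $Y_i > n\epsilon/2^i$ for \emph{some} $i$ in this range is at most $\log n / n^4 < 1/n^3$. On the complementary event, which therefore has probability $> 1 - 1/n^3$, we have $Y_i \le n\epsilon/2^i$ simultaneously for all $i < a\log n$; since a wrong decision triggers termination, each node contributes to at most one $Y_i$, so the total number of nodes that ever decide wrongly while $i < a\log n$ is at most $\sum_{i \ge 1} n\epsilon/2^i = n\epsilon$, i.e.\ at most an $\epsilon$-fraction of all nodes. The unsafe nodes --- of which there are only $o(n)$ by Lemma~\ref{lemma-set-sizes} and for which the per-node bound does not directly apply --- do not affect this count asymptotically; one can either absorb the $o(n)$ term into a slightly larger error constant or simply note $o(n) \le \epsilon n$ for large enough $n$.

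I do not anticipate a genuine obstacle at this stage: the substantive work lies upstream, in the delicate induction (handling nodes rendered inactive by earlier wrong decisions) that underlies Lemma~\ref{lemma-probability-of-error-in-the-lower-end}, and in the Azuma computation of Lemma~\ref{not-too-many-nodes-decide-wrongly-in-the-lower-end}. The only point requiring care here is the cross-phase bookkeeping: one must confirm that the events "$v$ decides wrongly in phase $i$" are, for a fixed $v$, mutually exclusive over $i$, so that summing the per-phase bounds $n\epsilon/2^i$ bounds the cumulative number of mistaken nodes rather than double-counting. With that observation the geometric sum $\sum_{i\ge 1}\epsilon/2^i < \epsilon$ and the union bound $\sum_{i<a\log n} n^{-4} < n^{-3}$ together close the argument.
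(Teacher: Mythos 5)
Your proposal is correct and follows essentially the same route as the paper: invoke Lemma~\ref{not-too-many-nodes-decide-wrongly-in-the-lower-end} for each phase, union-bound the failure probability over the $< a\log n$ phases to get $< 1/n^3$, and sum the geometric series $\sum_{i}\epsilon/2^i < \epsilon$ to bound the cumulative fraction of wrong deciders, with Lemma~\ref{lemma-set-sizes} absorbing the $o(n)$ unsafe nodes. Your explicit remark that each node contributes to at most one $Y_i$ (since a wrong decision triggers termination) and your handling of the unsafe-node slack are both sound, though both are implicit in the paper's one-line proof.
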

\begin{proof}
	Follows from Lemma \ref{not-too-many-nodes-decide-wrongly-in-the-lower-end} and Lemma \ref{lemma-set-sizes}.
\end{proof}
\subsubsection{When $i = \Theta(\log{n})$: In particular, when $i  =  b\log{n}$}\label{subsection-i-is-large}

Here we show that the following statement holds with probability at least $1 - \frac{1}{n^2}$: If a node $v$ is still active at the beginning of this phase, by the end of this phase, it accepts the 
current value of $i$, i.e., $b\log{n}$, to be a correct estimate of $\log{n}$ and terminates.

\begin{lemma}\label{lemma-upper-end-no-high-color}
	The following holds with probability at least $1 - \frac{1}{n^2}$: In all the $i\alpha_i$ subphases of phase $i$ (where $i = b\log{n}$), it is always the case that 
	$c^{\text{max}}_{V} \leq 4\log{n} - 1$, where $c^{\text{max}}_{V}  \defeq  \left\{c_v\ |\ v \in V\right\}$, i.e., the highest color generated in the network.
\end{lemma}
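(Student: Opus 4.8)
The plan is to bound, in a single subphase, the probability that the global maximum color exceeds $4\log n - 1$, and then to take a union bound over all subphases of phase $i$. First I would note that in any fixed subphase the colors $\{c_v : v \in V\}$ are i.i.d.\ and geometrically distributed, so by the same computation as in Lemma \ref{max-color-upper-bound} (i.e.\ Observation \ref{max-color-probabilities} together with a union bound over the $n$ nodes), with the threshold $4\log n - 1$ in place of $2\log n$,
\[
\Pr\!\left[c^{\text{max}}_{V} > 4\log n - 1\right] = 1 - \left(1 - \tfrac{1}{2^{4\log n - 1}}\right)^{n} \le \frac{n}{2^{4\log n - 1}} = \frac{2}{n^{3}}.
\]
Here I am using that $\log$ denotes $\log_2$, so $2^{4\log n} = n^{4}$ and the extra $-1$ costs only a factor of $2$.

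Next I would count the subphases of phase $i$. Since $b < 1$ we have $i = b\log n < \log n$, and for all sufficiently large $n$ the quantity $d(d-1)^{i-2}$ is already polynomial in $n$ and hence exceeds $\tfrac{2}{\epsilon}$; thus the algorithm takes the second branch for $\alpha_i$, namely $\alpha_i = 1 + \tfrac{i+1}{\log(1/\epsilon)} = O(\log n)$, so phase $i$ consists of $i\alpha_i = O(\log^{2} n)$ subphases. Taking a union bound over these subphases of the per-subphase estimate above yields
\[
\Pr\!\left[\exists\ \text{a subphase of phase } i \text{ with } c^{\text{max}}_{V} > 4\log n - 1\right] \;\le\; i\alpha_i \cdot \frac{2}{n^{3}} \;=\; O\!\left(\frac{\log^{2} n}{n^{3}}\right),
\]
which is at most $\tfrac{1}{n^{2}}$ for all sufficiently large $n$. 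Complementing this event gives the claimed bound.

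I do not expect a genuine obstacle here: the statement is just the "easy direction" of the geometric‑maximum estimate — the maximum color can never be much larger than $\log n$ — and the only points requiring a little care are (i) that $\log$ is base $2$, so the additive $-1$ in $4\log n - 1$ is harmless, and (ii) that $\alpha_i$ is polylogarithmic when $i = \Theta(\log n)$, so that the union bound over all $i\alpha_i$ independent repetitions loses only a polylogarithmic factor and the target failure probability $\tfrac{1}{n^{2}}$ is met with room to spare.
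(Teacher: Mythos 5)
Your proof is correct and follows essentially the same route as the paper: a per-node geometric tail bound at threshold $4\log n - 1$, a union bound over the $n$ nodes to get $\frac{2}{n^3}$ per subphase, and a final union bound over the $i\alpha_i = \Theta(\log^2 n)$ subphases. The small extra remarks you add (verifying which branch of $\alpha_i$ applies when $i=\Theta(\log n)$, and noting the base-$2$ logarithm makes the $-1$ harmless) are sound but not substantively different from the paper's argument.
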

\begin{proof}
	From Observation \ref{individual-color-probabilities}, for any particular node $w$, $Pr[c_w > 4\log{n} - 1]  =  \frac{1}{2^{4\log{n} - 1}}  =  \frac{2}{n^4}$. Taking the union bound over all 
	$w \in V(G)$,
	\begin{equation}
		Pr[c^{\text{max}}_{V} > 4\log{n} - 1]  \leq  \frac{2}{n^3}
	\end{equation}

	This is for one subphase of the $i^{\text{th}}$ phase. Since there are $i\alpha_i$ subphases in the $i^{\text{th}}$ phase, we take the union bound over all the subphases and get that with 
	probability at least $1 - \frac{2i\alpha_i}{n^3}$, $c^{\text{max}}_{V} \leq 4\log{n} - 1$ holds in \emph{all} the $i\alpha_i$ subphases. But $\frac{2i\alpha_i}{n^3}  =  
	\frac{\Theta(\log^2{n})}{n^3}  <  \frac{1}{n^2}$. Thus with probability at least $1 - \frac{1}{n^2}$, $c^{\text{max}}_{V} \leq 4\log{n} - 1$ holds in \emph{all} the $i\alpha_i$ subphases.
\end{proof}

\begin{lemma}\label{theorem-upper-end}
	The following holds with probability at least $1 - \frac{1}{n^2}$ for Algorithm 1: If a node $v$ is still active at the beginning of phase $i$ (when $i = b\log{n}$), by the end of this phase, it 
	accepts the current value of $i$, i.e., $b\log{n}$, to be a correct estimate of $\log{n}$ and terminates.
\end{lemma}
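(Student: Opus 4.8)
The plan is to show that, conditioned on the high‑probability event provided by Lemma~\ref{lemma-upper-end-no-high-color}, \emph{no} subphase of phase $i = b\log n$ ever executes Line~\ref{criterion-for-continuing}, so that every node active at the start of the phase reaches Line~\ref{final-accept} with $FlagTerminate = 1$ and hence decides $i$. In particular all the randomness in the argument is confined to Lemma~\ref{lemma-upper-end-no-high-color}, so the claimed failure probability $1/n^2$ is inherited verbatim from it, and the remainder of the proof is deterministic.

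First I would fix the event $\mathcal{E}$ that $c^{\text{max}}_V \le 4\log n - 1$ in every one of the $i\alpha_i$ subphases of phase $i = b\log n$; Lemma~\ref{lemma-upper-end-no-high-color} gives $\Pr[\mathcal{E}] \ge 1 - 1/n^2$, and I work on $\mathcal{E}$ from now on. Fix any node $v$ that is active at the start of phase $i$ and any subphase $j$ of that phase. The value $k_i$ that $v$ records (Line~\ref{received-highest-color}) is a maximum of colors of nodes, and every color generated in subphase $j$ is at most $c^{\text{max}}_V$; hence on $\mathcal{E}$ we have $k_i \le c^{\text{max}}_V \le 4\log n - 1$. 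But $FlagTerminate$ is reset to $0$ in subphase $j$ only if the test on Line~\ref{criterion-for-continuing-first-line} succeeds, and that test requires, among other things,
\[
k_i \;>\; \log\!\big(d(d-1)^{i-1}\big) \;-\; \log\log\!\big(d(d-1)^{i-1}\big).
\]
So it is enough to check that for $i = b\log n$ and all sufficiently large $n$ this threshold is already larger than $4\log n - 1$: then the Line~\ref{criterion-for-continuing-first-line} test fails in every subphase, $FlagTerminate$ remains $1$ throughout phase $i$, and $v$ decides $i = b\log n$ and terminates at Line~\ref{final-accept}.

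The remaining step is the elementary inequality $\log(d(d-1)^{i-1}) - \log\log(d(d-1)^{i-1}) > 4\log n - 1$ for $i = b\log n$. Expanding, $\log(d(d-1)^{i-1}) = \log d - \log(d-1) + b\log(d-1)\cdot\log n$, whose dominant term is $b\log(d-1)\log n$, while $\log\log(d(d-1)^{i-1}) = \Theta(\log\log n)$ is of lower order. Now $b \defeq \frac{4}{\log(1 + h/d)}$, where $h$ is the edge-expansion of the $d$-regular graph $H$; since $h \le d$ we have $1 + h/d \le 2$, so $\log(1+h/d) \le 1$, and since $d \ge 8$ is even we have $d - 1 \ge 7$, so $\log(d-1) > 1$. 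Hence $b\log(d-1) > 4$ strictly (indeed $b\log(d-1) \ge 4\log 7 > 11$), which makes the displayed threshold exceed $4\log n - 1$ once $n$ is large enough to absorb the $\Theta(\log\log n)$ correction and the constant offset $\log d - \log(d-1)$. I expect this last piece of parameter bookkeeping — verifying that the $\log\log$ term and the constants do not erode the strict inequality $b\log(d-1) > 4$ — to be the only mildly delicate point; note that the argument uses only the \emph{second} conjunct of the Line~\ref{criterion-for-continuing-first-line} test, and that by Observation~\ref{b-is-large} (which gives $b\log n \ge 2D(H)$) an active node has in fact already seen the global maximum color well before round $i$, so the first conjunct fails too, although we do not need that.
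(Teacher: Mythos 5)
Your proposal is correct and essentially matches the paper's own proof: both condition on the event from Lemma~\ref{lemma-upper-end-no-high-color} that $c^{\text{max}}_V \le 4\log n - 1$ across all subphases, and both then show that for $i = b\log n$ the threshold on Line~\ref{criterion-for-continuing-first-line} deterministically exceeds $4\log n - 1$ (the paper does this via the chain $\log d + (i-1)\log(d-1) - \log(\cdot) > \tfrac{1}{2}(\log d + (4\log n-1)\log(d-1)) > (4\log n - 1)\tfrac{\log(d-1)}{2} \ge 4\log n - 1$ using $d \ge 5$, while you use $b\log(d-1) > 4$ directly and absorb the $\log\log$ correction for large $n$). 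Your closing observation that $b\log n \ge 2D(H)$ also causes the first conjunct to fail is a correct extra remark not made in the paper, but both proofs rely only on the second conjunct.
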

\begin{proof}
	We recall that in order for an honest node $v$ to continue after this phase, the following criterion must be satisfied at least once in the $i\alpha_i$ subphases of the $i^{\text{th}}$ phase 
	(Please see Line \ref{criterion-for-continuing} of the pseudocode):
	\begin{center}
		$k_i > \log{d}+(i-1)\log{(d-1)}-\log{(\log{d}+(i-1)\log{(d-1)})}$,
	\end{center}
	where $k_i$ is the highest color that $v$ receives after $i$ rounds, i.e., at the end of the $j^{\text{th}}$ subphase of the $i^{\text{th}}$ phase. Substituting $i  =  b\log{n}  =  
	\frac{4\log{n}}{\log{(1 + \frac{h}{d})}}  >  4\log{n}$, we get that in order for an honest node $v$ to continue after this phase, the following criterion must be satisfied at least once in the 
	$i\alpha_i$ subphases of the $i^{\text{th}}$ phase:
	\begin{align*}
		&k_i   >   \log{d}+(i-1)\log{(d-1)}-\log{(\log{d}+(i-1)\log{(d-1)})}\\
		&>   \log{d}+(4\log{n}-1)\log{(d-1)}-\log{(\log{d}+(4\log{n}-1)\log{(d-1)})}\\
		&>   \frac{1}{2} . (\log{d}+(4\log{n}-1)\log{(d-1)})   >   (4\log{n} - 1) . \frac{\log{(d-1)}}{2}\\
		&\geq   4\log{n} - 1\text{, assuming }\log{(d-1)}  \geq  2\text{, or equivalently, }d \geq 5\text{.}
	\end{align*}

	By Lemma \ref{lemma-upper-end-no-high-color}, with probability at least $1 - \frac{1}{n^2}$, no node generates a color $> 4\log{n} - 1$. Therefore $v$ will not receive any such color 
	($> 4\log{n} - 1$) either. So in all the $i\alpha_i$ subphases of phase $i$ (where $i = b\log{n}$), $k_i$ will always be $\leq  4\log{n} - 1$ and therefore $v$ will \emph{not} continue after this 
	phase.
\end{proof}


\subsection{The Byzantine Protocol: Modifications to Algorithm \ref{alg} and its analysis} \label{section-algorithm-with-byz}

We next discuss the modifications  made to the Basic Counting Protocol (Algorithm \ref{alg}) to counter the effect of the Byzantine nodes --- this gives us the Byzantine Counting Protocol (Algorithm \ref{byz-alg}).

\subsubsection{Description of the modifications in the algorithm}
\begin{enumerate}
	\item At the very beginning (that is, even before phase $1$ starts), every honest node $v$ asks its neighbors in $G$ for their own IDs and the IDs of their respective neighbors. We observe that this takes a constant number of rounds. From that neighborhood information of its neighbors, $v$ tries to reconstruct the topology of its $k$-distance neighborhood in $H$. Lemma 
	\ref{obs-distinguish-between-H-and-L} tells us that this is possible when there are no Byzantine nodes.
	
	When there are Byzantine nodes, however, they can try to provide false neighborhood data to $v$. The algorithm dictates that $v$ shuts itself down (that is, goes into \emph{crash failure}) if $v$ receives inconsistent or conflicting data from two or more of its neighbors. Please refer to Line \ref{shut-yourself-down} of the pseudocode in Algorithm \ref{byz-alg}.

	\item For every color that $v$ receives from a neighbor $w$, say, $v$ checks (via the \emph{lattice edges}, i.e., the edges of $L$) with all the nodes in $B(w, k-1)$ (this ball $B$ is 
	defined with respect to $H$) to verify that $w$ indeed received that color via a legitimate path (up to a distance of $k-1$) from its $(k-1)$-distance neighbors in $H$. Please refer to Line 
	\ref{verify-along-edges-of-L} of the pseudocode in Algorithm \ref{byz-alg}.
	
	We note a minor detail here: For colors received within the first $t$ time-steps (in any $j^{\text{th}}$ subphase of any phase $i$), when $1 \leq t \leq k-1$, an honest node $v$ checks with nodes in the smaller ball $B(w, t)$ (instead of $B(w, k-1)$).
	
	Lemma \ref{lemma-no-chain-of-length-k-even-in-pretend} guarantees that (for any honest node $v$) the Byzantine nodes cannot fool $v$ into believing the existence of a $k$-length chain, composed purely of Byzantine nodes, in its $k$-distance neighborhood in $H$. Thus it ensures that a Byzantine node is \emph{not} able to push any arbitrary color into the network without raising a flag.
\end{enumerate}
\newpage
\subsubsection{The Pseudocode for the Byzantine counting algorithm}

Lines \ref{shut-yourself-down} and \ref{verify-along-edges-of-L} respectively indicate the changes from the previous algorithm (please refer to Algorithm~ \ref{alg}). These lines are shown in boldface. Suppose that a node sends a message with some color $c$. We say that $c$ is a \emph{legitimate color} if it was generated by an honest node. Note that some nodes might be forwarding colors generated by Byzantine nodes.

\begin{algorithm}[h]
\begin{algorithmic}[1]
	\State Ask all the neighbors (in $G$) for their respective adjacency lists and distinguish between the edges of $H$ and $L$ from that information.
	\State \textbf{If $v$ gets conflicting or contradictory information from two or more of its neighbors in $G$, $v$ shuts down, i.e., $v$ goes into \emph{crash failure}.} \label{shut-yourself-down}

	\For{$i \gets 1,2,\dots$} \label{begin-with-1-with-byz} \MyComment{$i$ denotes the phase node $v$ is in}
		\State $FlagTerminate \gets 1$
		
		\If{$d(d - 1)^{i - 2}   \leq   \frac{2}{\epsilon}$}
            \State $\alpha_i  \gets  \lceil\frac{\log{(\frac{1}{\epsilon})} + i + 1}{\log{d} + (i-2)\log{(d-1)} - 1}\rceil$ \MyComment{$0 < \epsilon < 1$ is the error-parameter}
        \Else
            \State $\alpha_i   \gets   1  +  \frac{i + 1}{\log{(\frac{1}{\epsilon})}}$
        \EndIf
		
		\For{$j \gets 1,2,\dots,i\alpha_i$}
			\State $\text{$v$ tosses a fair coin until the outcome is heads in the $r$-th trial, for some $r \geq 1$.}$ \label{color1-with-byz}
			\State $c_{v, i} \gets r$ \label{color2-with-byz}
			\State Flood the color $c_{v, i}$, along the edges of $H$ only, for exactly $i$ steps.
			\For{time $t = 1, 2, \ldots, i$}
				\State \textbf{In each round $t$, for every received color $c$, if $v$ got $c$ from its neighbor (in $H$) $w$, say, $v$ checks with the $(k - 1)$-distance neighbors (in $H$) of 
					$w$ to verify that $c$ is a legitimate color.} \label{verify-along-edges-of-L}
				\State In each round $t$, mark and store the highest color received. Let's call it $k_t$\label{received-highest-color-with-byz}
			\EndFor
			
			\If{$k_i > k_t$, $\forall 1\leq t<i$, and $k_i   >   \log{d}+(i-1)\log{(d-1)}  -  \log{(\log{d}+(i-1)\log{(d-1)})}$}
				\State $FlagTerminate \gets 0$ \label{criterion-for-continuing-with-byz}
			\EndIf
		\EndFor
		
		\If{$FlagTerminate = 1$}
			\State Decide $i$ and terminate all for-loops. \MyComment{$v$ accepts $i$ as the estimate of $\log{n}$}\label{final-accept-with-byz}
		\Else
			\State Continue to the next phase $i+1$.
		\EndIf
	\EndFor
\end{algorithmic}
\caption{The Byzantine counting algorithm. Code for an honest node $v$.}
\label{byz-alg}
\end{algorithm}


\subsection{Analysis of the algorithm}\label{section-analysis-with-byz}

\subsubsection{Preliminaries --- Some useful observations}

Let $\texttt{Crashed}$ be the set of honest nodes that shut themselves down at the very beginning of the algorithm (please see Line \ref{shut-yourself-down} of the pseudocode in Algorithm \ref{byz-alg}). 
Let $\texttt{Core}$ be the largest connected component in $H$ induced by $\texttt{Uncrashed}$, where $\texttt{Uncrashed}   \defeq   \texttt{Honest}  \setminus  \texttt{Crashed}$.

\begin{lemma}[See \cite{Augustine_2015_enabling_robust_and_efficient}]\label{lemma-core-is-an-expander}
	$\texttt{Core}$ has size at least $n - o(n)$. Moreover, $\texttt{Core}$ is an expander with edge-expansion at least $\gamma$, where $\gamma > 0$ is a constant.
\end{lemma}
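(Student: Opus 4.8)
The plan is to reduce the statement to two facts: (a) the set $\texttt{Crashed}$ of honest nodes that shut down at the very start is small, so $|\texttt{Uncrashed}| = n - o(n)$; and (b) $H$ is, with high probability, a $d$-regular edge-expander with the usual ``local sparsity'' of a random $d$-regular graph (small vertex sets span few edges). For (a) I would argue that if an honest node $v$ has no bad node (Byzantine or $\texttt{NLT}$) within constant $G$-distance, then by Lemma~\ref{obs-distinguish-between-H-and-L} all adjacency lists $v$ collects are truthful and mutually consistent, so $v$ detects no conflict and does not crash; hence $\texttt{Crashed} \subseteq \bigcup_{u\in\texttt{Bad}} B_G(u, O(1))$, which by Observation~\ref{neighborhood-size-upper-bound-in-G} and Lemma~\ref{lemma-set-sizes} has size $(d-1)^{O(1)}\,|\texttt{Bad}| = O(n^{1-c})$ for a constant $c>0$ (one may take $c=\min\{\delta,0.2\}$). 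Writing $S \defeq V\setminus\texttt{Uncrashed} = \texttt{Byz}\cup\texttt{Crashed}$, this gives $|S| = O(n^{1-c})$ and $|\texttt{Uncrashed}| = n - O(n^{1-c})$. For (b) I invoke the standard properties of the $H(n,d)$ model (see \cite{Wormald_1999}), already used throughout the paper: whp its edge-expansion is at least a constant $h>0$.

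Size of $\texttt{Core}$. Let $C\neq\texttt{Core}$ be any connected component of $H[\texttt{Uncrashed}]$. Every $H$-edge leaving $C$ lands in $S$, so if $|C|\le n/2$ then $h\,|C| \le e_H(C,V\setminus C) = e_H(C,S)$; summing over all such $C$ (each edge incident to $S$ counted once) gives $h\sum_{C\neq\texttt{Core}}|C| \le d\,|S|$, so the non-giant components together contain at most $d|S|/h = O(n^{1-c})$ vertices. Two components of size $>n/2$ cannot coexist while $|\texttt{Uncrashed}| > n/2$, so there is a unique giant component, namely $\texttt{Core}$, with $|\texttt{Core}| \ge |\texttt{Uncrashed}| - O(n^{1-c}) = n-o(n)$ and $|V\setminus\texttt{Core}| = O(n^{1-c})$.

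Expansion of $\texttt{Core}$. Put $W\defeq\texttt{Core}$, $R\defeq V\setminus W$, so $|R| = O(n^{1-c})$, and fix $T\subseteq W$ with $1\le|T|\le|W|/2\le n/2$; I must show $e_{H[W]}(T,W\setminus T)\ge\gamma|T|$ for a constant $\gamma>0$. If $|T|\ge n^{1-c/2}$, then $e_H(T,R)\le d|R| = o(|T|)$, so $e_{H[W]}(T,W\setminus T) = e_H(T,V\setminus T) - e_H(T,R) \ge h|T| - o(|T|) \ge \frac{h}{2}|T|$. If $|T| < n^{1-c/2}$, write $e_{H[W]}(T,W\setminus T) = d|T| - 2e_H(T,T) - e_H(T,R)$, so it suffices to bound the ``lost'' edges $2e_H(T,T)+e_H(T,R) = e_H(T,T\cup R)$ by $(d-\gamma)|T|$, i.e.\ $T$ keeps a constant fraction of its edges inside $W\setminus T$. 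Here the random structure of $H$ is essential: a first-moment union bound over the $H(n,d)$ model should show that whp no set $T$ with $|T|\le n^{1-c/2}$ has $e_H(T,T\cup S')>(d-\gamma)|T|$ for any fixed $S'$ with $|S'|=O(n^{1-c})$ --- intuitively, the sparse ``pendant'' configurations (trees or paths attached by few edges to an $O(n^{1-c})$-sized set) that would witness a violation have expected count $\lesssim n\cdot\bigl(\poly(d)\,n^{-c(d-2)}\bigr)^{m}$ on $m$ pendant vertices, which vanishes for every $m\ge 1$ because $d\ge 8$ (together with $\delta>3/d$) forces $c(d-2)>1$. Setting $\gamma\defeq\min\{h/2,\gamma_0\}$ with $\gamma_0$ the constant from this argument shows $H[W]$ is a $\gamma$-edge-expander; in particular $H[W]$ is connected, which confirms that $W$ is indeed the largest component of $H[\texttt{Uncrashed}]$.

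The bound on $|\texttt{Crashed}|$, the size estimate, and the large-$|T|$ case of the expansion bound are routine ``expander surgery.'' The real obstacle is the small-$|T|$ case --- ruling out that some $o(n)$-sized subset of $\texttt{Core}$ sends almost all of its edges out of $\texttt{Core}$ --- which does not follow from the edge-expansion of $H$ alone and genuinely requires the random-graph structure of $H$ and the hypothesis $\delta>3/d$ (ensuring $d$ is large enough). This is precisely the content of the cited lemma of \cite{Augustine_2015_enabling_robust_and_efficient}; I would reproduce its union-bound argument over small subgraphs, using that $\texttt{Byz}$ is randomly placed and that $\texttt{Crashed}$ lies within constant $G$-distance of $\texttt{Bad}$.
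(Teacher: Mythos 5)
Your proposal is a reasonable reconstruction of the argument, but it is worth knowing that the paper's own ``proof'' of Lemma~\ref{lemma-core-is-an-expander} is literally a one-line citation to Lemma~3 of~\cite{Augustine_2015_enabling_robust_and_efficient} with no argument given. So there is no internal proof in the paper to compare against; what you have done is reconstruct the content of the cited external lemma, and your reconstruction matches what that lemma must establish.

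Your outline is sound. The bound $\texttt{Crashed}\subseteq\bigcup_{u\in\texttt{Byz}}B_G(u,1)$ is correct (you can replace your $\texttt{Bad}$ and $O(1)$ radius by $\texttt{Byz}$ and radius~$1$, since only Byzantine neighbors can inject conflicting adjacency information, but the looser bound is harmless). Your derivation of the giant component from the edge-expansion of $H$ is correct, including the implicit step that if no component exceeded $n/2$ you would get $h\,|\texttt{Uncrashed}|\le d|S|$, a contradiction. Your large-$|T|$ case of the expansion bound is also correct and clean. You have also put your finger on the genuine difficulty: the small-$|T|$ case cannot follow from edge-expansion of $H$ alone, and the union bound cannot be taken over all $\binom{n}{O(n^{1-c})}$ choices of an arbitrary set $S'$; it must exploit that $\texttt{Byz}$ is placed at random and that $\texttt{Crashed}$ sits inside a constant-radius $G$-neighborhood of $\texttt{Byz}$, so that the union is effectively over Byzantine placements and small subgraph configurations rather than over arbitrary $o(n)$-sized sets. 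You flag this explicitly at the end, which is exactly the right place to defer to the cited lemma. The only imprecision is the phrase ``for any fixed $S'$'' earlier in your small-$|T|$ paragraph, which, taken literally, describes an infeasible union bound --- but your closing sentence supplies the correction, so I read this as a wording slip rather than a conceptual gap.

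Net assessment: your plan is correct and substantially more informative than the paper's own proof, which discloses nothing. If you intend to make this self-contained you would need to actually carry out the union-bound argument over small ``pendant'' configurations (with the exponent bookkeeping checked more carefully than the heuristic $n^{-c(d-2)}$ per vertex), but as a faithful account of where the work lies and why $\delta>3/d$ and random Byzantine placement are needed, it is accurate.
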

\begin{proof}
	Follows from Lemma $3$ in \cite{Augustine_2015_enabling_robust_and_efficient}.
\end{proof}

\begin{obs}\label{no-Byzantine-chain-of-length-k}
	In the graph $H$, with high probability, there is no chain of length $\geq k$ composed of Byzantine nodes only.
\end{obs}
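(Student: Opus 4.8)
The plan is a first-moment (union-bound) argument over all potential all-Byzantine chains. First I would observe that it suffices to rule out chains consisting of \emph{exactly} $k$ Byzantine vertices: any chain of length $\ge k$ contains a sub-path on $k$ consecutive vertices, so non-existence of the latter implies non-existence of the former. I would then fix an arbitrary realization of the random $d$-regular graph $H$ and use only the randomness in the placement of the Byzantine set, which by assumption is (equivalent to) a uniformly random subset $\texttt{Byz} \subseteq V$ with $|\texttt{Byz}| = n^{1-\delta}$ (cf.\ Lemma \ref{lemma-set-sizes}). Since $H$ is $d$-regular, the number of simple paths on $k$ vertices is a deterministic quantity, bounded by $n\, d^{k}$ (choose a starting vertex in $n$ ways, its first neighbour in $\le d$ ways, and each subsequent vertex in $\le d-1$ ways); no property of the random-graph model beyond $d$-regularity is needed here, so conditioning on $H$ is harmless.

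Next I would bound, for one fixed $k$-element vertex set $S$, the probability that $S \subseteq \texttt{Byz}$. For a uniformly random $\texttt{Byz}$ of size $n^{1-\delta}$,
\begin{align*}
	\Pr[\,S \subseteq \texttt{Byz}\,]
	\;=\; \frac{\binom{n-k}{\,n^{1-\delta}-k\,}}{\binom{n}{\,n^{1-\delta}\,}}
	\;=\; \prod_{j=0}^{k-1}\frac{n^{1-\delta}-j}{n-j}
	\;\le\; \left(\frac{n^{1-\delta}}{n-k}\right)^{k}
	\;\le\; \left(\frac{2}{n^{\delta}}\right)^{k}
\end{align*}
for all sufficiently large $n$. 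Letting $X$ denote the number of chains on $k$ Byzantine vertices, linearity of expectation and the union bound over the $\le n\, d^{k}$ candidate paths give
\begin{align*}
	\mathbb{E}[X] \;\le\; n\, d^{k}\cdot\left(\frac{2}{n^{\delta}}\right)^{k} \;=\; (2d)^{k}\, n^{\,1-k\delta}.
\end{align*}

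It then remains to verify that the exponent $1-k\delta$ is negative. Since $k = \lceil d/3\rceil \ge d/3$ we have $1/k \le 3/d$, and the standing hypothesis $\delta > 3/d$ therefore yields $k\delta > 1$; moreover $c \defeq k\delta - 1$ is a fixed positive constant because $d$ and $\delta$ are fixed. Hence $\mathbb{E}[X] = O(n^{-c})$, and by Markov's inequality $\Pr[X \ge 1] \le \mathbb{E}[X] = O(n^{-c})$, so with high probability $H$ has no chain of $k$ Byzantine vertices and, a fortiori, none of length $\ge k$. The hard part here is essentially this last inequality $k\delta > 1$, which is exactly what the assumption $\delta > 3/d$ together with $k = \lceil d/3 \rceil$ is designed to deliver; the path count and the probability estimate are both routine, and the precise convention for the ``length'' of a chain (vertices versus edges) is immaterial, since replacing $k$ by $k+1$ still leaves the relevant exponent negative.
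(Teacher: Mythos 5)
Your proof is correct and follows essentially the same first-moment/union-bound argument as the paper: bound the number of $k$-length chains by $O(n\,d^{k})$, bound the probability that any fixed chain is all-Byzantine by roughly $n^{-k\delta}$, and conclude from $k\delta > 1$ (which follows from $\delta > 3/d$ and $k = \lceil d/3\rceil$). The only differences are cosmetic: you are slightly more careful with the hypergeometric probability for a uniformly random Byzantine set of fixed size, and you make explicit that it suffices to rule out chains of exactly $k$ vertices.
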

\begin{proof}
	We have that $k  =  \lceil\frac{d}{3}\rceil$ and $\delta > \frac{3}{d}$, implying $k\delta > 1$. We assume that $k\delta = 1 + \delta'$ for a fixed positive constant $\delta'$.
	
	The number of possible $k$-length chains is upper-bounded by $n \cdot d^{k-1}$. We recall that the Byzantine nodes are randomly distributed in the network. Therefore, for any one such chain, the 
	probability that it is composed purely of Byzantine nodes is $(\frac{n^{1-\delta}}{n})^k  =  n^{-k\delta}$. By union bound, the probability that there is at least one chain made only of Byantine 
	nodes is upper-bounded by
	\begin{align*}
		&n \cdot d^{k-1} \cdot n^{-k\delta}  =  n \cdot d^{k-1} \cdot n^{-(1 + \delta')}\text{  [since }k\delta = 1 + \delta'\text{]}\\
		&=   \frac{d^{k-1}}{n^{\delta'}}\text{, which is low probability for a fixed positive constant }k\text{.}
	\end{align*}
\end{proof}

\begin{lemma}\label{lemma-no-chain-of-length-k-even-in-pretend}
	The following statement holds with high probability: $\forall v \in \texttt{Honest}$, the Byzantine nodes cannot make $v$ believe that there is a chain of length $\geq k$ composed entirely of 
	Byzantine nodes, without shutting $v$ down.
\end{lemma}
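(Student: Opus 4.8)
The plan is to argue the lemma deterministically on the high-probability event $\mathcal{E}$ of Observation \ref{no-Byzantine-chain-of-length-k} — namely, that $H$ contains no path of $k$ or more consecutive Byzantine nodes — and then to invoke $\Pr[\mathcal{E}] \ge 1 - O(d^{k-1}/n^{\delta'})$ with $\delta' = k\delta - 1 > 0$ a fixed constant; since the per-node reasoning below uses only $\mathcal{E}$, no further union bound over honest nodes is needed. It suffices to consider an honest node $v$ that does not crash via Line \ref{shut-yourself-down} and to show that the Byzantine nodes cannot make it believe in a length-$\ge k$ all-Byzantine chain in its $k$-neighborhood of $H$.

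The core of the argument, and the step I expect to be the main obstacle, is a structural claim: a non-crashed honest node $v$ learns a \emph{faithful} picture of $H$ restricted to $B(v,k)$ — for every $w$ with $dist(v,w)\le k$, node $v$ correctly learns that $w$ is a genuine network node and correctly learns which $H$-neighbors of $w$ lie in $B(v,k)$. Granting this, the lemma follows for the ``topology'' reading at once: $v$'s view of $B(v,k)$ equals the truth, and by $\mathcal{E}$ the truth contains no length-$\ge k$ all-Byzantine path. To prove the claim I would use the redundancy built into $L$: since $(u,u')\in E(L)$ whenever $dist(u,u')\le k$, node $v$ is a $G$-neighbor of \emph{every} node of $B(v,k)$ and can query each of them directly, and every honest one answers truthfully — this is the pairwise cross-checking of Lemma \ref{obs-distinguish-between-H-and-L} applied throughout $B(v,k)$. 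Hence, if some Byzantine node reports a nonexistent neighbor ID or a spurious $H$-edge, any node ``vouching'' for that report must itself be Byzantine; chasing this chain of vouchers yields a path of Byzantine nodes inside $B(v,k)$, which by $\mathcal{E}$ has fewer than $k$ vertices and therefore ends at an honest node that refutes the report. Node $v$ then receives contradictory information from two of its sources and crashes — contradicting our assumption — so the reconstruction is faithful.

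Finally, I would record why the same reasoning prevents the Byzantine nodes from slipping an illegitimate color past the verification of Line \ref{verify-along-edges-of-L} (the intended use of this lemma): for $v$ to treat a color $c$ received from a neighbor $b_1$ as legitimate, the witnessed forwarding path of up to $k-1$ hops behind $b_1$, say $b_k, b_{k-1}, \dots, b_1$, must have every $b_j$ asserting that it passed $c$ to $b_{j-1}$; since honest nodes make no false forwarding claims, all $b_j$ would have to be Byzantine, and by the structural claim $b_1 - \cdots - b_k$ is a genuine $H$-path — a $k$-node all-Byzantine chain, which is impossible under $\mathcal{E}$. The short-distance variant of the check used in rounds $t \le k-1$ only shortens this witnessed path and so makes the argument easier. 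Since the sole probabilistic ingredient was $\mathcal{E}$, all of this holds simultaneously for every honest $v$ with probability $1 - O(1/n^{\delta'})$, which proves the lemma.
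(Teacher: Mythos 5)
Your overall plan matches the paper's: condition on the high-probability event $\mathcal{E}$ of Observation~\ref{no-Byzantine-chain-of-length-k} (no genuine $k$-node all-Byzantine $H$-path), and then argue deterministically per honest node so that no further union bound is needed. The divergence, and the gap, is in your central structural claim that any non-crashed honest $v$ reconstructs the $H$-topology on $B(v,k)$ \emph{faithfully}. Your justification --- that any misreport forces a ``chain of vouchers'' which is an $H$-path of Byzantine nodes and therefore, by $\mathcal{E}$, reaches an honest refuter within $k$ steps --- does not go through as stated. When a Byzantine $b$ inserts a spurious $H$-edge $b$--$z$, the corroborator $z$ must preserve its reported $H$-degree $d$ and so drop a real $H$-neighbor $y$, who must in turn drop $z$ and fabricate yet another neighbor, and so on: the resulting structure alternates between \emph{real} $H$-edges (such as $z$--$y$) and \emph{fake} $H$-edges (such as $b$--$z$), and it branches (both the inserted and the dropped neighbor generate further obligations). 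That object is not an $H$-path, so Observation~\ref{no-Byzantine-chain-of-length-k} does not apply to it directly, and you have not shown why it must nonetheless contain a genuine length-$k$ all-Byzantine $H$-path. In particular, your argument does not rule out two Byzantine nodes consistently hiding a real $H$-edge between them inside $B(v,k)$ while mutually corroborating: $v$'s picture would then be silently \emph{wrong} rather than detectably \emph{inconsistent}, so $v$ would not crash.

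The paper closes the argument with a more local observation that avoids proving full faithfulness. It reasons only about the particular fake chain $\mathcal{C}_1$: by $\mathcal{E}$ some node of $\mathcal{C}_1$ is not a genuine Byzantine, and in the interesting case it is a fabricated endpoint $b_2$ inserted by its claimed parent $b_3$ in place of a real $H$-child $u$. Since $u \in B(v,k)$ it is a $G$-neighbor of $v$; and --- the paper's phrase ``regardless of whether $u$ is Byzantine'' obscures this, but it is really the crux --- $u$ \emph{cannot} be Byzantine, for otherwise the genuine $H$-path $(b_1,\ldots,b_{k-1},u)$ would itself be an all-Byzantine $k$-chain in $H$, violating $\mathcal{E}$. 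Hence $u$ is honest, truthfully reports $b_3$ as its $H$-parent, and this conflicts with $b_3$'s report, crashing $v$ via Line~\ref{shut-yourself-down}. Your global faithfulness claim, by contrast, needs an additional argument that you have not supplied, and is plausibly stronger than what is actually true.
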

\begin{proof}
	Consider an honest node $v$. If $v$ has no Byzantine neighbors in $G$, then $v$ gets true neighborhood information from all its neghbors in $G$, and is thus able to accurately reconstruct the 
	exact topology of its $k$-distance neighborhood in $H$ (please refer to Observation \ref{obs-distinguish-between-H-and-L}). Since $H$ does not have any $k$-length chain of Byzantine nodes (please 
	see Observation \ref{no-Byzantine-chain-of-length-k}), $v$'s reconstruction will have none either.
	
	So suppose $v$ has one or more Byzantine neighbors in $G$.
	Let $\mathcal{C}_1$ be the final $k$-length chain whose existence the Byzantine nodes are trying to ``trick'' $v$ into believing. Since, in truth, $\mathcal{C}_1$ has at most $k - 1$ Byzantine 
	nodes (thanks to Observation \ref{no-Byzantine-chain-of-length-k}), there must be a \emph{dummy} node $b_2$, say, which the Byzantine nodes will try to insert into $\mathcal{C}_1$ (that is, to make 
	it look as such in $v$'s eyes).
	
	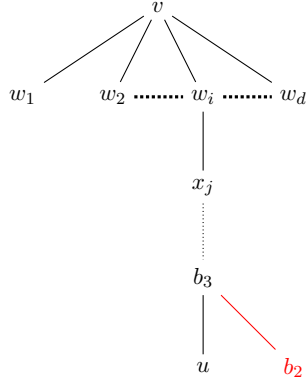
\begin{figure}
	\begin{center}
	
	\begin{tikzpicture}[grow = down, node distance=1.0cm,scale=0.8, every node/.style={scale=0.8}]
	\tikzstyle{v}=[circle,draw=black,fill=white!30,very thick,inner sep=2pt,minimum size=8mm]
	\node(v){$v$}
		child{node(w1){$w_1$}}
		child{node(w2){$w_2$}}
		child[solid]{node(wi){$w_i$}
			child[solid]{node (xj){$x_j$}
				child[densely dotted]{node (b3){$b_3$}
					child[missing]
					child[solid]{node (u){$u$}}
					child[red, solid]{node (b2){$b_2$}}			
				}		
			}
		}
		child{node(wd){$w_d$}};
	\draw[very thick, densely dotted] (w2) -- (wi);
	\draw[very thick, densely dotted] (wi) -- (wd);	
	\end{tikzpicture}
	
	\end{center}
	\caption{$\mathcal{C}_1 = (w_i, x_j, \ldots, b_3, b_2)$ is the $k$-length chain the Byzantine nodes are trying to tamper with. In reality, $b_2$ is \emph{not} a child of $b_3$ (even though $b_2$ 
	is directly connected to $v$ in the graph $G$). So $b_3$ must hide the existence of a real child $u$, say, in order to concoct the existence of the fake, Byzantine child $b_2$.}
	\end{figure}

	Thus, $b_3$, the (fake) parent of $b_2$ in the chain $\mathcal{C}_1$, must report to $v$ that it has $b_2$ as a child. While doing so, however, $b_3$ will need to suppress the existence of a (real) 
	child $u$ (which may or may not be Byzantine) because $b_3$ will need to maintain its degree $d$ in $H$ (in $v$'s eyes).
	
	But as $u$ is directly connected to $v$ in $G$, the Byzantine nodes cannot disrupt the communication between $u$ and $v$. Since $u$ knows $b_3$ to be its neighbor in $H$ (we recall that $b_3$, even 
	though Byzantine, cannot lie about its ID to $u$), the algorithm would dictate that $u$ let it be known to $v$ (regardless of whether or not $u$ is Byzantine).
	
	Therefore, $v$ will have two conflicting pieces of information: it will hear from $b_3$ that $b_3$ and $u$ are not neighbors in $H$, and $v$ will hear the exact opposite from $u$. Thus, as per the 
	algorithm, $v$ will go into \emph{crash failure}, i.e., will shut itself down (see Line \ref{shut-yourself-down} of the pseudocode in 
	Algorithm \ref{byz-alg}.
\end{proof}


\subsubsection{A high-level overview of the analysis}

In this section we show that the algorithm gives a $(\frac{b}{a})$-factor approximation of $\log{n}$ with high probability, where $a  \defeq  \frac{\delta}{10k\log{(d-1)}}$ and $b   \defeq   
\frac{4}{\log{(1 + \frac{\gamma}{d})}}$, where $\gamma$ is the edge-expansion of $\texttt{Core}$. Note that $0 < a < b < 1$. We recall that $n^{1-\delta}$ is the number of Byzantine 
nodes in the network $G$, and $d$ is the uniform degree of $H$. ($H$ is a subset of $G$. For the exact definition of $H$, please refer to Section \ref{sec:model}.)

\begin{obs}\label{b-is-large-with-byz}
	$b\log{n} \geq 2D(\texttt{Core})$, where $D(\texttt{Core})$ is the diameter of $\texttt{Core}$.
\end{obs}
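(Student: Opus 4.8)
The plan is to derive Observation \ref{b-is-large-with-byz} from the elementary fact that a bounded-degree graph with positive edge-expansion has logarithmic diameter, and then to check that the constant $4$ hidden in $b = 4/\log(1+\gamma/d)$ leaves exactly enough room. Write $N := |\texttt{Core}|$. By Lemma \ref{lemma-core-is-an-expander}, $N \ge n - o(n)$ (so $N \le n$ and $\log N \le \log n$) and $\texttt{Core}$ has edge-expansion at least $\gamma > 0$; moreover, since $\texttt{Core}$ is a vertex-induced subgraph of the $d$-regular graph $H$, every vertex has degree at most $d$ inside $\texttt{Core}$, and in particular $\gamma \le d$ (the trivial bound on the edge-expansion of any degree-$d$ graph).

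The core of the argument is a ball-growing estimate inside $\texttt{Core}$. Fix a vertex $v\in\texttt{Core}$ and let $S_t$ be the set of vertices of $\texttt{Core}$ at distance at most $t$ from $v$ (all distances computed within $\texttt{Core}$). As long as $|S_t| \le N/2$, the edge-expansion bound yields at least $\gamma|S_t|$ edges leaving $S_t$; every such edge has its far endpoint in $S_{t+1}\setminus S_t$, and each vertex there receives at most $d$ such edges, so $|S_{t+1}\setminus S_t| \ge \frac{\gamma}{d}|S_t|$ and hence $|S_{t+1}| \ge \left(1 + \frac{\gamma}{d}\right)|S_t|$. Let $t^{\ast}$ be the first time with $|S_{t^{\ast}}| > N/2$; since $\texttt{Core}$ is connected, $t^{\ast}$ exists and is finite. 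Iterating the growth bound from $|S_0| = 1$ over the steps $0,1,\dots,t^{\ast}-1$ (each valid since $|S_t|\le N/2$ for $t<t^{\ast}$) gives $|S_{t^{\ast}-1}| \ge \left(1 + \frac{\gamma}{d}\right)^{t^{\ast}-1}$, and by minimality $|S_{t^{\ast}-1}| \le N/2$, so $t^{\ast} - 1 \le \log_{1 + \gamma/d}(N/2) = \log_{1 + \gamma/d} N - \log_{1 + \gamma/d} 2$. Because $\gamma \le d$ we have $1 + \gamma/d \le 2$ and hence $\log_{1 + \gamma/d} 2 \ge 1$; therefore $t^{\ast} \le \log_{1 + \gamma/d} N$.

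To conclude, take any two vertices $u, w \in \texttt{Core}$ and apply the above to each of them: the balls of radius $t^{\ast}_u$ around $u$ and $t^{\ast}_w$ around $w$ each contain more than $N/2$ vertices of $\texttt{Core}$, so they intersect, giving $\text{dist}_{\texttt{Core}}(u,w) \le t^{\ast}_u + t^{\ast}_w \le 2\log_{1 + \gamma/d} N$. Hence $D(\texttt{Core}) \le 2\log_{1 + \gamma/d} N = \frac{2\log N}{\log(1 + \gamma/d)} \le \frac{2\log n}{\log(1 + \gamma/d)} = \frac{b\log n}{2}$, i.e. $2D(\texttt{Core}) \le b\log n$, as claimed. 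This is the same computation that underlies Observation \ref{b-is-large}, with $(\texttt{Core},\gamma)$ in place of $(H,h)$. The only point that needs any care is the rounding: one must keep the $\log_{1+\gamma/d}2 \ge 1$ slack (which rests on $\gamma \le d$) rather than settling for the cruder $t^{\ast} \le \log_{1+\gamma/d}N + 1$, since the resulting additive constant would not be absorbed by $b\log n$; I expect that to be the only mild obstacle.
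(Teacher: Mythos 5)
Your proof is correct, and in fact the paper states this observation without any proof at all (as it does for the parallel Observation~\ref{b-is-large}); what you supply is exactly the standard ball-growing argument that the authors are implicitly invoking, with the factor $4$ in $b = 4/\log(1+\gamma/d)$ providing the slack (via $\gamma \le d$, hence $\log_{1+\gamma/d}2\ge 1$) needed to absorb the rounding, and $|\texttt{Core}| \le n$ letting you replace $\log N$ by $\log n$. The only slip is notational: iterating the growth bound to reach $|S_{t^\ast-1}|$ uses steps $t=0,\dots,t^\ast-2$ (not $t^\ast-1$), but the displayed inequality $|S_{t^\ast-1}|\ge(1+\gamma/d)^{t^\ast-1}$ and everything after it are right.
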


\paragraph*{High-level overview of the proof}
We break our analysis up into two different stages of the algorithm. We show that the following statements hold with high probability.
\begin{enumerate}
	\item Then for $i  <  a\log{n}$, at least $(1 - \epsilon)$-fraction of the good nodes do not accept $i$ to be the right estimate of $\log{n}$, and they continue with the algorithm. The rest of 
	the nodes --- i.e., at most $\epsilon$-fraction of the good nodes --- even though they have stopped generating tokens, still continue to forward tokens generated by other nodes. 
	$0 < \epsilon \leq 1$ is a constant and we can make it arbitrarily small.
	
	\item If $i = b\log{n}$, all but $o(n)$ of the remaining active nodes accept $i$ to be the estimate of $\log{n}$ and they stop producing tokens. They however continue to forward tokens generated 
	by other (if any) nodes.
\end{enumerate}

We cannot say which way a node will decide when $a\log{n} \leq i < b\log{n}$. The above two statements are, however, sufficient to give us an approximation factor of $\frac{b}{a}   =   
\frac{40k\log{(d-1)}}{\delta\log{(1 + \frac{\gamma}{d})}}$. This gives us our main result of the paper:
\begin{theorem}\label{theorem-main-result-of-the-paper}
	Algorithm \ref{byz-alg}, with high probability, solves the Byzantine counting problem with up to $O(n^{1-\delta})$ (randomly distributed) Byzantine nodes (where $\delta > 0$ is a small fixed 
	constant that depends on $d$) and  runs in $\Theta(\log^3{n})$ rounds with the guarantee that  all but an $\epsilon$-fraction of the nodes in the network (for any arbitrary small positive constant 
	$\epsilon$) have a constant factor approximation of $\log{n}$, where $n$ is the number of nodes in the network.
\end{theorem}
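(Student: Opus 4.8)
\emph{Proof plan.} The strategy is to reduce the faulty execution to the Byzantine-free analysis of Section~\ref{section-analysis}, confining the adversary's influence to an $o(n)$ set of nodes. Throughout I would work with $\texttt{Core}$, the set of honest uncrashed nodes: $|\texttt{Core}| \ge n - o(n)$ by Lemma~\ref{lemma-core-is-an-expander}, and since an honest node crashes only on receiving contradictory neighborhood data --- which requires a Byzantine $G$-neighbor, and each node has $O(1)$ $G$-neighbors --- we also have $|\texttt{Crashed}| = O(n^{1-\delta})$ and $|V\setminus\texttt{Core}| \le |\texttt{Byz}| + |\texttt{Crashed}| + |\texttt{NLT}| = o(n)$; anything lost outside $\texttt{Core}$ is absorbed into the $B(n) + \epsilon n$ slack allowed by the definition of Byzantine Counting. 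The key structural fact to record first is that no adversarial color ever influences an honest uncrashed node: when such a node receives a color $c$ along an $H$-edge from $w$, the check over $B(w,k-1)$ (Line~\ref{verify-along-edges-of-L}) demands a legitimate chain of custody of length at most $k-1$ in $H$, and since whp $H$ has no all-Byzantine chain of length $k$ (Observation~\ref{no-Byzantine-chain-of-length-k}) and the adversary cannot fake one without crashing the node (Lemma~\ref{lemma-no-chain-of-length-k-even-in-pretend}), any injected color is rejected and not forwarded within $k-1$ hops of its source. Hence the colors an honest uncrashed node counts into $k_1,\dots,k_i$ in a subphase are exactly colors generated by honest nodes lying in its radius-$i$ ball of $H$.

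\emph{Lower end.} For $1 \le i < a\log n$ I would verify that the proof of Lemma~\ref{theorem-lower-end} applies verbatim to every Byzantine-safe core node $v$. Indeed $v\in\texttt{Byz-safe}$ means $dist_G(v,\texttt{Bad}) > a\log n \ge i$, so --- as $dist_H \ge dist_G$ --- the ball $B(v,i)$ contains no Byzantine and no non-tree-like node; together with the chain-of-custody fact, $v$'s view of the flooding is distributed exactly as in a run on the honest subgraph, so Lemmas~\ref{lemma-inner-ball-probability}--\ref{lemma-probability-of-error-in-the-lower-end} hold for $v$ unchanged. The one genuinely new phenomenon is a crashed honest node sitting inside $B(v,i)$ (adjacent to a Byzantine node just outside $v$'s safe radius), which punches a hole in the flood and might cause $v$ to stop early; but each crashed node lies in at most $|B(\cdot,a\log n)| = O((d-1)^{a\log n + 1}) = O(n^{\delta/10k})$ such balls, so only $O(n^{1-\delta}\cdot n^{\delta/10k}) = o(n)$ core nodes are affected. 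Removing these together with $\texttt{BUS}$ (also $o(n)$ by Lemma~\ref{lemma-set-sizes}), the indicator/Azuma argument of Lemma~\ref{not-too-many-nodes-decide-wrongly-in-the-lower-end} and a union bound over $i < a\log n$ give, whp, at most $\epsilon n + o(n)$ honest nodes that ever decide a value below $a\log n$.

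\emph{Upper end and assembly.} At phase $i = b\log n > 4\log n$ the continuation test (Line~\ref{criterion-for-continuing-with-byz}) requires a counted color above $4\log n - 1$ (using that $d$ is a sufficiently large constant), yet by Lemma~\ref{lemma-upper-end-no-high-color} no honest node generates such a color in any of the $\Theta(\log^2 n)$ subphases of that phase whp, and by the chain-of-custody fact no honest uncrashed node ever counts a larger adversarial color; so every honest uncrashed node still active at the start of phase $b\log n$ accepts $b\log n$ and halts, which gives part~1 of the definition. Combining the two ends, every honest uncrashed node outputs some value $\mathcal L$ at a phase in $[a\log n, b\log n]$, except for at most $\epsilon n + o(n)$ honest nodes that halted below $a\log n$; any such $\mathcal L$ obeys $a\log n \le \mathcal L \le b\log n$, a constant-factor estimate of $\log n$ with ratio $b/a = \frac{40k\log(d-1)}{\delta\log(1+\gamma/d)}$, and Observation~\ref{b-is-large-with-byz} ensures $b\log n \ge 2D(\texttt{Core})$ so colors can cross $\texttt{Core}$. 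For the running time, phase $i$ has $O(i)$ subphases, each consisting of $i$ flooding rounds with an $O(k)=O(1)$-round $L$-verification piggybacked on each; summing $O(i^2)$ over $i \le b\log n = \Theta(\log n)$ gives $O(\log^3 n)$, and this is $\Theta(\log^3 n)$ since a $(1-\epsilon)$-fraction of nodes stay active until phase $\Theta(\log n)$.

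\emph{Main obstacle.} I expect the delicate part to be the lower-end accounting: pinning down exactly which honest nodes near a Byzantine-safe node are compromised --- crash-holes, and nodes that are inside the safe radius but just outside the ball used for the clean estimate --- and checking that their number stays within the $o(n)$ and $\epsilon n$ budgets rather than degrading the concentration behind Lemma~\ref{lemma-probability-of-error-in-the-lower-end} and its Azuma amplification. The remaining ingredients are already established in Sections~\ref{section-analysis} and~\ref{section-analysis-with-byz} or amount to a routine round count.
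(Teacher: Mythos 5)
There is a genuine gap in the upper-end part of your argument. You assert that ``no adversarial color ever influences an honest uncrashed node'' and that ``the colors an honest uncrashed node counts into $k_1,\dots,k_i$ in a subphase are exactly colors generated by honest nodes.'' That claim is false, and the paper's own Lemma~\ref{lemma-byzantine-nodes-cannot-delay-arbitratily} says so explicitly: a Byzantine node \emph{can} inject an arbitrary high color and have it accepted, as long as it does so within the first $k-1$ rounds of a subphase. The chain-of-custody check only certifies a window of $\min(t,k-1)$ hops behind the sender; at rounds $t\le k-1$ the alleged chain has length $<k$, and Observation~\ref{no-Byzantine-chain-of-length-k} only rules out all-Byzantine chains of length $\ge k$. (In the extreme case $t=1$, the Byzantine neighbor simply lies about its own coin toss, and there is nothing for its $H$-neighbors to contradict.) Once such a color is accepted by one uncrashed honest node, it is forwarded by the ordinary flooding and \emph{is} counted into the $k_t$'s of every core node.

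Consequently, your conclusion that ``every honest uncrashed node still active at the start of phase $b\log n$ accepts $b\log n$ and halts'' does not follow from Lemma~\ref{lemma-upper-end-no-high-color} alone. The paper closes this gap with a different argument (Lemma~\ref{lemma-core-node-stops-in-the-upper-end}): any adversarial color, even if accepted, can only enter $\texttt{Core}$ in rounds $\le k-1$, and since $\texttt{Core}$ is an expander with diameter $\le \tfrac{b\log n}{2}$ (Observation~\ref{b-is-large-with-byz}), that color reaches every core node by round $\tfrac{b\log n}{2}+k-1 < i$. So for each core node $v$ there is some $t<i$ with $k_t$ at least the largest adversarial color, which forces $k_i \le k_t$ and defeats the ``strict new maximum at round $i$'' continuation test in Line~\ref{criterion-for-continuing-with-byz}. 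Your write-up needs this expander-based racing step; without it, the adversary injecting a color above $4\log n-1$ in an early round could in principle be handled in a way your argument doesn't address, and you cannot appeal to Lemma~\ref{lemma-upper-end-no-high-color} (which bounds only honestly-generated colors). The lower-end accounting, the $\epsilon n + o(n)$ slack bookkeeping, and the $\Theta(\log^3 n)$ round count are all essentially as in the paper.
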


The proof of the above Theorem is shown in the following Sections.


\subsubsection{When $i$ is small: In particular, when $i < a\log{n}$} \label{subsection-i-is-small-with-byz}

For the sake of the analysis in this subsection only, we will consider only \emph{Byzantine-safe nodes}, i.e., only those nodes in the set $\texttt{Byz-safe}$.

We note that while $i < a\log{n}$, no token generated by a Byzantine node reaches a Byzantine-safe node (by the very definition of a Byzantine-safe node, as defined in Definition 
\ref{definition-node-categorization}). Thus for any $v  \in  \texttt{Byz-safe}  \subset  \texttt{Safe}$, the exact same analysis of Section \ref{subsection-i-is-small} remains valid. That is, we have 
the same result, i.e., Lemma \ref{theorem-lower-end}, as in the Byzantine-free setting.
\subsubsection{When $i = \Theta(\log{n})$: In particular, when $i  =  b\log{n}$}\label{subsection-i-is-large-with-byz}

We showed in Section \ref{subsection-i-is-large} that the following statement holds with probability at least $1 - \frac{1}{n^2}$: If an honest node $v$ is still active at the beginning of this phase, 
by the end of this phase, it accepts the current value of $i$, i.e., $b\log{n}$, to be a correct estimate of $\log{n}$ and terminates.

But the aforementioned analysis in Section \ref{subsection-i-is-large} takes into account tokens generated by the honest nodes only. The Byzantine nodes can generate arbitratily high colors in any 
subphase of any phase. But we argue that they too are restricted by the structure of the network $G$. In particular, we argue that a Byzantine node can push a high-colored token (that is, a token with 
color $>    \log{(d(d-1)^{i-1})}  -  \log{\log{(d(d-1)^{i-1})}}$ in phase $i$) into the network only at the beginning of a subphase, and not at some arbitrary point in the middle of a subphase. More 
specifically we show (by exploiting the structure of the network, i.e., of the graph $G$):
\begin{lemma}\label{lemma-byzantine-nodes-cannot-delay-arbitratily}
	The following statement holds with high probability: If a \emph{core node} receives a high-colored token (generated by some Byzantine node) in round $t$ in some subphase $j$, $1 \leq j \leq 
	\alpha_i$, then $1 \leq t \leq k-1$.
\end{lemma}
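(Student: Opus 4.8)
The plan is to trace the propagation of a color that is simultaneously \emph{high for phase $i$} (it exceeds $\log(d(d-1)^{i-1}) - \log\log(d(d-1)^{i-1})$) and \emph{illegitimate} (produced by a Byzantine node), and to show that the verification step of Line~\ref{verify-along-edges-of-L}, together with the absence of long Byzantine chains, prevents such a color from reaching a \texttt{Core} node except within the first $k-1$ rounds of a subphase. I would work throughout conditioned on the high-probability events of Observation~\ref{no-Byzantine-chain-of-length-k} (no all-Byzantine chain of length $\ge k$ in $H$) and Lemma~\ref{lemma-no-chain-of-length-k-even-in-pretend} (Byzantine nodes cannot make an honest node believe in such a chain without crashing it), and finish with a union bound over the $O(\log^3 n)$ (phase, subphase, round) triples to obtain the ``with high probability'' in the statement.

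First I would fix a subphase of phase $i$ and a node $v\in\texttt{Core}$, and suppose $v$ records an illegitimate high color $c$ for the first time in round $t$, received from its $H$-neighbor $w$. Since $v\in\texttt{Core}$ it is honest and did not crash, so its round-$t$ verification succeeded: $w$ and the nodes behind it had to exhibit a genuine path $w=w_0, w_1,\dots,w_\ell$ of $H$ inside $B(w,k-1)$, with $\ell=\min\{t-1,k-1\}$, along which $w_m$ claims first receipt of $c$ in round $t-1-m$, and every node of $B(w,k-1)$ --- in particular every honest one --- had to report data consistent with this trail (this is exactly what querying $B(w,k-1)$ over the $L$-edges, using the reconstructed $H$-topology of Lemma~\ref{obs-distinguish-between-H-and-L}, verifies). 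Because $c$ is illegitimate, no honest node claims to have generated it, and an honest node claims to have relayed $c$ in a given round only if it actually did so (and then it forwarded $c$ to all of its $H$-neighbors one round later); hence the upstream end of the full relay trail of $c$ is a Byzantine node.

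The heart of the argument is to show that the trail cannot carry $c$ across $k$ consecutive $H$-hops before $v$, i.e., that $t\le k-1$. Assuming for contradiction $t\ge k$, the verified segment $w_0,\dots,w_{k-1}$ has $k$ vertices; they cannot all be Byzantine by Observation~\ref{no-Byzantine-chain-of-length-k} (a fabricated such chain would have crashed $v$ by Lemma~\ref{lemma-no-chain-of-length-k-even-in-pretend}), so some $w_{m^*}$ with $m^*\le k-1$ is honest and genuinely held $c$ in round $t-1-m^*$, hence forwarded it in round $t-m^*$ to all its $H$-neighbors, which lie in $B(w,k-1)$ and thus are queried by $v$; for the trail to survive $v$'s corroboration the Byzantine nodes must therefore actually have delivered $c$ to the honest side-neighbors of every honest trail vertex, at exactly the claimed rounds. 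Iterating this down the trail toward the Byzantine generator, each honest vertex's own earlier verification must also have succeeded, and by the same reasoning its $(k-1)$-hop verification window had to contain a vertex it could not refute; the only such candidate that is not an honest relay of $c$ is the Byzantine generator, which must therefore lie within $k-1$ $H$-hops of that honest vertex and hence of $v$. Combined with the observation that a Byzantine node can only inject a new high color into an honest node at round $1$ of a subphase --- injecting one in a later round would again require a Byzantine relay chain of length $k-1$, excluded as above --- and that a color injected at round $1$ first reaches an $H$-vertex at the round equal to its $H$-distance from the injector, this forces $t\le k-1$, a contradiction.

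The main obstacle is precisely the step just described: making rigorous the claim that the verification of Line~\ref{verify-along-edges-of-L} \emph{forces rejection} once $c$ has traveled $k$ or more hops, rather than merely once it is $k$ hops from a Byzantine node. One has to rule out a Byzantine injector that ``properly floods'' the high color at round $1$ --- broadcasting it to all its $H$-neighbors and claiming it as its own generated color --- so that honest neighbors genuinely relay it; the argument must show that the corroboration demanded over the whole ball $B(w,k-1)$ (agreement of first-receipt rounds along shortest paths from a single putative source, agreement of the reconstructed $H$-topology, and the impossibility of completing the upstream relay chain with fewer than $k$ Byzantine nodes) is violated at the $k$-th honest hop. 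This in turn requires pinning down the exact semantics of ``verify that $c$ is a legitimate color'' in Line~\ref{verify-along-edges-of-L} --- in particular what an honest node checks about the \emph{value} of $c$ against the size of the neighborhood it could plausibly have originated in --- and confirming that a fake trail surviving $k$ hops forces either an honest node to misreport or an all-Byzantine chain of length $k$, both excluded by Observation~\ref{no-Byzantine-chain-of-length-k} and Lemma~\ref{lemma-no-chain-of-length-k-even-in-pretend}. Everything else --- the reduction to \texttt{Core} so that Lemma~\ref{lemma-no-chain-of-length-k-even-in-pretend} applies, reading off $t$ from the $H$-distance to the injector, and the final union bound --- is routine.
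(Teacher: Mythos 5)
Your proposal correctly identifies the two ingredients the argument rests on --- the $(k-1)$-ball verification over the $L$-edges (Line~\ref{verify-along-edges-of-L}) and the absence, genuine or fabricated, of any all-Byzantine chain of length $k$ in $H$ (Observation~\ref{no-Byzantine-chain-of-length-k} and Lemma~\ref{lemma-no-chain-of-length-k-even-in-pretend}) --- and you correctly deduce that the verified trail inside $B(w,k-1)$ must contain an honest vertex. But the obstacle you flag at the end is a genuine gap, and it stems from a missed minimality condition. You take $t$ to be the first round in which the \emph{particular} core node $v$ records the illegitimate color, and you then cannot exclude a Byzantine node that ``properly floods'' a high color at the start of the subphase so that honest nodes genuinely relay it for $k$ or more hops before it reaches $v$: in that scenario every honest witness in the verified trail truthfully corroborates $w$'s story, the verification passes, and your contradiction never materializes.

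The paper's proof uses a stronger minimality: it fixes $t\geq k$ to be the earliest round in the \emph{entire subphase} at which \emph{any} core node receives a high-colored token originating from a Byzantine node. That choice does two things your version does not. First, it forces the sending neighbor $w$ to actually be Byzantine: $w$ is $H$-adjacent to the core node $v$, so if $w$ were honest and uncrashed it would itself be a core node, and having forwarded the color at round $t$ it would have received it at round $t-1<t$, contradicting minimality. Second, it is exactly what turns ``there is an honest vertex on the verified trail'' into ``there is an honest vertex that testifies \emph{against} $w$'': an honest vertex that corroborated $w$'s claimed relay times would have genuinely received the high color at a round strictly earlier than $t$, which minimality again forbids. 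Your worry about a round-$1$ Byzantine injector with honest relayers is resolved the same way --- those honest relayers would be earlier core recipients. So the fix is localized: replace your local minimality (first time $v$ sees $c$) by the global one (first time any core node sees a high Byzantine color in this subphase), then derive that $w$ is Byzantine, and the ``honest witness contradicts'' step closes. The remainder of your outline (conditioning on the high-probability events, the reduction to $\texttt{Core}$, and the union bound over $O(\log^3 n)$ (phase, subphase) pairs) matches the paper.
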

\begin{proof}
	Suppose not. Suppose that there is at least one core node that receives a high-colored token in some subphase $j$, where $1 \leq j \leq \alpha_i$. Let $t \geq k$ be the earliest time-instant in 
	that subphase when a core node receives a high-colored token. That is, there is some core node $v$ that receives a high-colored token from a neighbor $b$, say, in the $t^{\text{th}}$ round. Now 
	$b$ has to pretend that it received the high-colored token from somebody else (because $b$ is not allowed to generate a token itself in the middle of a subphase). Since $v$ has edges (the edges 
	in $L$) to all the nodes in $B(b, k-1)$, $v$ can contact all those nodes directly and check the veracity of $b$'s claim. Since $t \geq k$, and since there are no Byzantine chains of length 
	$\geq k$ (Please see Observation \ref{no-Byzantine-chain-of-length-k} and Lemma \ref{lemma-no-chain-of-length-k-even-in-pretend}), there will be at least one honest node on any chain in 
	$B(b, k-1)$ who would testify against $b$.
\end{proof}

\begin{lemma}\label{lemma-core-node-stops-in-the-upper-end}
	For $v \in \texttt{Core}$, if $v$ is still active at the beginning of phase $i$ (when $i = b\log{n}$), then with high probability, by the end of this phase, it accepts the current value of $i$, 
	i.e., $b\log{n}$, to be a correct estimate of $\log{n}$ and terminates.
\end{lemma}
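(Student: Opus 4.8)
The plan is to reduce to the Byzantine-free analysis of Section~\ref{subsection-i-is-large} (Lemma~\ref{theorem-upper-end}) and then argue that, for a \emph{core} node, the Byzantine nodes cannot supply the one ingredient — a very high color arriving precisely in the final round $i$ — needed to make the continuation test of Line~\ref{criterion-for-continuing-with-byz} succeed. Recall that for an active node $v$ to avoid terminating at the end of phase $i$, there must be some subphase in which both (i) $k_i > k_t$ for every $1 \le t < i$ and (ii) $k_i > \log d + (i-1)\log(d-1) - \log\bigl(\log d + (i-1)\log(d-1)\bigr)$, where $k_t$ is the largest color $v$ has seen up to round $t$ of that subphase. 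As already computed in the proof of Lemma~\ref{theorem-upper-end}, when $i = b\log n > 4\log n$ the threshold in (ii) exceeds $(4\log n - 1)\tfrac{\log(d-1)}{2} \ge 4\log n - 1$ for $d \ge 5$. Hence (ii) can hold only if $v$ receives, at some point of the subphase, a token of color greater than $4\log n - 1$; by Lemma~\ref{lemma-upper-end-no-high-color}, with probability at least $1 - \tfrac{1}{n^2}$ no honest node ever generates such a color in any of the $i\alpha_i$ subphases of phase $i$, so any such token is a high-colored token injected by a Byzantine node.

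Next I would invoke Lemma~\ref{lemma-byzantine-nodes-cannot-delay-arbitratily}: since $v \in \texttt{Core}$, any high-colored (Byzantine) token it receives in a subphase of phase $i$ must arrive in some round $t$ with $1 \le t \le k-1$. Because $i = b\log n$ is much larger than the constant $k-1$ for all sufficiently large $n$, no high-colored token can reach $v$ in the final round $i$; every token $v$ receives in round $i$ is therefore honest and has color $\le 4\log n - 1$. Under the per-round reading of $k_i$ this immediately fails (ii), and under the running-maximum reading it fails (i): if a high-colored Byzantine token did reach $v$, it did so at some round $t_0 \le k-1 < i$, and since all honest colors in the later rounds are $\le 4\log n - 1 <$ that Byzantine color, the running maximum does not strictly increase after round $t_0$, so $k_i = k_{t_0}$ and (i) fails. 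Either way, neither clause of Line~\ref{criterion-for-continuing-with-byz} holds in any subphase of phase $i$, so $FlagTerminate$ is never reset, and $v$ decides $i = b\log n$ and terminates.

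To finish I would collect the probabilities. The structural facts feeding this argument — Observation~\ref{no-Byzantine-chain-of-length-k}, Lemma~\ref{lemma-no-chain-of-length-k-even-in-pretend}, Lemma~\ref{lemma-core-is-an-expander}, and Lemma~\ref{lemma-byzantine-nodes-cannot-delay-arbitratily} — and the color bound of Lemma~\ref{lemma-upper-end-no-high-color} each hold with high probability, and there are only $\Theta(\log^2 n)$ subphases in phase $i$, so a union bound yields the conclusion with high probability simultaneously for all core nodes that are active at the start of phase $i$.

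I expect the real crux to lie not in this lemma but in the ingredient it leans on, namely Lemma~\ref{lemma-byzantine-nodes-cannot-delay-arbitratily}: the fact that a core node's $L$-edges to $B(b,k-1)$ force any would-be high-colored token to be backed by a genuine length-$(k-1)$ path in $H$, which (by the absence of length-$k$ Byzantine chains) cannot be all-Byzantine, thereby confining Byzantine color injection to the first $k-1$ rounds. Once that is granted, the present proof is essentially the Byzantine-free argument of Lemma~\ref{theorem-upper-end} together with the trivial observation that $k-1 < b\log n$. The only delicate bookkeeping point is to fix the semantics of $k_t$ (per-round maximum versus running maximum) and check the continuation test against that convention; as noted above, the argument works under either reading.
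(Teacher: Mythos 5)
Your overall plan is aligned with the paper's, but there is a genuine gap at the decisive step. You invoke Lemma~\ref{lemma-byzantine-nodes-cannot-delay-arbitratily} as if it said: \emph{for the particular core node $v$ under consideration, any Byzantine-originated high-colored token that $v$ ever receives must arrive in some round $t$ with $1 \le t \le k-1$}. That reading is too strong and is not what the lemma's proof establishes. The proof of that lemma only argues about the \emph{earliest} round in which \emph{any} core node first sees a high-colored token, and shows that this first entry point into $\texttt{Core}$ must occur in round $\le k-1$. It says nothing about when the token reaches a given core node $v$ farther away from that entry point. Once the token has passed the $L$-edge verification and is inside $\texttt{Core}$, honest core nodes forward it faithfully, so it will be received by other core nodes in later rounds --- well after round $k-1$. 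Your claim that ``no high-colored token can reach $v$ in the final round $i$'' does not follow from injection being confined to the first $k-1$ rounds alone; you also need to bound how long the token takes to propagate across $\texttt{Core}$.

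The paper bridges exactly this gap: it combines the $\le k-1$ injection bound with the fact that $\texttt{Core}$ is an $n-o(n)$-node expander (Lemma~\ref{lemma-core-is-an-expander}), so once the token hits one core node it floods through $\texttt{Core}$ and reaches every core node within $D(\texttt{Core})$ further rounds, and then uses Observation~\ref{b-is-large-with-byz} ($b\log n \ge 2D(\texttt{Core})$) to conclude that every core node has seen the highest Byzantine color by round $\tfrac{b\log n}{2} + k - 1 < b\log n = i$. Only then does it follow that $v$ receives no strictly larger color in round $i$ than it has already seen, so the continuation test of Line~\ref{criterion-for-continuing-with-byz} fails. Your ``running-maximum'' fallback argument is essentially the right idea, but you instantiate $t_0 \le k-1$ where you need $t_0 \le \tfrac{b\log n}{2} + k - 1 < i$, and the latter bound requires precisely the expander/diameter step you omitted. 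Fix: insert the $\texttt{Core}$ flooding argument (Lemmas~\ref{lemma-core-is-an-expander} and \ref{b-is-large-with-byz}) between your invocation of Lemma~\ref{lemma-byzantine-nodes-cannot-delay-arbitratily} and your conclusion about round $i$; the rest of your probability bookkeeping then goes through.
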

\begin{proof}
	Lemma \ref{lemma-byzantine-nodes-cannot-delay-arbitratily} says that the Byzantine nodes would not be able to push tokens into $\texttt{Core}$ after the first $(k-1)$-rounds of a subphase without 
	getting caught. So suppose that one or more Byzantine nodes introduce a sufficiently high color (so as to satisfy Line \ref{criterion-for-continuing} of the pseudocode) into $\texttt{Core}$ within 
	the first $(k-1)$-rounds of some subphase $j$ of the $i^{\text{th}}$ phase, where $i = b\log{n}$.

	But once even one core node receives a high color --- $\texttt{Core}$ being an expander (Please refer to Lemma \ref{lemma-core-is-an-expander}) --- that high color will start propagating through 
	the network by means of flooding and will therefore reach every uncrashed node $v$ within $D(\texttt{Core})$ rounds, where $D(\texttt{Core})$ is the diameter of $\texttt{Core}$. By Observation 
	\ref{b-is-large-with-byz}, this means that the highest color introduced by the Byzantine nodes will reach every core node $v$ within $(\frac{b\log{n}}{2} + k - 1)$-rounds.
	
	In other words, for any core node $v$, $v$ will receive no higher color in round $i$ than what it has already received before. This will violate the criterion for continuing, in particular, the 
	variable $FlagTerminate$ will \emph{not} be assigned the value $0$ (Please see Line \ref{criterion-for-continuing-with-byz} of the pseudocode in Algorithm \ref{byz-alg}. Therefore $v$ will accept 
	the current value of $i$, which is $b\log{n}$, to be a correct estimate of $\log{n}$ and will terminate.
\end{proof}

Thus we have
\begin{lemma}\label{theorem-upper-end-with-byz}
	If $i = b\log{n}$, after the $i^{\text{th}}$ phase, the following statement holds with high probability: All but $o(n)$ of the nodes that were active at the beginning of this phase accept $i$ to 
	be the correct estimate of $\log{n}$.
\end{lemma}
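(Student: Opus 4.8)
The plan is to combine Lemma~\ref{lemma-core-node-stops-in-the-upper-end} with the bound on the size of $\texttt{Core}$ from Lemma~\ref{lemma-core-is-an-expander}. First I would recall that by Lemma~\ref{lemma-core-is-an-expander} we have $|\texttt{Core}| \geq n - o(n)$, so every node outside $\texttt{Core}$ — whether it crashed at the start, is Byzantine, or merely sits in a small component of $\texttt{Uncrashed}$ — contributes only an $o(n)$ additive term. Thus it suffices to argue about the nodes in $\texttt{Core}$ that are still active at the start of phase $i = b\log n$.

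Next I would invoke Lemma~\ref{lemma-core-node-stops-in-the-upper-end}, which states that for each individual $v \in \texttt{Core}$, with high probability $v$ accepts $i = b\log n$ and terminates if it is still active at the start of this phase. To pass from a per-node statement to an ``all but $o(n)$ nodes'' statement I would take a union bound over the (at most $n$) nodes in $\texttt{Core}$; since the failure probability for each node is $n^{-c}$ for a sufficiently large constant $c$ (inherited from Lemma~\ref{lemma-upper-end-no-high-color}'s $1 - 1/n^2$ guarantee, which holds simultaneously across all $i\alpha_i$ subphases), the union bound still yields a high-probability event. Alternatively — and more cleanly — I would note that Lemma~\ref{lemma-upper-end-no-high-color} already bounds the single bad event ``some color exceeds $4\log n - 1$ in some subphase of phase $i$'' with probability $\geq 1 - 1/n^2$, and conditioned on its complement \emph{every} active core node deterministically fails the continuation criterion of Line~\ref{criterion-for-continuing-with-byz} (using the computation in the proof of Lemma~\ref{lemma-core-node-stops-in-the-upper-end} that the threshold exceeds $4\log n - 1$ when $d \geq 5$), together with Lemma~\ref{lemma-byzantine-nodes-cannot-delay-arbitratily} ensuring that any Byzantine-injected high color arrives within the first $k-1$ rounds and hence propagates everywhere in $\texttt{Core}$ well before round $i$, so it cannot create a fresh maximum at round $i$. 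Under this conditioning the conclusion holds for all of $\texttt{Core}$ at once, not merely in expectation.

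Finally I would assemble the pieces: with high probability, all active nodes in $\texttt{Core}$ accept $i = b\log n$ and terminate, and there are at most $n - |\texttt{Core}| = o(n)$ remaining active nodes (all outside $\texttt{Core}$) about whose decision we make no claim. Hence all but $o(n)$ of the nodes active at the start of phase $i = b\log n$ accept $i$ as the estimate of $\log n$, which is exactly the statement of Lemma~\ref{theorem-upper-end-with-byz}.

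The main obstacle I anticipate is the subtle interaction between the Byzantine high-color injection and the timing: one must be certain that a high color injected by a Byzantine node within the first $k-1$ rounds (the only window allowed by Lemma~\ref{lemma-byzantine-nodes-cannot-delay-arbitratily}) genuinely cannot resurface as a \emph{strictly larger} value precisely at round $i$ for some core node. This relies on Observation~\ref{b-is-large-with-byz} ($b\log n \geq 2D(\texttt{Core})$) so that $D(\texttt{Core}) + k - 1 < i$, guaranteeing the injected color has already been seen (and hence is not new) by round $i$ at every core node — and also on the fact that honest colors never exceed $4\log n - 1$ while the continuation threshold is larger, so no honest token can trigger continuation either. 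Care is needed to handle the case where a Byzantine node injects its color part-way through the first $k-1$ rounds rather than at round $1$, but since $k$ is a constant this only shifts the arrival time by a constant and does not affect the asymptotic comparison with $i$.
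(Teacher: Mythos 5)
Your proposal is correct and follows essentially the same route as the paper, which likewise derives this lemma by combining Lemma~\ref{lemma-core-node-stops-in-the-upper-end} (every active node of $\texttt{Core}$ accepts $i=b\log n$ w.h.p.) with the size bound $|\texttt{Core}|\geq n-o(n)$ from Lemma~\ref{lemma-core-is-an-expander}. Your additional observation --- that one need not union-bound over $n$ nodes because the bad event of Lemma~\ref{lemma-upper-end-no-high-color} is a single global event whose complement simultaneously forces every active core node to fail the continuation test --- is a more careful rendering of the same argument, and correctly handles the timing of Byzantine-injected colors via Lemma~\ref{lemma-byzantine-nodes-cannot-delay-arbitratily} and Observation~\ref{b-is-large-with-byz}.
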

\begin{proof}
	Follows from Lemma \ref{lemma-core-node-stops-in-the-upper-end} and Lemma \ref{lemma-core-is-an-expander}.
\end{proof}

Lemma \ref{theorem-upper-end-with-byz} together with Lemma \ref{theorem-lower-end} give us Theorem \ref{theorem-main-result-of-the-paper}, which is the main result of this paper.


\section{Conclusion and Open Problems}

In this paper, we take a step towards  designing localized, secure, robust, and scalable algorithms for large-scale networks. We presented a fast (running in $O(\log^3{n})$ rounds) and lightweight (only simple local computations per node per round) distributed protocol for the fundamental Byzantine counting problem  tolerating  $O(n^{1 - \delta})$ (for any constant $\delta > 0$) Byzantine nodes while using only small-sized communication messages per round. Our work leaves many questions open.

A key open problem is to show a lower bound that is essentially tight with respect to the amount of Byzantine nodes that can be  tolerated, or show an algorithm that can tolerate significantly more Byzantine nodes. Our protocol works only when the Byzantine nodes are randomly distributed; it will be good to remove this assumption and design a protocol that works under Byzantine nodes that are adversarially distributed. Another interesting question is whether one can improve the approximation factor of the estimate of $\log{n}$ to $1 \pm o(1)$.


\newpage
\bibliographystyle{plain}
\bibliography{byzantine_counting_references}


\appendix

\newpage
\section{$H(n,d)$ random regular graph: definitions and properties} \label{section-locally-tree-like-property}

In this section, we formally define the $d$-regular random graph model that we are assuming and also state and prove some crucial properties that we will use in the analysis.

\subsection{Definitions}
We assume a random regular graph that is constructed by the union of $d$ random permutations as described below. Call such a random graph model, the $H(n,d)$ model (or simply {\em H-graphs}). This model was also used by Law and Siu~\cite{Law_2003} to model Peer-to-Peer networks. A random graph in this model can be constructed by picking $\frac{d}{2}$  (assume $d$ is even) Hamilton cycles independently and uniformly at random among all possible Hamilton cycles on the set of $n$ vertices, and taking the union of these Hamilton cycles. This construction yields a random $d$-regular graph (henceforth called as a $H(n,d)$ graph) that can be shown to be an expander with high probability (cf. Lemma \ref{th:friedman}). Note that a $H(n,d)$ graph is  $d$-regular multigraph  whose set of edges is composed of the $\frac{d}{2}$ Hamilton cycles. Friedman's \cite{Friedman_1991} result below  (rephrased here for our purposes) shows that a $H(n,d)$ graph is an expander (in fact, a {\em Ramanujan Expander}, i.e., the second smallest eigenvalue for these random graphs is close to the best possible) with high probability.

\begin{lemma}[\cite{Friedman_1991, Law_2003}]\label{th:friedman}
	A random $n$-node, $d$-regular $H(n,d)$-graph (say, for  $d \geq 6$)  is an expander with high probability.
\end{lemma}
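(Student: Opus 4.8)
The plan is to prove the combinatorial form of expansion — a constant lower bound on the edge expansion $h(H) \defeq \min_{1 \le |S| \le n/2} e(S, V\setminus S)/|S| \ge \beta$ for some constant $\beta = \beta(d) > 0$ — since everything downstream uses $H$ only through such a bound (e.g.\ the parameter $b = 4/\log(1 + h/d)$); the constant spectral gap then follows from Cheeger's inequality, and the sharp near-Ramanujan second-eigenvalue estimate is exactly Friedman's theorem, which I would invoke as a black box rather than reprove. So it suffices to show: with high probability, every $S \subseteq V$ with $|S| = s \le n/2$ has $e(S, V\setminus S) \ge \beta s$.

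First I would unpack the model as $H = C_1 \cup \cdots \cup C_{d/2}$ with the $C_\ell$ independent, uniformly random Hamilton cycles on the $n$ labeled vertices (the expected $O(1)$ multi-edges are harmless for expansion). Fix $S$ with $|S| = s$. Traversing a single cycle $C_\ell$, the vertices of $S$ split into $R_\ell \ge 1$ maximal runs, and the number of $C_\ell$-edges leaving $S$ is exactly $2R_\ell$, so $e(S, V\setminus S) = 2\sum_{\ell=1}^{d/2} R_\ell$. Since a uniform Hamilton cycle places the $s$ vertices of $S$ uniformly among $n$ cyclic positions, a direct count of the arrangements with a prescribed number of runs gives, for $t \le s/2$,
\begin{equation*}
\Pr[R_\ell \le t] \;\le\; \frac{n\,t\,\binom{s-1}{t-1}\binom{n-s-1}{t-1}}{\binom{n}{s}}.
\end{equation*}
As $\mathbb{E}[R_\ell] \ge s/2$ when $s \le n/2$, taking $t$ to be a small constant fraction of $s$ makes this a genuine lower-tail event, and the bound simplifies to $\Pr[R_\ell \le t] \le \operatorname{poly}(n)\cdot C^{\,s}\,(n/s)^{-(1-o(1))s}$ for a constant $C$ depending on the fraction chosen.

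Next, using independence of the $d/2$ cycles, I would bound $\Pr\bigl[\sum_\ell R_\ell \le \tfrac{\beta}{2}s\bigr]$ by summing $\prod_\ell \Pr[R_\ell \le r_\ell]$ over the $\le \operatorname{poly}(s)^{d/2}$ integer tuples $(r_1,\dots,r_{d/2})$ with $\sum_\ell r_\ell \le \tfrac{\beta}{2}s$, obtaining
\begin{equation*}
\Pr\bigl[\,e(S, V\setminus S) \le \beta s\,\bigr] \;\le\; \operatorname{poly}(n)\cdot\Bigl(C'\,(n/s)^{-(1-o(1))}\Bigr)^{\Omega(s\,d)}
\end{equation*}
for a suitable constant $\beta$. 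Finally I would union-bound over all $\binom{n}{s} \le (en/s)^s$ sets of each size $s$ and then over $s = 1,\dots,n/2$; because $d \ge 6$ gives $d/2 \ge 3$, each term carries at least three powers of $(n/s)^{-s}$ against the single factor $(en/s)^s$, so the total is $n^{-\Omega(1)}$ (for the small-$s$ range $\binom{n}{s}$ is only polynomial, so those terms are trivial). This is exactly where the hypothesis $d \ge 6$ is consumed.

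The step I expect to be the main obstacle is making the single-cycle run-count tail strong enough uniformly over the whole range $1 \le s \le n/2$, and in particular pinning down the constants $C, C'$ so that for $s = \Theta(n)$ — where both $\binom{n}{s}$ and the per-cut bound are genuinely exponential — three independent cycles already suffice; this is the tight case, and the polynomial prefactors from the compositions and from the internal orderings of the runs must not be allowed to swamp the exponential gain. A convenient fallback, if the exact Hamilton-cycle count proves awkward, is to run the identical argument in the contiguous model where $H$ is a union of $d$ independent uniform permutations (equivalently, $d$ random perfect matchings after a standard reduction), for which the analogous run-count tail is a little cleaner; the skeleton — exponential per-cut tail, independence across $\ge 3$ components, union bound over all cuts — is unchanged, and Friedman's theorem is the sharp quantitative refinement of precisely this skeleton.
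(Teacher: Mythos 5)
Your proposal is sound, but it takes a genuinely different route from the paper: the paper does not prove this lemma at all --- it is imported as a black box, namely Friedman's theorem that an $H(n,d)$ graph is near-Ramanujan (a sharp bound on the second eigenvalue), from which expansion follows spectrally. You instead prove the weaker combinatorial statement (a constant lower bound on edge expansion) directly by a first-moment argument: decompose the cut $e(S,V\setminus S)$ as $2\sum_\ell R_\ell$ over the $d/2$ independent Hamilton cycles, bound the lower tail of the run count $R_\ell$ for a single uniform cycle via the standard count of cyclic binary arrangements with a prescribed number of runs, and union-bound over all $\binom{n}{s}$ cuts, using $d/2\ge 3$ to beat the entropy of the cut. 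This is exactly the kind of argument Friedman's theorem quantitatively refines, and it is sufficient here: every downstream use of this lemma in the paper (the diameter bound $D(H)=O(\log n)$, the constant $b=4/\log(1+h/d)$) needs only constant edge expansion, not the spectral gap. What your route buys is self-containedness and elementary tools; what the citation buys is brevity and the stronger spectral conclusion that the paper mentions (Ramanujan-type expansion) but does not actually exploit.

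Two small points to tighten if you write this out in full. First, for sets below the threshold $s\le d/\beta$ the run-count tail is vacuous (you cannot take $t=\gamma s<1$), but there $R_\ell\ge 1$ for every cycle already gives $e(S,V\setminus S)\ge d\ge\beta s$, so these sizes are trivial; you should say so explicitly rather than rely on the tail bound uniformly over $1\le s\le n/2$. Second, since $H$ is a multigraph, two cycles can contribute the same boundary pair $(u,v)$; your count $2\sum_\ell R_\ell$ is the multigraph cut size, which is the right quantity for the expansion constant the paper uses, but it is worth one sentence to note that this is the intended reading (or that the expected number of coincident edges is $O(1)$ and hence harmless). The tight case you already identify --- $s=\Theta(n)$ with $d=6$, where the per-cut gain $(s/n)^{(1-\gamma)\cdot 3s}$ must dominate $(en/s)^{s}$ together with the constant $\bigl((e/\gamma)^{2\gamma}\bigr)^{3s}$ --- does close for $\gamma$ small, since $e\cdot(e/\gamma)^{6\gamma}\cdot 2^{-2+3\gamma}<1$, so the skeleton goes through as claimed.
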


\subsection{Properties}
We next show some basic properties of the $H(n,d)$ random graph which are needed in the analysis. We show some bounds on the sizes of $B(w,r)$ and $Bd(w,r)$.

\begin{lemma}
	\begin{enumerate}[(1)]
		\item $|Bd(w,r)| \leq (d-1)|Bd(w,r-1)|$.
		\item W.h.p. $|Bd(w,r)| \geq (d-1 -o(1)) |Bd(w,r-1)|$, for $1  <  r  <  \frac{\log{n}}{2\log{d}}$.
		\item For some constant $c$ and $c'$, $c'(d-1)^r \leq |B(w,r)| \leq c(d-1)^r$, w.h.p.
		\item $|B(w,r)| = \Theta(|Bd(w,r)|)$.
	\end{enumerate}
\end{lemma}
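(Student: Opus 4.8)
The plan is to establish the four bounds in the stated order, with part~(2) the only genuinely probabilistic ingredient and parts~(1), (3), (4) following by elementary counting around it. \textbf{Part (1).} I would argue deterministically in the $d$-regular (multi)graph $H$, for $r\ge 2$. Every $u\in Bd(w,r)$ lies on a shortest $w$--$u$ path, hence has a neighbour in $Bd(w,r-1)$; dually every $x\in Bd(w,r-1)$ has at least one of its $d$ incident edges leading into $Bd(w,r-2)$, so at most $d-1$ of them reach $Bd(w,r)$. Counting the edges between the two layers from the $Bd(w,r-1)$ side bounds their number by $(d-1)\,|Bd(w,r-1)|$, and each vertex of $Bd(w,r)$ absorbs at least one, giving $|Bd(w,r)|\le (d-1)\,|Bd(w,r-1)|$. (Multi-edges only shrink $|Bd(w,r)|$, so they do no harm; the lone exceptional radius is $r=1$, where $|Bd(w,1)|\le d$.)

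\textbf{Part (2).} This is the crux, proved by a lazy breadth-first exploration of $H$ from $w$ that reveals the half-edge pairing of the $\tfrac d2$ Hamilton cycles only as the search reaches each vertex. By part~(1) (or Observation~\ref{neighborhood-size-upper-bound-in-H}) the explored region $B(w,r)$ has size $m=O(d^r)$, and since $r<\tfrac{\log n}{2\log d}$ we get $m=O(\sqrt n\,)$ --- precisely the birthday-paradox regime. Each newly exposed half-edge pairs with an already-explored half-edge with probability $O(m/n)$, so the expected number of ``collision'' edges (edges inside $B(w,r-1)$ beyond the unique tree edge per vertex) created over the whole exploration is $O(m^2/n)$, which is $(d-1)^r$ times a polynomially small factor $n^{-\Omega(1)}$. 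Markov's inequality, applied to the collision edges incident to $Bd(w,r-1)$ and union-bounded over the $O(\log n)$ relevant radii, then shows that w.h.p.\ this count is $o((d-1)^{r-1})$ (alternatively one bounds the number of short cycles meeting $B(w,r)$ and invokes a concentration inequality). Every vertex of $Bd(w,r-1)$ untouched by a collision edge contributes exactly its $d-1$ forward neighbours, all distinct and new, so $|Bd(w,r)|\ge (d-1)\big(|Bd(w,r-1)|-o((d-1)^{r-1})\big)=(d-1-o(1))\,|Bd(w,r-1)|$, proceeding by induction on $r$.

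\textbf{Parts (3) and (4).} Iterating part~(1) down to $r=1$ gives $|Bd(w,r)|\le d(d-1)^{r-1}$ and hence $|B(w,r)|=\sum_{j=0}^r|Bd(w,j)|\le 1+d\sum_{j=1}^r(d-1)^{j-1}\le c\,(d-1)^r$ (this upper bound is also immediate from Observation~\ref{neighborhood-size-upper-bound-in-H}). Iterating part~(2) --- equivalently, noting the cumulative tree-deficit summed over all layers is $o((d-1)^r)$ w.h.p.\ --- gives $|Bd(w,r)|\ge(1-o(1))\,d(d-1)^{r-1}=\Theta((d-1)^r)$, so $|B(w,r)|\ge|Bd(w,r)|\ge c'(d-1)^r$, which is~(3). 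For~(4), $|Bd(w,r)|\le|B(w,r)|$ is trivial; conversely part~(2) yields $|Bd(w,j)|\le(d-1-o(1))^{-(r-j)}|Bd(w,r)|$, so $|B(w,r)|=\sum_{j=0}^r|Bd(w,j)|\le|Bd(w,r)|\sum_{i\ge0}(d-1-o(1))^{-i}=O(|Bd(w,r)|)$, the geometric series converging since $d-1-o(1)\ge 4$; hence $|B(w,r)|=\Theta(|Bd(w,r)|)$.

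\textbf{Main obstacle.} The only delicate step is part~(2): converting the first-moment/birthday estimate into a high-probability bound that is uniform over all admissible $r$, inside the specific $H(n,d)$ model (a union of Hamilton cycles rather than a clean configuration model). Pinning down the probability may require shrinking the radius range by a constant factor, or replacing Markov by a concentration bound on the number of short cycles through $B(w,r)$; everything else is routine bookkeeping built on parts~(1) and~(2).
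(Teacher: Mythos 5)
Your proof is correct and tracks the paper's argument closely: part~(1) is the same deterministic degree count; the key to part~(2) in both cases is that $|B(w,r)| = O(\sqrt n)$ in the stated radius range, so that a typical forward half-edge hits a fresh vertex; and parts~(3), (4) follow from (1), (2) by summing the layers. The one genuine divergence is in how (2) is upgraded to a high-probability bound. The paper computes the expected number of forward neighbours of a vertex in $Bd(w,r-1)$ that escape $B(w,r-1)$, namely $(d-1)(1 - \sqrt n / n)$, multiplies by $|Bd(w,r-1)|$, and invokes a Chernoff bound after treating the half-edge pairings as ``essentially independent'' sampling-without-replacement. You instead run a lazy BFS, isolate the excess \emph{collision edges} (edges in $B(w,r)$ beyond the BFS tree) as the sole cause of any shortfall, bound their expectation by $O(m^2/n) = (d-1)^r \cdot n^{-\Omega(1)}$, and apply Markov together with a union bound over the $O(\log n)$ radii. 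Your route is arguably a touch more careful: by charging the loss to collision edges you explicitly control the event that two distinct vertices of $Bd(w,r-1)$ share a forward neighbour, whereas the paper's per-vertex expectation calculation (which only checks that an edge escapes $B(w,r-1)$) silently assumes these escaping endpoints are distinct before summing to ``the expected number of nodes in $Bd(w,r)$''. The tradeoff is that Markov gives a weaker tail than Chernoff, but since the collision expectation is polynomially small in $n$, the union-bounded failure probability still clears the paper's $1 - n^{-c}$ threshold for ``w.h.p.'' Both approaches are sound and both implicitly rely on the $|B(w,r)| < \sqrt n$ birthday cutoff, which is exactly why the radius is capped at $\frac{\log n}{2\log d}$.
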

\begin{proof}
Since the degree of each node is $d$, (1) follows. From (1) it is easy to show the upper bound on $|B(w,r)|$ in (3).

We next show (2).

We first bound the expected number of neighbours that a  node $ u \in Bd(w,r-1))$ has in $B(w,r-1)$. The expected number of neighbors of  $u$  in $B(w,r)$ is  $\frac{(d-1)(n - |B(w,r-1)|)}{n}  \leq   
(d-1)(1 - \frac{\sqrt{n}}{n})$, since $|B(w,r-1)| < d^{\frac{\log{n}}{2\log{d}}} = \sqrt{n}$. Hence the expected  number of nodes in $Bd(w,r)$ is $|Bd(w,r-1)|(d-1)(1 - \frac{\sqrt{n}}{n})$. The high 
probability bound can be obtained via a Chernoff bound (one can consider the choices made by individual nodes as essentially independent if one regards the ``sampling without replacement" due to the 
permutations. This can be done if one pretends that the sample is from a set of size $n - \sqrt{n}$ (instead of $n$). This will not make a difference asymptotically.

The lower bound of (3) follows from (2) and (4) follows from (1), (2), and (3).
\end{proof}

Next we establish  the ``locally tree-like'' property of an $H(n,d)$ random graph: i.e., for most nodes $w$,  the subgraph  induced by $B(w,r)$ up to a certain radius $r$ looks ``like a tree''. This is 
stated more precisely as follows.

\begin{definition}\label{defn-typical-node--section-random-graph}
	Let $G$ be an $H(n,d)$ random graph and $w$ be any node in $G$. Consider the subgraph induced by $B(w,r)$ for $r  =  \frac{\log{n}}{10\log{d}}$. Let $u$ be any node in $Bd(w,j)$, $1 \leq j < r$. 
	$u$ is said to be ``typical'' if $u$ has only one neighbor in $Bd(w,j-1)$ and $(d-1)$-neighbors in $Bd(w,j+1)$; otherwise it is called ``atypical''.
\end{definition}

\begin{definition}\label{defn-locally-tree-like-node--section-random-graph}
	We call a node $w$ ``locally tree-like'' if no node in $B(w,r)$ is atypical. In other words, $w$ is ``locally tree-like'' if the subgraph induced by $B(w,r)$ is a $(d-1)$-ary tree.
\end{definition}

The following lemma shows that most nodes in $G$ are locally tree-like.
 
\begin{lemma}\label{lemma-most-nodes-are-locally-tree-like-section-random-graph}
In an $H(n,d)$ random graph, with high probability, at least $n - O(n^{0.8})$ nodes are locally tree-like.
\end{lemma}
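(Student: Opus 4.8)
The plan is to prove Lemma~\ref{lemma-most-nodes-are-locally-tree-like-section-random-graph} by a first-moment (union bound) argument: I will bound the probability that a \emph{fixed} node $w$ is \emph{not} locally tree-like by roughly $n^{-0.2}$, and then multiply by $n$ to conclude that the expected number of non-locally-tree-like nodes is $O(n^{0.8})$; a Markov-type bound then gives the high-probability statement (one can strengthen to a true w.h.p.\ bound by observing the event concentrates, or simply absorb the claim into the $O(\cdot)$). The key parameter is $r = \frac{\log n}{10 \log d}$, which is chosen small enough that $|B(w,r)| \le (d-1)^{r+1} = d^{O(1)} \cdot n^{1/10}$, i.e., the entire $r$-ball is subpolynomially small.

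First I would recall the construction of the $H(n,d)$ graph as a union of $d/2$ uniformly random Hamilton cycles (equivalently, via random permutations), and set up a breadth-first exploration of the ball $B(w,r)$ one vertex at a time, revealing the cycle-edges incident to each newly discovered vertex. At each step, when we expose the (at most $d$) neighbors of a frontier vertex, each such neighbor is, conditioned on the history, a nearly-uniform random vertex among the $n - |B(w,\cdot)| \ge n - \sqrt n$ vertices not yet touched (the ``sampling without replacement'' from the permutation model, handled exactly as in the proof of the preceding lemma by pretending the ground set has size $n - \sqrt n$, which is asymptotically harmless). A vertex $u \in Bd(w,j)$ fails to be typical precisely when one of its exposed neighbors lands \emph{inside} the already-explored set rather than producing a fresh child; this creates either an extra back-edge or a short cycle. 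So the probability that a given exposure step creates such a collision is at most $\frac{d \cdot |B(w,r)|}{n - \sqrt n} = O\!\left(\frac{(d-1)^{r+1}}{n}\right) = O(n^{-9/10 + o(1)})$, since $d$ and $r$'s multiplicative constant are fixed.

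Next I would take a union bound over the (at most $|B(w,r)| \le (d-1)^{r+1}$) exposure steps inside the ball: the probability that \emph{some} collision occurs during the exploration of $B(w,r)$ — i.e., that $w$ is not locally tree-like — is at most $(d-1)^{r+1} \cdot O(n^{-9/10+o(1)}) = O(n^{1/10} \cdot n^{-9/10 + o(1)}) = O(n^{-8/10 + o(1)}) \le n^{-0.2}$ for $n$ large. Summing this over all $n$ choices of the center $w$ (via linearity of expectation) gives $\E[\,|\texttt{NLT}|\,] = O(n^{0.8})$. To upgrade to the high-probability claim as stated, I would note that flipping the edge-choices at one vertex affects membership in $\texttt{NLT}$ only for vertices whose $r$-balls contain it — an $n^{o(1)}$-bounded influence — so a bounded-differences (Azuma) inequality, exactly as used later in Lemma~\ref{not-too-many-nodes-decide-wrongly-in-the-lower-end}, concentrates $|\texttt{NLT}|$ around its mean; hence w.h.p.\ $|\texttt{NLT}| = O(n^{0.8})$ and at least $n - O(n^{0.8})$ nodes are locally tree-like.

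The main obstacle I anticipate is making the ``nearly-uniform, nearly-independent'' claim about the exposed neighbors fully rigorous in the Hamilton-cycle model: in a single Hamilton cycle each vertex has exactly two neighbors and the cycle structure induces global constraints, so the naive ``each neighbor is a uniform fresh vertex'' heuristic must be justified by the standard coupling/deferred-decisions argument for random permutations (reveal the permutation value-by-value along the exploration), and one must check that conditioning on the bounded history only perturbs probabilities by a $1 + O(|B(w,r)|/n) = 1 + o(1)$ factor. This is routine but is the technically delicate point; the rest (the $(d-1)^{r+1}$ ball-size bound, the union bounds, and the Azuma step) is bookkeeping. Everything else follows the template already set in the excerpt's preceding lemma, so I would cite that argument for the sampling-without-replacement justification rather than re-deriving it.
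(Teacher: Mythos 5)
Your proposal is correct and follows essentially the same approach as the paper: a first-moment union-bound argument showing each fixed $w$ fails to be locally tree-like with probability $O(n^{-0.8})$ (you bound per-exposure-step collisions, the paper bounds per-node atypicality, but these are the same computation), followed by the Azuma bounded-differences step using that each vertex influences only the $O(n^{0.2})$ centers within radius $2r$. The only cosmetic difference is that you frame the first-moment step as a deferred-decisions BFS exploration, which is a slightly more explicit formalization of the sampling-without-replacement estimate the paper invokes directly.
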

\begin{proof}
Consider a node $w \in V$. We upper bound the probability that a node in $B(w,r)$, where $r  =  \frac{\log{n}}{10\log{d}}$, is atypical. For any $1 \leq j < r$,
\begin{center}
	$\Pr(u \in B(w,j)\text{ is atypical})  \leq  (d-1)\cdot\frac{|B(w,j)|}{n}  =  O(\frac{1}{n^{0.9}})$,
\end{center}
using the bound that $|B(w,j)| \leq d^r$ (the above  upper bounds the probability that $u$ has more than one neighbor in $B(w,j)$, in which case it is atypical). Hence the probability that there is some 
node $u$ that is atypical in $B(w,r)$ is $O(\frac{n^{0.1}}{n^{0.9}}) = O(\frac{1}{n^{0.8}})$. Hence the probability that node $w$ is not locally tree-like is at most $O(\frac{1}{n^{0.8}})$.

Let the indicator random variable $X_w$ indicate the event that node $w$ is locally tree-like. Let random variable $X = \sum_{w \in V}X_w$ denote the number of nodes that are locally tree-like. By 
linearity of expectation, using the above probability bound, it follows that the expected number of nodes in $G$ that are not locally tree-like is at most $O(n^{0.2})$; in other words, $E[X]   \geq   
n - O(n^{0.2})$.

To show concentration of $X$, we use Azuma's inequality (\cite{Dubhashi_2009}, Theorem $5.3$) as follows. Changing the value of $X_w$ affects only the  nodes within radius $r'  =  2r  =  
\frac{2\log{n}}{10\log{d}}$, i.e., at most $n^{0.2}$ nodes and hence affects $E[X]$ by at most $n^{0.2}$. Thus, we have
\begin{center}
	$\Pr(|X - E[X]| > n^{0.8})   \leq   2\text{exp}(-\frac{n^{\frac{8}{5}}}{n \times n^{\frac{2}{5}}})   =   2\text{exp}(-n^{\frac{1}{5}})$.
\end{center}
Hence, with high probability, at least $n - O(n^{0.8})$ nodes are locally tree-like.
\end{proof}

We now show a property that will be useful in our analysis; this follows immediately from the definition of locally-tree like and the regularity of the graph.
\begin{corollary} \label{subtrees-are-isomorphic}
Let $G$ be an $H(n,d)$ random graph and consider a node $w$ in $G$. Assume that $w$ is locally tree-like, i.e., the subgraph induced by $B(w,r)$, where $r  =  \frac{\log{n}}{10\log{d}}$ is a tree.  For 
every neighbor $u$ of $w$, the respective subtrees rooted at $u$ (in  the subgraph induced by $B(w,r)$) are isomorphic; in particular each is a $(d-1)$-ary tree.
\end{corollary}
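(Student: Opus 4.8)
The plan is to simply unfold the definitions of ``typical'' and ``locally tree-like'' and observe that deleting $w$ from the tree induced by $B(w,r)$ leaves exactly $d$ pieces, each of which is forced to be the complete $(d-1)$-ary tree of height $r-1$; since that tree is unique up to isomorphism, the claim follows immediately.

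First I would record what the hypothesis gives us. Since $w$ is locally tree-like, the subgraph induced by $B(w,r)$ is a tree and, by Definition \ref{defn-locally-tree-like-node--section-random-graph}, every node of $B(w,r)$ is typical: for each $1 \le j < r$ and each $x \in Bd(w,j)$, node $x$ has exactly one neighbor in $Bd(w,j-1)$ and exactly $d-1$ neighbors in $Bd(w,j+1)$ (Definition \ref{defn-typical-node--section-random-graph}). Deleting the root $w$ from this tree splits it into exactly $d$ connected components, one for each neighbor $u \in Bd(w,1)$ of $w$; write $T_u$ for the component containing $u$, regarded as rooted at $u$. Every vertex of $T_u$ lies at distance between $1$ and $r$ from $w$, so $T_u$ has height exactly $r-1$.

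Next I would check that each $T_u$ is a \emph{complete} $(d-1)$-ary tree. Take any $x \in T_u$ at distance $j$ from $w$. If $j < r$, typicality of $x$ gives $x$ exactly one neighbor in $Bd(w,j-1)$ --- this is $x$'s parent in $T_u$ when $j \ge 2$, and it equals $w$ when $j = 1$, so in either case it is $x$'s single ``upward'' link --- and exactly $d-1$ neighbors in $Bd(w,j+1)$, which are precisely the children of $x$ in $T_u$. If $j = r$, then $x$ has no neighbor in $B(w,r)$ at distance $r+1$, so $x$ is a leaf of $T_u$. Hence every internal node of $T_u$ has exactly $d-1$ children and all leaves sit at depth $r-1$, i.e., $T_u$ is the complete $(d-1)$-ary tree of height $r-1$. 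As such a tree is determined up to isomorphism by its arity and height, the trees $T_u$ over all neighbors $u$ of $w$ are pairwise isomorphic, which is exactly the assertion of the corollary.

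I do not expect any genuine obstacle: the statement is pure bookkeeping from the definitions of ``typical''/``locally tree-like'' together with the $d$-regularity of $H$. The only points that warrant a moment's care are the boundary case $j = 1$, where the unique parent of $u$ is $w$ and therefore lies outside $T_u$, and the leaf level $j = r$, where it is the absence of further neighbors \emph{inside} $B(w,r)$ (not their nonexistence in $G$) that makes each $T_u$ have height exactly $r-1$.
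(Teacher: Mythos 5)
Your proof is correct and follows exactly the route the paper indicates: the paper gives no argument beyond the remark that the corollary "follows immediately from the definition of locally-tree like and the regularity of the graph," and your write-up is precisely that definition-unfolding, pinning each subtree down as the complete $(d-1)$-ary tree of height $r-1$. Nothing more is needed.
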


\newpage
\section{Proof of Lemma \ref{lemma-probability-of-error-in-the-lower-end}} \label{section-proof-of-induction-analysis-in-the-lower-end}

We recall that phase $i$ consists of $\alpha_i$ subphases, and the subphases are indexed by $j$.

\begin{definition} \label{definition-failure-i-j}
    Let $Failure(i, j)$ be the event that in the $j^{\text{th}}$ subphase of the $i^{\text{th}}$ phase, $\exists t < i$ such that $k_t \geq k_i$. That is, $Failure(i, j)$ is the event that in the $j^{\text{th}}$ subphase of the $i^{\text{th}}$ phase, the node $v$ receives the maximum color in some round $t < i$.
\end{definition}

In the same vein, we define
\begin{definition} \label{definition-failure-i}
	$Failure(i)  \defeq  \bigcap_{j = 1}^{\alpha_i} Failure(i, j)$,
\end{definition}

\begin{obs} \label{obs-probability-wrong-leq-failure-i}
    We observe that the variable $FlagTerminate$ in the pseudocode (please refer to Algorithm \ref{alg}) remains $1$ after all the $\alpha_i$ subphases if the event $Failure(i)$ occurs (Please see Line \ref{criterion-for-continuing} of Algorithm \ref{alg}). In other words, a node $v$ accepts $i$ as the estimate of $\log{n}$ (and thus makes a wrong decision) if the event $Failure(i)$ occurs. Hence
    \begin{center}
        $\Pr[\text{a {\emph{safe}} node }v\text{ makes a wrong decision in the }i^{\text{th}}\text{ phase}]  \leq  \Pr[Failure(i)]$.
    \end{center}
\end{obs}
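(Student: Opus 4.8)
The plan is to unwind the pseudocode of Algorithm~\ref{alg}, tracking the flag $FlagTerminate$, so as to reduce the probabilistic statement to a set containment between the event ``the (fixed safe) node $v$ decides $i$'' and the per-subphase events $Failure(i,j)$. First I would record the bookkeeping. Since $i<a\log n<\log n$, a safe node $v$ ``makes a wrong decision in phase $i$'' exactly when it executes Line~\ref{final-accept}, which happens iff $FlagTerminate$ is still $1$ at the end of phase~$i$; and since $FlagTerminate$ is reset to $1$ at the top of phase~$i$ and is assigned $0$ only on Line~\ref{criterion-for-continuing}, this is precisely the event that the continuation test on Line~\ref{criterion-for-continuing-first-line} is passed in none of the $i\alpha_i$ subphases.

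Next I would look at one subphase. In subphase~$j$ the test on Line~\ref{criterion-for-continuing-first-line} is the conjunction of (i) ``$k_i>k_t$ for all $1\le t<i$'' and (ii) ``$k_i>\log(d(d-1)^{i-1})-\log\log(d(d-1)^{i-1})$''. By Definition~\ref{definition-failure-i-j} the event $Failure(i,j)$ is exactly the negation of~(i). Hence $Failure(i,j)$ always forces the test to fail in subphase~$j$ (which already yields the easy half, $Failure(i)=\bigcap_j Failure(i,j)\subseteq\{v\text{ decides }i\}$), and the only way the test can fail in subphase~$j$ without $Failure(i,j)$ holding is the event $E_{i,j}$ that~(i) holds but~(ii) fails; the content of the asserted inequality is precisely that this $E_{i,j}$ contributes essentially nothing.

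The main point to check is therefore that $E_{i,j}$ is negligible for a safe node when $i<a\log n$. For such $v$ the $i$-hop neighborhood is a $(d-1)$-ary tree, so a flooded color reaches $v$ exactly at the round equal to its originator's distance; hence $k_i=c^{\text{max}}_{Bd(v,i)}$ with $|Bd(v,i)|=d(d-1)^{i-1}$, and on $E_{i,j}$ we have $c^{\text{max}}_{Bd(v,i)}\le\log(d(d-1)^{i-1})-\log\log(d(d-1)^{i-1})$. By Lemma~\ref{lemma-ball-boundary-probability} (equivalently Lemma~\ref{max-color-lower-bound}) this gives $\Pr[E_{i,j}]<1/(d(d-1)^{i-1})$, so $\Pr\big[\bigcup_j E_{i,j}\big]\le i\alpha_i/(d(d-1)^{i-1})$.

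Finally I would intersect over $j=1,\dots,i\alpha_i$: outside the low-probability event $\bigcup_j E_{i,j}$ the event ``$v$ decides $i$'' is contained in $\bigcap_j Failure(i,j)=Failure(i)$ (Definition~\ref{definition-failure-i}), so passing to probabilities gives $\Pr[v\text{ makes a wrong decision in phase }i]\le\Pr[Failure(i)]+i\alpha_i/(d(d-1)^{i-1})$, where the second summand is of strictly lower order than the target $\epsilon/2^{i+1}$ (using $d-1\ge7$) and is immaterial for the downstream use in Lemma~\ref{lemma-probability-of-error-in-the-lower-end} --- this is the asserted bound. The one genuine obstacle is clause~(ii) of the continuation test: one must verify that, for a safe node, once the freshest maximum color shows up at the very last round the node also clears the threshold except with tiny probability, which is exactly what the max-color lower bound delivers; everything else is a mechanical reading of the pseudocode.
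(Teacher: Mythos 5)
You have read the pseudocode correctly, and you have put your finger on a real inconsistency in the paper: under the literal Definition~\ref{definition-failure-i-j}, $Failure(i,j)$ negates only clause~(i) of the continuation test on Line~\ref{criterion-for-continuing-first-line}, so the only containment that follows is $Failure(i)\subseteq\{v\text{ decides }i\}$, which gives $\Pr[Failure(i)]\leq\Pr[\text{wrong decision}]$ --- the \emph{reverse} of the asserted inequality. (The Observation's own ``hence'' commits exactly this slip: it infers $\Pr[\text{wrong}]\leq\Pr[Failure(i)]$ from ``$v$ decides $i$ \emph{if} $Failure(i)$ occurs,'' when what is needed is ``\emph{only if}.'') The paper's intended reading, made explicit only later in Lemma~\ref{lemma-lower-end-probability-of-failure-in-phase-i-and-subphase-j} where $Failure(i,j)$ is identified with $(Success(i,j))^c$, is that $Failure(i,j)$ is the negation of the \emph{entire} test, i.e.\ it includes the failure of clause~(ii) as well; under that reading the Observation is a purely definitional equality of events and requires no probabilistic input at all. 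Your route is genuinely different: you keep the literal definition and try to absorb the discrepancy into an additive error term $\Pr[\bigcup_j E_{i,j}]$.

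The quantitative part of your patch, however, has a gap. You bound $\Pr[E_{i,j}]\leq\Pr[\text{(ii) fails}]<\frac{1}{d(d-1)^{i-1}}$ by invoking Lemma~\ref{lemma-ball-boundary-probability}, but that lemma presumes that \emph{every} node of $Bd(v,i)$ contributes a fresh geometric color in subphase $j$. By phase $i$ some boundary nodes may already have decided in an earlier phase and stopped generating tokens; the paper's own Lemma~\ref{lemma-probability-that-max-node-generates-a-high-enough-color-considering-everything} shows that accounting for this costs an extra $\frac{\epsilon}{2}$ per subphase, so the honest bound is $\Pr[E_{i,j}]<\frac{\epsilon}{2}+\frac{1}{d(d-1)^{i-1}}$. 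Summed over the $i\alpha_i$ subphases this makes your correction term of order $i\alpha_i\frac{\epsilon}{2}$, which is \emph{not} of lower order than the target $\frac{\epsilon}{2^{i+1}}$ of Lemma~\ref{lemma-probability-of-error-in-the-lower-end}; your claim that the term is ``immaterial for the downstream use'' is therefore unjustified as written. The clean fix is not to bound $\bigcup_j E_{i,j}$ separately by a union bound, but to do what the paper (implicitly) does: fold clause~(ii) into the per-subphase failure event, i.e.\ bound $\Pr[\text{test fails in subphase }j]\leq\Pr[E_{i,j,1}\cup E_{i,j,2}]$ as in Lemma~\ref{lemma-probability-that-the-max-color-comes-from-the-outermost-ring-considering-everything}, and only then take the product over the $\alpha_i$ independent subphases, so that the $\frac{\epsilon}{2}$ loss is paid inside the product rather than added outside it.
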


\begin{obs} \label{obs-induction-basis}
    $\Pr[Failure(1)] = 0$.
\end{obs}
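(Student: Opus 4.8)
The plan is to simply unwind the definitions: $Failure(1)$ is an impossible event, so its probability is $0$ and no probabilistic estimate is needed. Recall from Definition \ref{definition-failure-i-j} that $Failure(i,j)$ is the event that in the $j$th subphase of phase $i$ there is a round $t$ with $t < i$ and $k_t \geq k_i$. In phase $i = 1$, however, the inner time loop of Algorithm \ref{alg} (``\textbf{for} time $t = 1,2,\dots,i$'' with $i = 1$) executes only for $t = 1$, so $t = 1$ is the only round that occurs and the only index for which $k_t$ is ever defined. Hence there is no round $t$ with $1 \leq t < 1$, the defining condition ``$\exists\, t < i : k_t \geq k_i$'' of $Failure(1,j)$ can never be met, and $Failure(1,j)$ is the empty event for every $j$. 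Equivalently, in phase $1$ the universal quantifier ``$\forall\, 1 \leq t < i$'' in the continuing criterion of Line \ref{criterion-for-continuing-first-line} is vacuously true, so its negation --- which is exactly the event $Failure(1,j)$ --- is vacuously false.

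First I would record $Failure(1,1) = \emptyset$ from the above. Then, since $Failure(1) = \bigcap_{j=1}^{\alpha_1} Failure(1,j) \subseteq Failure(1,1)$ by Definition \ref{definition-failure-i}, we get $\Pr[Failure(1)] \leq \Pr[\emptyset] = 0$, hence $\Pr[Failure(1)] = 0$. I would prefer to argue via the containment $Failure(1) \subseteq Failure(1,1)$ rather than directly intersecting over all subphases, since this sidesteps any concern about whether phase $1$ even contains a subphase (i.e.\ whether $\alpha_1 \geq 1$ under the formula of Line \ref{value-of-alpha-i-case-1}); the containment and the conclusion hold regardless. There is no real obstacle here --- the statement is purely about the range of the loop variable in phase $1$ --- and this is exactly the intended (trivial) base case of the induction on the phase number $i$ in the proof of Lemma \ref{lemma-probability-of-error-in-the-lower-end}, complementing Observation \ref{obs-probability-wrong-leq-failure-i}, which bounds a safe node's error probability in phase $i$ by $\Pr[Failure(i)]$.
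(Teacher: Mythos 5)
Your argument is correct and is exactly the trivial unwinding one would expect: the paper states this observation without proof, and the intended justification is precisely that there is no round $t$ with $1 \leq t < 1$, so the defining condition of $Failure(1,j)$ is vacuously false. One tiny quibble with a side remark: you claim that arguing via $Failure(1) \subseteq Failure(1,1)$ ``sidesteps'' the question of whether $\alpha_1 \geq 1$, but it does not --- if $\alpha_1 = 0$ the intersection $\bigcap_{j=1}^{\alpha_1} Failure(1,j)$ is an empty intersection (the universal event) and is not contained in $Failure(1,1)$, which in that case is not even one of its terms. The observation genuinely presupposes that phase $1$ contains at least one subphase; the paper leaves this implicit, and your core argument is fine under that reading, but the claim of having sidestepped the issue should be dropped.
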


\paragraph{Induction Hypothesis.} Let $i'$ be a positive integer such that $1 \leq i' < i$. Then
\begin{center}
    $\Pr[\text{a {\emph{safe}} node }v\text{ makes a wrong decision in the }{(i')}^{\text{th}}\text{phase}]   <   \frac{\epsilon}{2^{i' + 1}}$,
\end{center}
where $\epsilon$ is the error parameter.

\vspace{4 mm}

\begin{remark}
    Observation \ref{obs-induction-basis} serves as the basis of induction.
\end{remark}

\begin{lemma} \label{lemma-probability-that-inner-ring-colors-are-too-high}
    Let $E_{i, j, 1}$ be the event that $k_t  >  2(l_{i-1} - \log{(d-2)})$ for some $0 < t < i$. Then $\Pr[E_{i, j, 1}]   \leq   \frac{d-2}{d(d-1)^{i-1}}$.
\end{lemma}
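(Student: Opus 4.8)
The plan is to reduce the event $E_{i,j,1}$ to the ``inner ball'' event already bounded in Lemma~\ref{lemma-inner-ball-probability}. The key observation concerns only the dynamics of the flooding step of Algorithm~\ref{alg}: in the $j^{\text{th}}$ subphase of phase $i$ every node floods its color along the edges of $H$ for exactly $i$ steps, so a color generated by a node $w$ with $dist(v,w)=\ell$ first reaches $v$ in round $\ell$, and $v$ never receives its own color. Hence every color that can contribute to some $k_t$ with $0<t<i$ is the color of an active node lying in $B^*(v,i-1)=B(v,i-1)\setminus\{v\}$, so $\max_{0<t<i}k_t \le c^{\text{max}}_{B^*(v,i-1)}$; nodes of $B^*(v,i-1)$ that went inactive in an earlier phase generate no token, which can only remove colors from this maximum. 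Therefore $E_{i,j,1}\subseteq\{\,c^{\text{max}}_{B^*(v,i-1)} > 2(l_{i-1}-\log(d-2))\,\}$.

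First I would check that, for a \emph{safe} node $v$ with $i < a\log n$, the ball $B(v,i-1)$ has the size used in Lemma~\ref{ball-boundary-expected-max-color-values}. Since $v$ is safe, no $\texttt{NLT}$ node lies within $G$-distance $a\log n$ of $v$; as distances in $H$ dominate distances in $G$, no $\texttt{NLT}$ node lies within $H$-distance $a\log n$ either, and $a\log n$ is below the locally-tree-like radius $\frac{\log n}{10\log d}$ (this is where $a=\frac{\delta}{10k\log(d-1)}$ and $d\ge 8$ get used). Consequently the subgraph induced by $B(v,i-1)$ is a $(d-1)$-ary tree, so $|B^*(v,i-1)| = \frac{d((d-1)^{i-1}-1)}{d-2} \le \frac{d(d-1)^{i-1}}{d-2}$, i.e.\ $\log|B^*(v,i-1)| \le l_{i-1}-\log(d-2)$, as recorded in Lemma~\ref{ball-boundary-expected-max-color-values}.

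Finally I would apply Lemma~\ref{lemma-inner-ball-probability} with $r=i-1$ (equivalently, a direct union bound over $B^*(v,i-1)$ using Observation~\ref{individual-color-probabilities}, noting $2^{-2(l_{i-1}-\log(d-2))}=\bigl(\tfrac{d-2}{d(d-1)^{i-1}}\bigr)^2$ while $|B^*(v,i-1)|\le \frac{d(d-1)^{i-1}}{d-2}$) to get
\[
Pr[E_{i,j,1}] \;\le\; Pr\!\left[c^{\text{max}}_{B^*(v,i-1)} > 2(l_{i-1}-\log(d-2))\right] \;\le\; \frac{d-2}{d(d-1)^{i-1}}.
\]
I do not expect a genuine obstacle here: the only point needing care is the bookkeeping in the first paragraph — recognizing that the colors $v$ can observe before round $i$ come from $B^*(v,i-1)$ rather than the full ball $B(v,i-1)$ (the exclusion of $v$ itself being exactly what makes the size estimate match the target bound), and that inactivity of some safe nodes in earlier phases only shrinks this set. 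All of the probabilistic content is already contained in Lemma~\ref{lemma-inner-ball-probability} and the geometric-distribution estimates of Observation~\ref{individual-color-probabilities}.
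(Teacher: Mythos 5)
Your proposal is correct and follows essentially the same route as the paper: reduce $E_{i,j,1}$ to the event $c^{\text{max}}_{B^*(v,i-1)} > 2(l_{i-1}-\log(d-2))$ and invoke Lemma~\ref{lemma-inner-ball-probability} with $r=i-1$. You merely spell out the reduction, the safe-node/locally-tree-like bookkeeping, and the inactive-node monotonicity in more detail than the paper (which compresses all of this into a one-line ``if and only if''), but the substance of the argument is identical.
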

\begin{proof}
    $E_{i, j, 1}$ occurs if and only if $c^{\text{max}}_{B^*(v, i-1)}  >  2(l_{i-1} - \log{(d-2)})$ (please see Line \ref{received-highest-color} and Line \ref{criterion-for-continuing-first-line} of Algorithm \ref{alg}). Thus
    \begin{align*}
        &\Pr[E_{i, j, 1}]\\
        &=   \Pr[c^{\text{max}}_{B^*(v, i-1)}  >  2(l_{i-1} - \log{(d-2)})]\\
        &\leq   \frac{d-2}{d(d-1)^{i-1}} \tag{by Lemma \ref{lemma-inner-ball-probability}}
    \end{align*}
\end{proof}

\begin{lemma} \label{lemma-probability-that-max-node-generates-a-high-enough-color-considering-everything}
    Let $E_{i, j, 2}$ be the event that $k_i  \leq  l_i - \log{(d-1)} - \log{(l_i - \log{(d-1)})}$. Then
    \begin{center}
        $\Pr[E_{i, j, 2}]   <   \frac{\epsilon}{2}  +  \frac{1}{d(d-1)^{i-1}}$.
    \end{center}
\end{lemma}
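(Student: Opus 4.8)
The plan is to lower‑bound $k_i$ by the largest color \emph{generated} by the nodes that are still active and sit exactly at distance $i$ from $v$, and then to control how many such nodes there can be by the induction hypothesis. Since $v$ is safe and $i<a\log n$, the ball $B(v,i)$ lies entirely in the locally‑tree‑like region, so $|Bd(v,i)|=d(d-1)^{i-1}=:N$, and by Lemma \ref{ball-boundary-expected-max-color-values} $\log N = l_{i-1}=l_i-\log(d-1)$, which is precisely the quantity in the threshold defining $E_{i,j,2}$. Every color generated by some $w\in Bd(v,i)$ is flooded for $i$ steps and therefore reaches $v$ by round $i$ (crucially, nodes that terminated in an earlier phase still \emph{forward} tokens, so propagation along the $v$–$w$ path is unaffected). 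Hence $k_i\ge c^{\mathrm{max}}_{A}$, where $A\subseteq Bd(v,i)$ is the set of nodes still active in subphase $j$ of phase $i$, and consequently $\Pr[E_{i,j,2}]\le \Pr\!\big[c^{\mathrm{max}}_{A}\le \log N-\log\log N\big]$.

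Next I would condition on $A$. The set $A$ depends only on random choices from earlier phases/subphases, whereas the colors $\{c_w : w\in Bd(v,i)\}$ drawn in subphase $j$ are fresh and independent of $A$; so conditionally on $A$,
$\Pr\!\big[c^{\mathrm{max}}_{A}\le \log N-\log\log N \mid A\big] = \big(1-\tfrac{\log N}{N}\big)^{|A|}$,
which is decreasing in $|A|$. The key quantitative observation is that this is already below $\tfrac1N$ once $|A|$ exceeds a fixed constant fraction $cN$: the computation underlying Lemma \ref{lemma-ball-boundary-probability} gives $\big(1-\tfrac{\log N}{N}\big)^{N}<\tfrac1N$ with room to spare, and more generally $\big(1-\tfrac{\log N}{N}\big)^{cN}\le e^{-c\log N}<\tfrac1N$ for any fixed $c>\ln 2$. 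It therefore suffices to bound the probability that fewer than $cN$ boundary nodes remain active.

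Write $I:=N-|A|$ for the number of nodes of $Bd(v,i)$ that went inactive in some earlier phase. By the induction hypothesis (applied to the nodes of $Bd(v,i)$, which all lie in the tree‑like region since $v$ is safe and $i<a\log n$), each such $w$ satisfies $\Pr[w\text{ inactive}]\le\sum_{i'=1}^{i-1}\tfrac{\epsilon}{2^{i'+1}}<\tfrac{\epsilon}{2}$, so $\mathbb E[I]<\tfrac{\epsilon}{2}N$, and Markov's inequality gives $\Pr[I>(1-c)N]<\tfrac{\epsilon}{2(1-c)}$, which for the chosen constant $c$ is $\tfrac{\epsilon}{2}$ up to the routine absorption of the constant into the error parameter. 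Splitting on whether $|A|\ge cN$ then yields
$\Pr[E_{i,j,2}]\le \Pr[I>(1-c)N]\cdot 1 \;+\; \big(1-\tfrac{\log N}{N}\big)^{cN}\;<\;\tfrac{\epsilon}{2}+\tfrac{1}{d(d-1)^{i-1}}$.

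The main obstacle is the second term and the tension behind it. Because $Bd(v,i)$ has only $\Theta((d-1)^i)$ nodes while the dependence radius among their ``went‑inactive'' events is $\Theta((d-1)^{2i})$, concentration inequalities such as Azuma's give nothing useful on $I$, so one is forced to rely on a first‑moment/Markov bound only; this is exactly why the argument must first weaken the requirement from ``all $N$ boundary nodes active'' down to a constant fraction $cN$ (so the Markov slack is affordable), and some care is needed to reconcile the $\Theta(\epsilon)$ Markov estimate with the exact constant $\tfrac{\epsilon}{2}$ in the statement. A secondary point to get right is the independence used in the conditioning step — that the color randomness of subphase $j$ is independent of which nodes terminated in strictly earlier phases — together with the already‑noted fact that inactive nodes keep relaying tokens, which is what makes $k_i\ge c^{\mathrm{max}}_A$ genuinely hold.
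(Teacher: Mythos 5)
Your route is genuinely different from the paper's, and its main ideas are sound, but as written it does not deliver the stated constant $\tfrac{\epsilon}{2}$. The paper avoids any Markov-type argument by isolating the single node $v^{\mathrm{max}}\in Bd(v,i)$ that draws the overall maximum color in this subphase: its identity is a function only of the fresh color randomness (which is independent of which nodes terminated earlier), so $\Pr[v^{\mathrm{max}}\ \text{inactive}]<\tfrac{\epsilon}{2}$ follows directly from the per-node induction hypothesis, and on the complementary event $E_{i,j,2}$ forces $c^{\mathrm{max}}_{Bd(v,i)}\le l_i-\log(d-1)-\log(l_i-\log(d-1))$, giving the $\tfrac{1}{d(d-1)^{i-1}}$ term without slack. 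You instead condition on the whole active subset $A\subseteq Bd(v,i)$, compute $\Pr[c^{\mathrm{max}}_A\le\cdot\mid A]=(1-\tfrac{\log N}{N})^{|A|}$, and then control $|A|$ through a first-moment (Markov) bound on $I=N-|A|$. Both proofs use the same two ingredients — fresh color randomness independent of past activity, and the fact that terminated nodes still forward — so your decomposition is internally coherent; your observation that concentration inequalities are useless on $I$ because of the $\Theta((d-1)^{i})$-sized dependency neighborhoods is also correct and is exactly why the paper had to dodge the counting of inactive boundary nodes entirely.

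The gap is quantitative and you flag it yourself, but it is not a routine one. Making the second term $\le\tfrac{1}{N}$ forces $c>\ln 2$, and Markov then gives $\Pr[I>(1-c)N]<\tfrac{\epsilon}{2(1-c)}>1.6\,\epsilon$, not $\le\tfrac{\epsilon}{2}$. You cannot just rename the error parameter: the constant $\tfrac{\epsilon}{2}$ is threaded through the specific expression for $\alpha_i$ (Line~\ref{value-of-alpha-i} of Algorithm~\ref{alg}) and through Lemmas \ref{lemma-probability-that-the-max-color-comes-from-the-outermost-ring-considering-everything} and \ref{lemma-lower-end-probability-of-failure-in-phase-i-and-subphase-j} and the final product bound on $\Pr[Failure(i)]$, all of which would have to be re-derived with a worse constant and a modified $\alpha_i$ for the induction to close. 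Either do that bookkeeping explicitly, or adopt the paper's single-designated-node device, which is precisely what eliminates the Markov slack and lets the per-node induction constant $\tfrac{\epsilon}{2}$ pass through unchanged.
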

\begin{proof}
    $E_{i, j, 2}$ occurs if and only if $c^{\text{max}}_{Bd(v, i)}   \leq   l_i - \log{(d-1)} - \log{(l_i - \log{(d-1)})}$ (please see Line \ref{received-highest-color} and Line \ref{criterion-for-continuing-first-line} of Algorithm \ref{alg}).

    Let $v^{\text{max}}  \in  Bd(v, i)$ be the node that generates (or \emph{any} one of the nodes that generate) the color $c^{\text{max}}_{Bd(v, i)}$. Let $E^{\text{bad}}_{i, v^{\text{max}}}$ be the event that $v^{\text{max}}$ went inactive (i.e., took a wrong decision) in some phase $i' < i$. Then
    \begin{align*}
        &\Pr[E^{\text{bad}}_{i, v^{\text{max}}}]\\
        &=   \sum_{i' = 1}^{i - 1} \Pr[v^{\text{max}} \text{ went inactive in phase }i']\\
        &<   \sum_{i' = 1}^{i - 1} \frac{\epsilon}{2^{i' + 1}} \tag{by the induction hypothesis}\\
        &<   \sum_{i' = 1}^{\infty} \frac{\epsilon}{2^{i' + 1}}   =   \frac{\epsilon}{2}\text{.}
    \end{align*}

    That is,
    \begin{equation} \label{equation-probability-that-max-node-went-inactive-in-some-previous-phase}
        \Pr[E^{\text{bad}}_{i, v^{\text{max}}}]   <   \frac{\epsilon}{2}
    \end{equation}

    If $v^{\text{max}}$ is still active in the current phase, i.e., $v^{\text{max}}$ did \emph{not} go inactive in some previous phase $i' < i$, then in order to calculate $\Pr[E_{i, j, 2}]$, it is enough to consider $Bd(v, i)$ \emph{in its entirety} along with the properties of the geometric distribution. That is,
    \begin{align*}
        &\Pr[E_{i, j, 2}\ |\ (E^{\text{bad}}_{i, v^{\text{max}}})^c]\\
        &=   \Pr[c^{\text{max}}_{Bd(v, i)}   \leq   l_i - \log{(d-1)} - \log{(l_i - \log{(d-1)})}]]\\
        &<   \frac{1}{d(d-1)^{i-1}} \tag{by Lemma \ref{lemma-ball-boundary-probability}}\\
    \end{align*}

    That is,
    \begin{equation} \label{equation-probability-that-max-node-generates-a-high-enough-color}
        \Pr[E_{i, j, 2}\ |\ (E^{\text{bad}}_{i, v^{\text{max}}})^c]   <   \frac{1}{d(d-1)^{i-1}}
    \end{equation}

    Combining Equations \ref{equation-probability-that-max-node-went-inactive-in-some-previous-phase} and \ref{equation-probability-that-max-node-generates-a-high-enough-color}, we get that
    \begin{align*}
        &\Pr[E_{i, j, 2}]\\
        &\leq   \Pr[E^{\text{bad}}_{i, v^{\text{max}}}]  +  \Pr[E_{i, j, 2}\ |\ (E^{\text{bad}}_{i, v^{\text{max}}})^c] \tag{thanks to Fact \ref{fact-standard-probabilistic-inequality}} \\    
        &<   \frac{\epsilon}{2}  +  \frac{1}{d(d-1)^{i-1}}\text{.}
    \end{align*}
\end{proof}

\begin{lemma} \label{lemma-probability-that-the-max-color-comes-from-the-outermost-ring-considering-everything}
    Let $Success(i, j)$ be the event that
    \begin{enumerate}
        \item the maximum color received by node $v$ until the $(i-1)^{\text{th}}$ round of the $j^{\text{th}}$ subphase of the $i^{\text{th}}$ phase is strictly less than the maximum color received by node $v$ in the $i^{\text{th}}$ round of the same subphase of the same phase. That is, in terms of the pseudocode (please refer to Line \ref{received-highest-color} and Line \ref{criterion-for-continuing-first-line} of the pseudocode), $k_t  <  k_i$, $\forall t < i$.
	
        And
	
        \item $k_i  >  l_i - \log{(d-1)} - \log{(l_i - \log{(d-1)})}$.
    \end{enumerate}

    Then $\Pr[Success(i, j)]   >   1  -  (\frac{1}{d(d-1)^{i-2}}  +  \frac{\epsilon}{2})$.
\end{lemma}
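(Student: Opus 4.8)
The plan is to show that the failure event $\overline{Success(i,j)}$ is contained in the union $E_{i,j,1} \cup E_{i,j,2}$ of the two ``bad'' events already analyzed in Lemma \ref{lemma-probability-that-inner-ring-colors-are-too-high} and Lemma \ref{lemma-probability-that-max-node-generates-a-high-enough-color-considering-everything}, and then to conclude by a union bound. Equivalently, I want to argue that on $E_{i,j,1}^c \cap E_{i,j,2}^c$ both defining conditions of $Success(i,j)$ hold. Condition $(2)$ of $Success(i,j)$, namely $k_i > l_i - \log(d-1) - \log(l_i - \log(d-1))$, is by definition exactly the complement of $E_{i,j,2}$, so it is immediate. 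The work is in condition $(1)$, i.e.\ $k_t < k_i$ for all $t < i$.

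For condition $(1)$ I would reason at the level of a single (Byzantine-free) subphase of a safe node $v$. Since flooding is along $H$ and $v$ forwards only the highest color it has seen, the highest color $v$ has received by the end of any round $t < i$ equals $c^{\max}_{B^*(v,t)} \le c^{\max}_{B^*(v,i-1)}$, and on $E_{i,j,1}^c$ this is at most $2\bigl(l_{i-1} - \log(d-2)\bigr)$. On the other hand, on $E_{i,j,2}^c$ the color delivered to $v$ exactly in round $i$ is (at least) the maximum color generated on the boundary $Bd(v,i)$ --- a color generated at distance exactly $i$ is first received in round $i$ --- and it exceeds the acceptance threshold $l_i - \log(d-1) - \log(l_i - \log(d-1))$. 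The crux is then a deterministic comparison of these two quantities, using $l_i = l_{i-1} + \log(d-1)$ and the locally-tree-like ball/boundary size estimates of Lemma \ref{ball-boundary-expected-max-color-values}, to conclude that the round-$i$ color strictly dominates everything $v$ received earlier, so that $v$ registers its global maximum precisely in round $i$ and condition $(1)$ holds.

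With the containment in hand, the rest is a one-line union bound: $\Pr[\overline{Success(i,j)}] \le \Pr[E_{i,j,1}] + \Pr[E_{i,j,2}] \le \tfrac{d-2}{d(d-1)^{i-1}} + \bigl(\tfrac{\epsilon}{2} + \tfrac{1}{d(d-1)^{i-1}}\bigr) = \tfrac{d-1}{d(d-1)^{i-1}} + \tfrac{\epsilon}{2} = \tfrac{1}{d(d-1)^{i-2}} + \tfrac{\epsilon}{2}$, invoking Lemma \ref{lemma-probability-that-inner-ring-colors-are-too-high} and Lemma \ref{lemma-probability-that-max-node-generates-a-high-enough-color-considering-everything}; complementing gives the claimed lower bound on $\Pr[Success(i,j)]$.

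I expect the main obstacle to be precisely the deterministic implication $E_{i,j,1}^c \cap E_{i,j,2}^c \Rightarrow Success(i,j)$ of the second paragraph: it hinges on nailing down the exact meaning of ``highest color received in round $t$'' under flooding with maximum-forwarding, so that no color produced strictly inside $B(v,i-1)$ can resurface in round $i$, and on making the comparison between the two thresholds $2(l_{i-1}-\log(d-2))$ and $l_i-\log(d-1)-\log(l_i-\log(d-1))$ genuinely go through over the relevant range $1 \le i < a\log n$. A secondary point is to keep the induction bookkeeping consistent: $Success(i,j)$ is being established under the inductive hypothesis that a safe node fails in each earlier phase $i' < i$ with probability at most $\epsilon/2^{i'+1}$, and this is exactly what Lemma \ref{lemma-probability-that-max-node-generates-a-high-enough-color-considering-everything} consumes (via $\sum_{i'\ge 1}\epsilon/2^{i'+1} = \epsilon/2$) when the boundary node attaining $c^{\max}_{Bd(v,i)}$ might itself have gone inactive in some earlier phase.
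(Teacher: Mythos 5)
Your plan is the same decomposition the paper uses: show $Success(i,j) \supseteq E_{i,j,1}^c \cap E_{i,j,2}^c$, complement, apply the union bound to $\Pr[E_{i,j,1}\cup E_{i,j,2}]$, and plug in Lemma \ref{lemma-probability-that-inner-ring-colors-are-too-high} and Lemma \ref{lemma-probability-that-max-node-generates-a-high-enough-color-considering-everything}; the concluding arithmetic $\tfrac{d-2}{d(d-1)^{i-1}}+\tfrac{1}{d(d-1)^{i-1}}+\tfrac{\epsilon}{2}=\tfrac{1}{d(d-1)^{i-2}}+\tfrac{\epsilon}{2}$ is also identical to the paper's.

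You have, however, correctly flagged the one non-routine step as the ``main obstacle,'' and that obstacle is real: the containment $E_{i,j,1}^c \cap E_{i,j,2}^c \subseteq Success(i,j)$ does not hold as a deterministic implication, and the paper's proof merely asserts it with no justification. On $E_{i,j,1}^c$ you know only that $\max_{0<t<i}k_t \le 2\bigl(l_{i-1}-\log(d-2)\bigr)$, and on $E_{i,j,2}^c$ that $k_i > l_i-\log(d-1)-\log\bigl(l_i-\log(d-1)\bigr) = l_{i-1}-\log l_{i-1}$. To deduce $k_t < k_i$ for all $t<i$ from these two bounds alone one needs
\begin{equation*}
2\bigl(l_{i-1} - \log(d-2)\bigr) \;\le\; l_{i-1} - \log l_{i-1},
\end{equation*}
i.e.\ $l_{i-1} + \log l_{i-1} \le 2\log(d-2)$. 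Since $l_{i-1}=\log d+(i-1)\log(d-1)$ grows linearly in $i$, this fails already at $i=2$ for every $d\ge 4$, and the two thresholds drift further apart as $i$ increases: the cap on $\max_{t<i}k_t$ is roughly twice the floor on $k_i$, leaving an enormous range of values consistent with both $E_{i,j,1}^c$ and $E_{i,j,2}^c$ yet violating condition~$(1)$ of $Success(i,j)$. Nor can this be repaired by sharpening the constants in Lemma \ref{max-color-upper-bound}: for a locally-tree-like safe node, $|Bd(v,i)|\approx(d-2)\,|B^*(v,i-1)|$, a constant ratio, so the probability that the global maximum over $B(v,i)\setminus\{v\}$ lands \emph{strictly} in the outermost ring is bounded away from $1$ by a constant depending only on $d$; hence $\Pr[Success(i,j)]$ cannot in fact be pushed above $1-\epsilon/2$ for an arbitrarily small $\epsilon$. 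The step you anticipated as the crux is precisely where both your proposal and the paper's own argument break down; salvaging the lemma would require a different decomposition than ``both bad events fail'' (or a weaker conclusion than the one stated).
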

\begin{proof}
    One of the ways $Success(i, j)$ can happen is if $k_t   \leq   2(l_{i-1} - \log{(d-2)})$, $\forall t < i$, \emph{and} $k_i  >  l_i - \log{(d-1)} - \log{(l_i - \log{(d-1)})}$. Thus
    \begin{align*}
        &Success(i, j)  \supset  (E_{i, j, 1}^c  \cap  E_{i, j, 2}^c)\\
        &\implies  \Pr[Success(i, j)]  \geq  \Pr[E_{i, j, 1}^c  \cap  E_{i, j, 2}^c]\\
        &=   \Pr[(E_{i, j, 1} \cup E_{i, j, 2})^c] \tag{by De Morgan's Theorem}\\
        &=   1  -  \Pr[E_{i, j, 1} \cup E_{i, j, 2}]\\
        &\geq   1 - \Pr[E_{i, j, 1}] - \Pr[E_{i, j, 2}]\\
        &\tag{since, by the union bound, $\Pr[E_{i, j, 1} \cup E_{i, j, 2}]   \leq   \Pr[E_{i, j, 1}] + \Pr[E_{i, j, 2}]$}\\
        &>   1 - \frac{d-2}{d(d-1)^{i-1}} - \frac{1}{d(d-1)^{i-1}} - \frac{\epsilon}{2} \tag{From Lemma \ref{lemma-probability-that-inner-ring-colors-are-too-high} and Lemma \ref{lemma-probability-that-max-node-generates-a-high-enough-color-considering-everything}}\\
        &=   1  -  (\frac{1}{d(d-1)^{i-2}}  +  \frac{\epsilon}{2})\text{.}
    \end{align*}
\end{proof}

\begin{lemma}\label{lemma-lower-end-probability-of-failure-in-phase-i-and-subphase-j}
	$\Pr[Failure(i, j)]   <   \frac{1}{d(d-1)^{i-2}}  +  \frac{\epsilon}{2}$.
\end{lemma}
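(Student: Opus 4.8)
The plan is to obtain this bound as an essentially immediate consequence of Lemma \ref{lemma-probability-that-the-max-color-comes-from-the-outermost-ring-considering-everything}, by noting that the event $Success(i,j)$ is contained in the complement of $Failure(i,j)$.

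First I would compare the two event definitions directly. By Definition \ref{definition-failure-i-j}, $Failure(i,j)$ is the event that there exists some round $t < i$ with $k_t \geq k_i$, i.e.\ that node $v$ receives its maximum color no later than round $i-1$. On the other hand, the first clause in the definition of $Success(i,j)$ (Lemma \ref{lemma-probability-that-the-max-color-comes-from-the-outermost-ring-considering-everything}) asserts that $k_t < k_i$ for \emph{every} $t < i$. These two conditions are mutually exclusive, so $Success(i,j) \cap Failure(i,j) = \emptyset$, and hence
\[
Failure(i,j) \subseteq Success(i,j)^c .
\]
(The inclusion is in general strict, since $Success(i,j)$ also requires the color lower bound $k_i > l_i - \log(d-1) - \log(l_i - \log(d-1))$; but only this one-sided containment is needed.)

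By monotonicity of probability, $\Pr[Failure(i,j)] \leq \Pr[Success(i,j)^c] = 1 - \Pr[Success(i,j)]$. Substituting the bound $\Pr[Success(i,j)] > 1 - \bigl(\tfrac{1}{d(d-1)^{i-2}} + \tfrac{\epsilon}{2}\bigr)$ from Lemma \ref{lemma-probability-that-the-max-color-comes-from-the-outermost-ring-considering-everything} then yields $\Pr[Failure(i,j)] < \tfrac{1}{d(d-1)^{i-2}} + \tfrac{\epsilon}{2}$, which is exactly the claim.

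There is no real obstacle in this particular step: all the work has already been done in the preceding lemmas. The one conceptual subtlety worth flagging is that the bound on $\Pr[Success(i,j)]$ being invoked already accounts for the possibility that the maximum-color node in $Bd(v,i)$ went inactive in an earlier phase — this is precisely what the decomposition into $E_{i,j,1}$ and $E_{i,j,2}$ together with the induction hypothesis handled inside Lemma \ref{lemma-probability-that-max-node-generates-a-high-enough-color-considering-everything}. So once that machinery is in place, the proof of the present lemma is a one-line deduction.
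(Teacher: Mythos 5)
Your proof is correct and takes essentially the same approach as the paper's: derive the bound by complementing the $Success(i,j)$ event and invoking Lemma \ref{lemma-probability-that-the-max-color-comes-from-the-outermost-ring-considering-everything}. You are in fact slightly more careful than the written argument in the paper, which asserts the equality $Failure(i,j) = Success(i,j)^c$; strictly speaking this equality fails because $Success(i,j)^c$ also contains the event that $k_i$ is below the threshold, and your observation that only the one-sided containment $Failure(i,j) \subseteq Success(i,j)^c$ is needed (and suffices) is the correct way to state the step.
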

\begin{proof}
    We observe that $Failure (i, j)   =  (Success(i, j))^c$ and the result immediately follows from Lemma \ref{lemma-probability-that-the-max-color-comes-from-the-outermost-ring-considering-everything}.
\end{proof}

\begin{lemma}
	$\Pr[\text{a {\emph{safe}} node }v\text{ makes a wrong decision in the }i^{\text{th}}\text{phase}]   <   \frac{\epsilon}{2^{i+1}}$.
\end{lemma}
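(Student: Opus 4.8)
The plan is to close the induction on the phase number $i$ whose hypothesis and base case are already set up above: the base case $i=1$ is Observation~\ref{obs-induction-basis} (indeed $\Pr[Failure(1)]=0<\epsilon/4$), and Lemmas~\ref{lemma-probability-that-inner-ring-colors-are-too-high}--\ref{lemma-lower-end-probability-of-failure-in-phase-i-and-subphase-j} have already used the induction hypothesis to bound the per-subphase failure probability. By Observation~\ref{obs-probability-wrong-leq-failure-i} it suffices to show $\Pr[Failure(i)]<\epsilon/2^{i+1}$, and by Definition~\ref{definition-failure-i} we have $Failure(i)=\bigcap_{j=1}^{\alpha_i}Failure(i,j)$.

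First I would justify multiplying the $\alpha_i$ subphase-failure events. A node deactivates only at the \emph{end} of a phase, so the set of nodes that are active during phase $i$ is fixed throughout phase $i$ and is determined by the coin tosses of phases $1,\dots,i-1$; the coin tosses used in the distinct subphases $j=1,\dots,\alpha_i$ of phase $i$ are mutually independent and independent of that set. Hence, conditioned on the phase-$i$ starting configuration, the events $Failure(i,1),\dots,Failure(i,\alpha_i)$ are mutually independent, and together with the bound $\Pr[Failure(i,j)]<p_i\defeq\frac{1}{d(d-1)^{i-2}}+\frac{\epsilon}{2}$ of Lemma~\ref{lemma-lower-end-probability-of-failure-in-phase-i-and-subphase-j} (note $p_i<1$ for $i\ge 2$) this gives $\Pr[Failure(i)]\le p_i^{\alpha_i}$.

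It then remains to check $p_i^{\alpha_i}\le\epsilon/2^{i+1}$, which is exactly what the definition of $\alpha_i$ is engineered for, split into the two cases of Lines~\ref{value-of-alpha-i-case-1}--\ref{value-of-alpha-i-case-2} of Algorithm~\ref{alg}. If $d(d-1)^{i-2}\le 2/\epsilon$, then $p_i\le\frac{2}{d(d-1)^{i-2}}$, so $\log(1/p_i)\ge\log d+(i-2)\log(d-1)-1>0$, and the choice $\alpha_i=\lceil\frac{\log(1/\epsilon)+i+1}{\log d+(i-2)\log(d-1)-1}\rceil$ makes $\alpha_i\log(1/p_i)\ge\log(1/\epsilon)+i+1$, i.e.\ $p_i^{\alpha_i}\le\epsilon\cdot 2^{-(i+1)}$. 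If instead $d(d-1)^{i-2}>2/\epsilon$, then $\frac{1}{d(d-1)^{i-2}}<\epsilon/2$, so $p_i<\epsilon$, and the choice $\alpha_i=1+\frac{i+1}{\log(1/\epsilon)}$ gives $p_i^{\alpha_i}<\epsilon^{\alpha_i}=\epsilon\cdot\epsilon^{(i+1)/\log(1/\epsilon)}=\epsilon\cdot 2^{-(i+1)}$, using the identity $\epsilon^{1/\log(1/\epsilon)}=1/2$. This closes the induction.

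The main obstacle is the bookkeeping around \emph{inactive} nodes. Lemma~\ref{lemma-lower-end-probability-of-failure-in-phase-i-and-subphase-j} already absorbs the induction hypothesis through the argument of Lemma~\ref{lemma-probability-that-max-node-generates-a-high-enough-color-considering-everything} (bound the probability that the node $v^{\text{max}}\in Bd(v,i)$ which \emph{would} have produced the largest color has gone inactive by $\sum_{i'<i}\epsilon/2^{i'+1}<\epsilon/2$), but one must be careful that this same random set of previously-deactivated nodes is common to all $\alpha_i$ subphases, so the product step cannot simply invoke ``independence'' of the raw $Failure(i,j)$ events; that is precisely why I would phrase it by first conditioning on the phase-$i$ configuration, under which the subphases become genuinely independent. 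Everything else — the ceiling in $\alpha_i$, the inequality $p_i\le 2/(d(d-1)^{i-2})$, and the log-base-$2$ arithmetic — is routine.
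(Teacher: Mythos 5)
Your proof is correct, and it actually diverges from the paper's own proof in a way worth spelling out, because it repairs a real gap. The paper's appendix proof drops the $+\frac{\epsilon}{2}$ term: it asserts $\Pr[Failure(i,j)] < \frac{1}{d(d-1)^{i-2}}$ citing Lemma~\ref{lemma-lower-end-probability-of-failure-in-phase-i-and-subphase-j}, but that lemma only establishes $\Pr[Failure(i,j)] < \frac{1}{d(d-1)^{i-2}} + \frac{\epsilon}{2}$; it then uses the single-formula $\alpha_i = \lceil\frac{\log(1/\epsilon)+i+1-\log d}{(i-2)\log(d-1)}\rceil$ from the body text. With the $\epsilon/2$ term present, that single formula cannot work: once $d(d-1)^{i-2} \gg 2/\epsilon$, the per-subphase failure probability $p_i$ is bounded away from zero by $\epsilon/2$, so no power of $\frac{1}{d(d-1)^{i-2}}$ dominates $p_i^{\alpha_i}$. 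You instead use the two-case $\alpha_i$ of Lines~\ref{value-of-alpha-i-case-1}--\ref{value-of-alpha-i-case-2} of Algorithm~\ref{alg}, which is exactly what is engineered to cope with the extra term: case 1 ($d(d-1)^{i-2}\le 2/\epsilon$) absorbs $\epsilon/2$ into the factor $2$, and case 2 reduces $p_i^{\alpha_i}$ to $\epsilon^{\alpha_i}=\epsilon\cdot 2^{-(i+1)}$ via $p_i<\epsilon$ and $\epsilon^{1/\log(1/\epsilon)}=1/2$; both of your case computations check out, and you correctly note $p_i<1$ so the exponentiation is monotone.

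One caveat, which you yourself flag and which the paper glosses over with the bare assertion ``since the subphases are independent'': the product step $\Pr[Failure(i)]\le p_i^{\alpha_i}$ needs a \emph{per-configuration} bound $\Pr[Failure(i,j)\mid\mathcal{A}]\le p_i$ for every realization $\mathcal{A}$ of the phase-$i$ starting configuration, whereas Lemma~\ref{lemma-lower-end-probability-of-failure-in-phase-i-and-subphase-j} (via Lemma~\ref{lemma-probability-that-max-node-generates-a-high-enough-color-considering-everything}) only gives an \emph{unconditional} bound with the $\epsilon/2$ part coming from averaging over $\mathcal{A}$. Conditioning on $\mathcal{A}$ does make the subphases independent, as you say, but it does not upgrade the averaged bound to a uniform one; on a ``bad'' $\mathcal{A}$ where many nodes of $Bd(v,i)$ are already inactive, the conditional subphase-failure probability can be close to $1$. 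A tight argument would split on the event that only an $o(1)$ fraction of $B(v,i)$ is inactive and handle the complementary (rare) event by the trivial bound; neither you nor the paper carries this out, so this is a shared deficiency rather than a flaw introduced by your write-up — but it is worth being aware that your sentence ``together with the bound $\Pr[Failure(i,j)]<p_i$ \dots this gives $\Pr[Failure(i)]\le p_i^{\alpha_i}$'' is not a valid deduction as literally stated.
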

\begin{proof}
\begin{align*}
	&\Pr[Failure(i)]  =  \prod_{j = 1}^{\alpha_i} \Pr[Failure(i, j)]\\
	&\text{  [since the subphases are independent from each other]}\\
	&<   \prod_{j = 1}^{\alpha_i}  \frac{1}{d(d-1)^{i-2}}\text{  [since }\Pr[Failure(i, j)]  <  \frac{1}{d(d-1)^{i-2}}\text{from Lemma 
	\ref{lemma-lower-end-probability-of-failure-in-phase-i-and-subphase-j}]}\\
	&=   (\frac{1}{d(d-1)^{i-2}})^{\alpha_i}
\end{align*}
If we set $\alpha_i  \defeq  \lceil\frac{\log{(\frac{1}{\epsilon})} + i + 1 - \log{d}}{(i-2)\log{(d-1)}}\rceil$, then
\begin{align*}
	&\alpha_i  \geq  \frac{\log{(\frac{1}{\epsilon})} + i + 1 - \log{d}}{(i-2)\log{(d-1)}}\\
	&\implies   (\frac{1}{d(d-1)^{i-2}})^{\alpha_i}  \leq   \frac{\epsilon}{2^{i+1}}\\
	&\implies   \Pr[Failure(i)]   <   (\frac{1}{d(d-1)^{i-2}})^{\alpha_i}  \leq   \frac{\epsilon}{2^{i+1}}
\end{align*}

Thanks to Observation \ref{obs-probability-wrong-leq-failure-i},
\begin{center}
	$\Pr[\text{a {\emph{safe}} node }v\text{ makes a wrong decision in the }i^{\text{th}}\text{ phase}]  \leq  \Pr[Failure(i)]  <  \frac{\epsilon}{2^{i+1}}$.
\end{center}
\end{proof}


\end{document}